\newcommand{\Scott}[3]{\varphi^{#1,#2}_{#3}}
\newcommand{\existc}[2]{\exists #1\;#2.\;}
\newcommand{\existx}[2]{\exists!#1\;#2.\;}
\newcommand{\acc}[2]{\alpha_{#1}(#2)}
\newcommand{\Acc}[2]{\mathsf{Acc}_{#1}(#2)}
\newtheorem{question}[theorem]{Question}
\title{Comonadic semantics for hybrid logic and bounded fragments}
\titlerunning{Comonadic semantics for hybrid logic and bounded fragments}
\author{Samson Abramsky}{UCL}{s.abramsky@ucl.ac.uk}{https://orcid.org/0000-0003-3921-6637}{}
\author{Dan Marsden}{University of Oxford}{daniel.marsden@cs.ox.ac.uk}{https://orcid.org/0000-0003-0579-0323}{}
\authorrunning{S. Abramsky and D. Marsden}
\keywords{comonads, model comparison games, hybrid logic, bounded fragment}
\begin{document}

\maketitle

\begin{abstract}
    In recent work, comonads and associated structures have been used to analyse a range of important notions in finite model theory, descriptive complexity and combinatorics. We extend this analysis to Hybrid logic, a widely-studied extension of basic modal logic, which corresponds to the bounded fragment of first-order logic. In addition to characterising the various resource-indexed equivalences induced by Hybrid logic and the bounded fragment, and the associated combinatorial decompositions of structures, we also give model-theoretic characterisations of bounded formulas in terms of invariance under generated substructures, in both the finite and infinite cases.
\end{abstract}

\section{Introduction}

Our starting point is recent work which exposes comonadic structures in finite model theory, descriptive complexity and combinatorics \cite{abramsky2017pebbling,abramsky2021relating}.
This work establishes a common template:
\begin{itemize}
\item A family of comonads $\{ \Ck \}$ on the category of relational structures, indexed by a resource parameter $k$.\footnote{In this paper, $k$ will always be a positive integer.}
\item These comonads induce resource-indexed equivalences on relational structures, which capture the equivalences induced by certain logical fragments.
Examples include the Ehrenfeucht-\Fraisse comonads $\Ek$, which capture the quantifier-rank fragments; the pebbling comonads $\Pk$, which capture the finite variable fragments; and the modal comonads $\Mk$, which capture the modal fragments of bounded modal depth.
\item The coalgebras $\alpha : \As \to \Ck \As$ for these comonads correspond to certain resource-bounded decompositions of a relational structure $\As$. The least $k$ for which such a coalgebra exists for $\As$ determines a significant combinatorial invariant of $\As$. In the case of $\Ek$, the invariant is the \emph{tree-depth} of the structure, while for $\Pk$, it is the \emph{tree-width} \cite{abramsky2021relating,abramsky2017pebbling}.
\end{itemize}
This entire framework is axiomatised in a very general setting in \cite{abramsky2021arboreal}.

In the present paper, we extend the program of comonadic semantics  to hybrid logic, and the corresponding bounded fragments of first-order logic. 
Hybrid logic (see e.g. \cite{blackburn2000representation,areces2001hybrid}) has been widely studied as an expressive extension of basic modal logic. It is semantically natural, e.g. in the analysis of temporal reasoning \cite{blackburn2016reichenbach}, and since it allows an internalisation of relational semantics, it has a very well-behaved proof theory \cite{brauner2010hybrid}, without needing to resort to explicit labelling of proofs or tableaux.
The corresponding fragment of first-order logic under modal translation is the \emph{bounded fragment}, in which quantification is relativized to atomic formulas from the relational vocabulary. This fragment is important in  set theory \cite{levy1965hierarchy}, and has been studied in general proof- and model-theoretic terms in \cite{feferman1966persistent,feferman1968persistent}.

The comonad which captures these logics is a natural restriction of a pointed version of the Ehrenfeucht-\Fraisse comonad previously introduced in \cite{abramsky2021relating}.
This comonadic analysis nicely reveals, in a clear and conceptual way, the way in which hybrid logic sits between basic modal logic and first-order logic.
We characterise the coalgebras for this comonad as tree covers of a relational structure with additional locality constraints.  This enables a uniform treatment of logical equivalences, bisimulation games, and combinatorial parameters, within the axiomatic framework recently given in \cite{abramsky2021arboreal}.
We also prove a model-theoretic characterization of invariance under formulas of these logics in both the finite and infinite cases, using an adaptation of Otto’s proof of the van Benthem-Rosen theorem for modal logic \cite{otto2004elementary}. As far as we know, the finite model theory version of this characterisation has not appeared previously.
We also extend this result to Hybrid temporal logic, which includes backwards modalities.

We shall treat the case of hybrid logic, which essentially amounts to restricting the relational vocabulary to unary predicates and a single binary relation, first and in some detail, since this is notationally simpler, and has been a focus of recent work. The 
extension to the general bounded fragment
is then outlined more briefly.

\section{Preliminaries}

We shall need a few notions on posets. Given $x,y \in P$ for a poset $(P, {\leq})$, we write $x \comp y$ if $x$ and $y$ are comparable in the order, \ie $x \leq y$ or $y \leq x$. We will use finite sequences extensively; these are partially ordered by prefix, with notation $s \preford t$.

A relational vocabulary $\sg$ is a set of relation symbols $R$, each with a specified positive integer arity.
A $\sg$-structure $\As$ is given by a set $A$, the universe of the structure, and for each $R$ in $\sg$ with arity $n$, a relation $\RA \subseteq A^n$. A homomorphism $h : \As \rarr \Bs$ is a function $h : A \rarr B$ such that, for each relation symbol $R$ of arity $n$ in $\sg$, for all $a_1, \ldots , a_n$ in $A$:
$\RA(a_1,\ldots , a_n) \IMP \RB(h(a_1), \ldots , h(a_n))$. We write $\CS$ for the category of $\sg$-structures and homomorphisms.

Since evaluation in modal logics is relative to a given world, we shall also use the pointed category $\CSp$. Objects are pairs $(\As, a)$, where $\As$ is a $\sg$-structure, and $a \in A$. Morphisms $h : (\As, a) \to (\Bs, b)$ are homomorphisms $h : \As \to \Bs$ such that $h(a) = b$.

A \emph{modal vocabulary} has only relation symbols of arity $\leq 2$: a set of unary predicate symbols $P$, which will correspond to modal propositional atoms; and a family $\{ E_i \}$ of binary relations, which we think of as \emph{transition relations}.\footnote{More traditionally referred to as \emph{accessibility relations}.}
We will say that ``$a$ sees $b$'' if $\EAi(a,b)$ for some $i$. The \emph{unimodal} case is where there is a single transition relation $E$.

\subsection{Hybrid logic}
\label{hlsec}

Hybrid logic formulas are built from propositional atoms $p$ and \emph{world variables} $x$, with the following syntax:
\[ \vphi \;\; ::= \;\; p \mid x \mid \neg \vphi \mid \vphi \wedge \vphi' \mid  \vphi \vee \vphi'  \mid \Box \vphi \mid \Diamond \vphi \mid \da x. \, \vphi \mid @_{x} \vphi . \]
We use a redundant syntax to make it more convenient to discuss fragments.
The new features compared with standard basic modal logic are the world variables, which can be bound with $\da$, and used to force evaluation at a given world with $@$.
Hybrid formulae are graded by their \emph{hybrid modal depth}. This is the usual notion of modal depth, with the adjustment that sub-formulae of the form~$\Diamond x$
, for some world variable $x$, are deemed to have zero depth.

The semantics of hybrid logic is given by translation into first-order logic with equality over a unimodal vocabulary, with a unary predicate  $P$ for each proposition atom $p$, and a single transition  relation $E$.
World variables are treated as ordinary first-order variables. The translation is parameterised on a variable, corresponding to the world at which the formula is to be evaluated.
We write $\psi[x/y]$ for the result of substituting $x$ for the free occurrences of $y$ in $\psi$.
\[ \begin{array}{lcl}
\ST_{x}(p) & = & P(x) \\
\ST_{x}(x') & = & x = x' \\
\ST_{x}(\neg \vphi) & = & \neg \, \ST_{x}(\vphi) \\
\ST_{x}(\vphi \wedge \vphi') & = & \ST_{x}(\vphi) \wedge \ST_{x}(\vphi') \\
\ST_{x}(\vphi \vee \vphi') & = & \ST_{x}(\vphi) \vee \ST_{x}(\vphi') \\
\ST_{x}(\Box \vphi) & = & \forall y. [ E(x,y) \to \ST_{y}(\vphi)] \\
\ST_{x}(\Diamond \vphi) & = & \exists y. [ E(x,y) \wedge \ST_{y}(\vphi)] \\
\ST_{x}(\da x'. \vphi) & = & \ST_{x}(\vphi)[x/x'] \\
\ST_{x}(@_{x'} \vphi) & = & \ST_{x}(\vphi)[x'/x]
\end{array}
\]

The target of this translation is the \emph{bounded fragment} of first-order logic with equality, with quantifiers restricted to those of the form $\exists y. [ E(x,y) \wedge \vphi]$, 
$\forall y. [ E(x,y) \to \vphi]$, with $x \neq y$.
Hybrid logic is in fact equiexpressive with this fragment \cite{areces2001hybrid}. We shall discuss the bounded fragment  in section~\ref{bdfragsec}. 

Note that $\ST_{x}(\Diamond y)$ is logically equivalent to $E(x,y)$. Thus this formula tests for the presence of a transition between worlds which have already been reached, justifying our assignment of modal depth $0$.

One feature of hybrid logic which we have omitted here is \emph{nominals}, which correspond to constants under the first-order translation. These will also be included in our treatment of the bounded fragment in section~\ref{bdfragsec}.

\section{The hybrid comonad}

We shall now introduce the hybrid comonad on $\CSp$ for modal vocabularies $\sg$, motivating it as combining features of the Ehrenfeucht-\Fraisse and modal comonads from \cite{abramsky2021relating}.
\begin{itemize}
    \item 
We recall firstly the Ehrenfeucht-\Fraisse comonad $\Ek$ on $\CS$ for an arbitrary vocabulary $\sg$. Given a structure $\As$, the universe of $\Ek \As$ is the set of non-empty sequences of elements of $A$ of length $\leq k$. We think of these sequences as \emph{plays} in the Ehrenfeucht-\Fraisse game on $\As$. We define the map $\epsA : \Ek A \to A$ which sends a sequence to its last element, which we think of as the current move or \emph{focus} of the play. For a relation $R$ of arity $n$, we define $R^{\Ek \As}(s_1,\ldots,s_n)$ to hold iff $s_i \comp s_j$ for all $1\leq i,j \leq n$, and $\RA(\epsA(s_1), \ldots , \epsA(s_n))$. Explicitly, for unary predicates $P$, $P^{\Ek \As}(s)$  iff $P^{\As}(\epsA(s))$, and for a binary  relation $R$, $R^{\Ek \As}(s,t)$ iff $s \comp t$ and  $\RA(\epsA(s), \epsA(t))$. Thus the relations hold \emph{along} plays as one extends another, but not between \emph{different} (\ie incomparable) plays.

\item This construction lifts to the pointed category $\CSp$. We define the universe of $\Ek (\As,a)$ to comprise the non-empty sequences of length $\leq k+1$ which start with $a$. The distinguished element is $\langle a \rangle$. The relations are lifted in exactly the same way as previously.

\item The modal comonad $\Mk$ over a unimodal vocabulary with unary predicates $P$ corresponding to propositional atoms, and a single transition relation $E$, restricts the sequences in $\Ek (\As,a)$ to those of the form $\langle a_0,\ldots a_j\rangle$, $a_0 = a$, such that for all $i$ with $0 \leq i < j$, $\EA(a_i,a_{i+1})$. Thus we can only extend a sequence with an element which the previous element ``sees''. Moreover, the transition relation $E$ is lifted in a correspondingly local fashion, so that a sequence is only related to its immediate extensions: $E^{\Mk (\As,a)}(s,t)$ iff  $t = s\langle a \rangle$ and $\EA(\epsA(s), \epsA(t))$.
This is the familiar \emph{unravelling} construction for modal structures \cite{blackburn2002modal}.

\item The hybrid comonad $\Hk$ is again defined on the pointed category $\CSp$ over a unimodal vocabulary. $\Hk (\As,a)$ has as universe the subset of $\Ek (A,a)$ of those sequences $\langle a_0, a_1, \ldots , a_{l}\rangle$ such that $a_0 =a$, and for all $j$ with $0 < j \leq l$, for some $i$, $0 \leq i < j$, $\EA(a_{i},a_{j})$. Thus we relax the locality condition of $\Mk$ to the condition that a sequence can only be extended with an element if it is seen by \emph{some} element which has been played previously. The $\sg$-relations on $\Hk (\As,a)$ are defined exactly as for $\Ek (\As,a)$, and the distinguished element is $\langle a \rangle$, so  $\Hk (\As,a)$ is the induced substructure of $\Ek (\As,a)$ given by this restriction of the universe. In this sense, $\Hk$ is closer to $\Ek$ than to $\Mk$.
\end{itemize}

To complete the specification of $\Hk$, we define the \emph{coKleisli extension}: given a morphism $h : \Hk (\As,a) \to (\Bs,b)$, we define $h^* : \Hk (\As,a) \to \Hk (\Bs,b)$ by 
\[ h^*(\langle a, a_1, \ldots , a_i \rangle) = \langle h(\langle a \rangle), h(\langle a, a_1 \rangle), \ldots , h(\langle a, a_1, \dots , a_i \rangle) \rangle . \]
We can verify that for each structure $\As$, $\epsA : \Hk \As \to \As$ is a morphism; that for each morphism $h : \Hk (\As,a) \to (\Bs,b)$, $h^* : \Hk (\As,a) \to \Hk (\Bs,b)$ is a morphism; and that the following equations are satisfied, for all morphisms $h : \Hk (\As,a) \to (\Bs,b)$, $g : \Hk (\Bs,b) \to (\Cs,c)$:
\[ \epsA \circ h^* = h, \qquad \epsA^* = \id_{\Hk \As}, \qquad (g \circ h^*)^* = g^* \circ h^* , \]
This establishes the following result.
\begin{proposition}
\label{Hkprop}
The triple $(\Hk, \ve, (\cdot)^{*})$ is a comonad in Kleisli form \cite{manes2012algebraic}.
\end{proposition}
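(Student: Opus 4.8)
The statement bundles three obligations: that the counit $\ve_{\As} : \Hk\As \to \As$ and each coextension $h^*$ are morphisms of $\CSp$, and that the three displayed coKleisli identities hold. The plan is to lean throughout on the observation that, by construction, $\Hk(\As,a)$ is the induced substructure of $\Ek(\As,a)$ on the hybrid sequences, carrying exactly the same $\sg$-relations, the same counit (``take the last element''), and the same coextension formula. My expectation is that every property that depends only on how the relations behave along comparable plays can be transported from the corresponding facts for $\Ek$ established in \cite{abramsky2021relating}, leaving a single genuinely new point: that coextension respects the hybrid locality constraint.

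I would first dispatch $\ve_{\As}$. Since the relations of $\Hk\As$ are the restrictions of those of $\Ek\As$, the implication $R^{\Hk\As}(s_1,\dots,s_n) \IMP \RA(\ve_{\As}(s_1),\dots,\ve_{\As}(s_n))$ is immediate from the definitions, and $\ve_{\As}(\langle a\rangle)=a$ gives basepoint preservation. The crux is then to show that, for a morphism $h : \Hk(\As,a)\to(\Bs,b)$, the running image $h^*(s)$ actually lands in $\Hk(\Bs,b)$. Writing $s=\langle a_0,\dots,a_l\rangle$ and $s^{(j)}=\langle a_0,\dots,a_j\rangle$ for the prefix of $s$ ending at position $j$, the $j$-th entry of $h^*(s)$ is $h(s^{(j)})$, and the basepoint is handled by $h(\langle a\rangle)=b$. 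For the locality constraint at a position $j>0$, I would take a witness $i<j$ with $\EA(a_i,a_j)$ supplied by the hybrid condition on $s$, observe that $s^{(i)}\preford s^{(j)}$ are comparable with respective last elements $a_i$ and $a_j$, so that the relation $E^{\Hk\As}(s^{(i)},s^{(j)})$ genuinely holds \emph{in the structure} $\Hk\As$, and then push it forward along the homomorphism $h$ to obtain $E^{\Bs}(h(s^{(i)}),h(s^{(j)}))$ --- precisely the witness required at position $j$ of $h^*(s)$. I expect this to be the main obstacle, as it is the only step using the defining restriction of $\Hk$; it succeeds exactly because the hybrid witnesses always sit at comparable prefixes, on which $E$ is retained in $\Hk\As$ and hence transported by homomorphisms.

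It then remains to verify that $h^*$ is a homomorphism and that the three identities hold, which I anticipate being routine. For the homomorphism property I would use that $h^*$ is monotone for the prefix order (extending $s$ only appends further entries to the running image), so it preserves comparability $\comp$, together with $\ve_{\Bs}(h^*(t))=h(t)$; then from $R^{\Hk\As}(s_1,\dots,s_n)$ the images are pairwise comparable and $\RB(h(s_1),\dots,h(s_n))$ follows from $h$ being a homomorphism, giving $R^{\Hk\Bs}(h^*(s_1),\dots,h^*(s_n))$, while $h^*(\langle a\rangle)=\langle b\rangle$ gives basepoint preservation. The identities are direct computations on running images: $\ve_{\Bs}\circ h^*=h$ reads off the last entry, $\ve_{\As}^*=\id_{\Hk\As}$ since the $j$-th entry of $\ve_{\As}^*(s)$ is $a_j$, and $(g\circ h^*)^*=g^*\circ h^*$ because both send $s$ to the sequence whose $j$-th entry is $g(h^*(s^{(j)}))$, using that $h^*$ preserves prefixes.
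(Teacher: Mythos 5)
Your proposal is correct and follows essentially the same route as the paper: the counit, the homomorphism property of $h^*$, and the three coKleisli identities are handled exactly as for $\Ek$, and the one genuinely new obligation is that $h^*$ lands in the hybrid substructure. Your argument for that point --- hybrid witnesses sit at comparable prefixes, so $E^{\Hk\As}(s^{(i)},s^{(j)})$ holds and is pushed forward by the homomorphism $h$ --- is precisely the paper's stated justification, spelled out in detail.
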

\begin{proof}
The proof of the above items largely carries over from the corresponding arguments for $\Ek$ \cite[Proposition 3.1]{abramsky2021relating}. The additional point to be checked is that $h^*$ cuts down to the substructures induced by the hybrid comonad. This follows immediately from the definition of  $E^{\Hk \As}(s,t)$, and the fact that $h$ is a homomorphism.
\end{proof}
It is then standard  \cite{manes2012algebraic} that $\Hk$ extends to a functor by $\Hk f = (f \circ \epsilon)^*$; that $\ve$ is a natural transformation; and that if we define the comultiplication $\delta : \Hk \Rightarrow \Hk^2$ by $\delta_{\As} = \id_{\Hk \As}^*$, then $(\Hk, \ve, \delta)$ is a comonad.

\subsection{\texorpdfstring{$I$-morphisms}{I-morphisms} and equality}
\label{Imorsec}

Like the Ehrenfeucht-\Fraisse comonad $\Ek$, and unlike the modal comonad $\Mk$, equality is important for $\Hk$, as we might expect from its appearance in the translation of hybrid logic into first-order logic. We shall follow the procedure introduced in  \cite[Section 4]{abramsky2021relating} to ensure that equality is properly handled in $\Ek$.

The issue is that elements of $A$ may be repeated in the plays in $\Hk (\As,a)$. In particular, this happens when there are cycles in the graph $(A, \EA)$ which are reachable from $a$. We wish to view coKleisli morphisms $f : \Hk (\As,a) \to (\Bs,b)$ as winning strategies for Duplicator in the one-sided (or existential) Spoiler-Duplicator game from $(\As,a)$ to $(\Bs,b)$, in which Spoiler plays in $\As$ and Duplicator in $\Bs$ \cite{kolaitis1990expressive}. In order to fulfil the partial homomorphism winning condition, $f$ must map repeated occurrences of an element $a' \in A$ in a play $s$ in $\Hk (\As,a)$ to the same element of $B$. 
The same issue will recur when we deal with back-and-forth games in section~\ref{opensec}.
We seek a systematic means of enforcing this requirement.

Given a relational vocabulary $\sg$, we produce a new one $\sgp  = \sigma \cup \{I\}$, where $I$ is a binary relation symbol not in $\sg$. 
If we interpret $I^{(\As,a)}$ and $I^{(\Bs,b)}$ as the identity relations on $A$ and $B$, then, following the general prescription for relation lifting in $\Ek (\As,a)$, and hence also in $\Hk (\As,a)$ as an induced substructure of $\Ek (\As,a)$, we have $I^{(\Hk \As,a)}(s,t)$ iff $s \comp t$ and $\epsA(s) = \epsA(t)$.
Thus a $\sg$-morphism $f : (\Hk \As,a) \to (\Bs,b)$ satisfies the required condition iff it is a $\sgp$-morphism.

As it stands, this is an ad hoc condition: it relies on a special interpretation of the $I$-relation. We want our objects to live in $\CSp$, but our morphisms to live in $\CSpp$.
To accomplish this, we use a simple special case of the notion of \emph{relative comonad} \cite{altenkirch2010monads}. 
We can take advantage of the fact that $\Ek$, and hence $\Hk$ as a sub-comonad of $\Ek$, is defined uniformly in the vocabulary. 
Given a vocabulary $\sg$, there is a full and faithful embedding $J : \CSp \to \CSpp$ such that $I^{J (\As,a)}$ is the identity on $A$.
Moreover, we have a comonad $\EkI$, which is the $\Ek$ construction applied to $\CSpp$. Note that this treats $I$ like any other binary relation in the vocabulary.

We correspondingly obtain $\HkI (\As, a)$ as the substructure of $\EkI (\As,a)$ induced by restricting the universe to that of $\Hk(\As,a)$. It is important to note that only the transition relation $E$ is used to restrict the universe. 

We use this data to obtain the $J$-relative comonad $\Hkplus = \HkI \circ J$ on $\CSp$. The objects of the coKleisli category for this relative comonad are those of $\CSp$.
CoKleisli morphisms have the form $\HkI J (\As,a) \to J (\Bs,b)$. The counit and coextension are the restrictions of those for $\HkI$ to the image of $J$.

A more general version of this construction will be given in our discussion of the bounded fragment in section~\ref{bdfragsec}.

\subsection{CoKleisli maps, existential games, and the existential positive fragment}

The standard  $k$-round existential Ehrenfeucht-\Fraisse game from $\As$ to $\Bs$ \cite{kolaitis1990expressive,abramsky2021relating} is defined as follows. In each round $i$, Spoiler moves by choosing an element $a_i$ from $A$, and Duplicator responds by choosing an element $b_i$ from $B$. The winning condition for Duplicator is that the correspondence $a_i \mapsto b_i$ is a partial homomorphism from $\As$ to $\Bs$.

The $k$-round existential hybrid game from $(\As,a)$ to $(\Bs,b)$ is defined in exactly the same way, with two additional provisos:
\begin{itemize}
\item At round $0$, Spoiler must play $a_0 = a$, and Duplicator must respond with $b_0 = b$.
\item At round $j>0$, Spoiler must play a move $a_{j}$ such that, for some $i<j$, $\EA(a_i,a_j)$.
\end{itemize}

\begin{proposition}
There is a bijective correspondence between
\begin{itemize}
\item Winning strategies for Duplicator in the $k$-round existential hybrid game from $(\As,a)$ to $(\Bs,b)$
\item CoKleisli morphisms $h : \Hkplus (\As,a) \to J (\Bs,b)$.
\end{itemize}
\end{proposition}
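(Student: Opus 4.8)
The plan is to exploit the observation that the universe of $\Hk(\As,a)$ consists of exactly the legal plays of Spoiler in the existential hybrid game: a non-empty sequence $\langle a_0,\ldots,a_l\rangle$ lies in $\Hk(\As,a)$ iff $a_0=a$ and for each $j$ with $0<j\leq l$ there is some $i<j$ with $\EA(a_i,a_j)$, which is precisely the conjunction of the two provisos governing Spoiler's moves. I would regard a (deterministic) Duplicator strategy as a function assigning to each such legal Spoiler play a response in $B$, and then show that the partial-homomorphism winning condition is equivalent to the requirement that the induced map be a $\sgp$-morphism $\HkI J(\As,a)\to J(\Bs,b)$. The whole proof then reduces to matching these two index sets and checking the relation-preservation conditions one symbol at a time.

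For the forward direction, given a winning strategy $\tau$, I would define $h(\langle a_0,\ldots,a_j\rangle)$ to be Duplicator's response $b_j$ prescribed by $\tau$ after Spoiler has played $a_0,\ldots,a_j$; this is well defined because Duplicator's earlier responses are themselves determined by $\tau$, so the value depends only on Spoiler's sequence. Preservation of the basepoint, $h(\langle a\rangle)=b$, is the round-$0$ proviso. For an arbitrary relation symbol $R$, if $R^{\Hk\As}(s_1,\ldots,s_n)$ then the $s_i$ are pairwise comparable with $\RA(\epsA(s_1),\ldots,\epsA(s_n))$, so their focuses all occur in a single game history, on which $\tau$'s winning guarantee yields $\RB(h(s_1),\ldots,h(s_n))$. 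The decisive case is the adjoined relation $I$: since $I$ holds between $s$ and $t$ in $\HkI\As$ iff $s\comp t$ and $\epsA(s)=\epsA(t)$, preservation of $I$ is exactly the demand that Duplicator map repeated occurrences of an element of $A$ to a single element of $B$, which is built into the partial-homomorphism winning condition.

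For the converse I would run this argument in reverse: given a coKleisli morphism $h$, define $\tau$ by having Duplicator respond to the legal play $\langle a_0,\ldots,a_j\rangle$ with $h(\langle a_0,\ldots,a_j\rangle)$, which is always a legal move since Duplicator is unconstrained in its choices past round $0$. Preservation of the basepoint supplies the forced opening $b_0=b$, and the $\sgp$-homomorphism property — now crucially including $I$ — gives that the correspondence $a_i\mapsto b_i$ is a partial homomorphism after every round, so $\tau$ is winning. The two assignments are patently mutually inverse, since a strategy is determined by its responses on all legal plays and a morphism by its values on all of $\Hk(\As,a)$, and by the opening observation these two domains coincide.

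I expect the main obstacle to be bookkeeping around equality rather than anything conceptually deep. One must verify that the interpretation of $I$ fixed in section~\ref{Imorsec}, transported through the embedding $J$ so that objects live in $\CSp$ but morphisms live in $\CSpp$, faithfully encodes the functionality constraint on Duplicator (same $A$-element forces same $B$-element); this is the one point where the relative-comonad machinery is doing essential work, and it is what distinguishes the hybrid game from the purely modal one. Once that correspondence is pinned down, every other relation symbol is handled uniformly, because the hybrid relations on $\Hk(\As,a)$ hold only along comparable plays, matching the single-history structure of any position in the game.
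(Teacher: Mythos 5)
Your proposal is correct and follows essentially the same route as the paper, which simply cites the corresponding arguments for $\Ek$ (Theorems~3.2 and~5.1 of \cite{abramsky2021relating}) and notes exactly the two points you develop in detail: that the legality conditions on Spoiler's moves coincide with the definition of the universe of $\Hkplus(\As,a)$, and that the partial homomorphism winning condition matches the $\sgp$-morphism requirement, with the $I$-relation encoding the functionality constraint on Duplicator's responses.
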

\begin{proof}
This is mostly the same argument as in the proofs of Theorems~3.2 and~5.1 in \cite{abramsky2021relating}. The additional point to note is that the conditions on Spoiler's moves match those  used in defining the universe of $\Hkplus (\As,a)$ as a subset of that of $\Ek (\As,a)$. Moreover, the partial homomorphism winning condition ensures that the same conditions are respected by the responses of Duplicator.
\end{proof}
The existential positive fragment $\HLp$ of hybrid logic is defined by omitting negation and $\Box$ in the syntax for hybrid logic given in section~\ref{hlsec}.
$\HLp_{k}$ is the fragment of $\HLp$ comprising formulas of hybrid modal depth~$\leq k$.

This fragment induces a preorder on pointed structures: 
\[ (\As,a) \SPreord_k (\Bs,b) \; \stackrel{\Delta}{\Longleftrightarrow} \; \forall \vphi \in \HLp_k. \, [\Asa \models \vphi \IMP \Bsb \models \vphi]. \]
Here by $\Asa \models \vphi$ we mean $\Asa \models \psi(x)$, where $\psi(x) = \ST_{x}(\vphi)$.

We define another preorder on pointed structures: $(\As,a) \rightarrow^{\mathbb{H}}_{k} (\Bs,b)$ iff there is a coKleisli morphism  $h : \Hkplus (\As,a) \to J (\Bs,b)$.

\begin{theorem}
Let $\sg$ be a finite unimodal vocabulary.
For all  $(\As,a)$, $(\Bs,b)$ in $\CSp$:
\[  (\As,a) \SPreord_k (\Bs,b) \; \IFF \; (\As,a) \to^{\mathbb{H}}_{k} (\Bs,b) . \]
\end{theorem}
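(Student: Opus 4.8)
The plan is to factor the biconditional through the game characterisation established in the preceding Proposition, which puts coKleisli morphisms $h:\Hkplus(\As,a)\to J(\Bs,b)$ in bijection with winning strategies for Duplicator in the $k$-round existential hybrid game. Thus $(\As,a)\to^{\mathbb{H}}_{k}(\Bs,b)$ holds iff Duplicator wins this game, and it suffices to prove that Duplicator wins the game from $(\As,a)$ to $(\Bs,b)$ iff $(\As,a)\SPreord_k(\Bs,b)$. I would prove the two implications separately, as a soundness and a completeness statement for the game against the fragment $\HLp_k$.

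For soundness (a winning strategy forces preservation of $\HLp_k$-formulas), I would fix a coKleisli morphism, viewed as a single global winning strategy $\tau$, and argue by induction on $\vphi\in\HLp$. The induction is carried out over game positions: a partial play consistent with $\tau$ giving $(a_0,\dots,a_j)$ in $\As$ and $(b_0,\dots,b_j)$ in $\Bs$, together with an assignment of the free world variables of $\vphi$ to played indices and a designated current-world index $m\le j$ (which an $@$-subformula may move away from the last move). The invariant is that $a_i\mapsto b_i$ is a partial $\sgp$-homomorphism, so it preserves each predicate, the transition relation $E$, and, crucially, the equality relation $I$. The atomic, boolean and $\Diamond$ cases are as in the Ehrenfeucht--\Fraisse argument for $\Ek$; what is specific here is that (i) a bare world variable $x$, whose translation is $w=x$, is transferred precisely because $I$ is preserved, (ii) the depth-$0$ formula $\Diamond x\equiv E(w,x)$ consumes no round because it merely tests an edge to an already-played world, matching our grading, and (iii) $\da x.\vphi$ and $@_{x}\vphi$ only manipulate the assignment and the current-world index, again consuming no round, and are sound because $\Hk$ retains the whole play, not just its last element (the feature distinguishing $\Hk$ from $\Mk$). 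Using a single $\tau$ rather than ad hoc responses is what lets conjunctions be handled: the two conjuncts generate two independent continuations from the same position, both within $k$ rounds since $\vphi$ has hybrid modal depth $\le k$, and $\tau$ responds coherently to each.

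For completeness I would use characteristic (Hintikka) formulas, and here the hypothesis that $\sg$ is finite is essential. I claim that, for each depth $m$ and each bound on the number of available world variables, there are up to logical equivalence only finitely many $\HLp$-formulas: a depth-$0$ type is fixed by which of the finitely many predicates hold at the current world and which $I$- and $E$-relations hold to the at most $k+1$ named worlds (the latter expressed by the depth-$0$ formulas $x$ and $\Diamond x$), while a depth-$m$ type is fixed by its depth-$0$ part together with, for each of the finitely many depth-$(m-1)$ types, whether a $\Diamond$-successor of that type exists. This recursion yields finitely many types at each level even when $\As$ and $\Bs$ are infinite, since only the vocabulary and the number of world variables, bounded by the $k$ rounds, need be finite. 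I would then assemble a characteristic formula $\chi\in\HLp_k$ for $(\As,a)$ naming played worlds with $\da$ and referring back to them with $@$ and $\Diamond x$, show $\Asa\models\chi$ by construction, and show that any $\Bsb\models\chi$ yields a winning strategy by reading off Duplicator's responses from the matching witnesses guaranteed by $\chi$. Assuming $(\As,a)\SPreord_k(\Bs,b)$, from $\Asa\models\chi$ we obtain $\Bsb\models\chi$, hence Duplicator wins.

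The main obstacle is the completeness direction, and within it the finiteness-of-types lemma in the presence of the hybrid apparatus: one must verify that $\da$, $@$ and the depth-$0$ edge tests $\Diamond x$ genuinely keep the number of relevant types finite over a finite vocabulary, since these are exactly the features by which hybrid logic exceeds basic modal logic, and that the characteristic formula faithfully transcribes a game position, including the equalities between possibly repeated worlds tracked by $I$, into the syntax. Lining up the depth bookkeeping, so that the depth-$\le k$ grading with $\Diamond x$ at depth $0$ matches exactly the $k$ rounds of the game, including plays that branch back from an earlier world after an $@$-jump, is the delicate point; the soundness direction, by contrast, is a routine induction once the invariant on partial $\sgp$-homomorphisms is in place.
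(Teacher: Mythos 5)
Your proposal is correct and follows essentially the route the paper intends: the paper states this theorem without an explicit proof, immediately after the proposition identifying coKleisli morphisms $\Hkplus (\As,a) \to J (\Bs,b)$ with winning strategies in the $k$-round existential hybrid game, and the remaining step is exactly your soundness/completeness argument for that game against $\HLp_k$, adapted from the corresponding result for $\Ek$ in \cite{abramsky2021relating}. Your flagged delicate points --- preservation of the $I$-relation to handle world-variable atoms, the depth-$0$ accounting for $\Diamond x$ and $@$-jumps, and finiteness of characteristic formulas over a finite vocabulary --- are precisely what that adaptation requires.
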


\section{Coalgebras}
\label{sec:coalgebras}
We now study coalgebras for the hybrid comonad.
A coalgebra for a comonad $(G, \varepsilon, \delta)$ is a morphism $\alpha : A \to G A$ such that the following diagrams commute:
\begin{center}
\begin{tikzcd}
A \ar[r, "\alpha"] \ar[rd, "\id_A"']
& G A \ar[d, "\ve_{A}"] \\
& A
\end{tikzcd}
$\qquad \qquad$
\begin{tikzcd}
A  \ar[r, "\alpha"] \ar[d, "\alpha"']
& G A \ar[d,  "\delta_{A}"] \\
G A  \ar[r, "G \alpha"] 
& G^2 A
\end{tikzcd}  
\end{center}
Given $G$-coalgebras $\alpha : A \to GA$ and $\beta : B \to GB$, a coalgebra morphism from $\alpha$ to $\beta$ is a morphism $h : A \to B$ such that the following diagram commutes:
\begin{center}
\begin{tikzcd}
A \ar[r, "\alpha"] \ar[d, "h"']
& GA \ar[d, "Gh"] \\
B \ar[r, "\beta"'] & GB
\end{tikzcd}
\end{center}
This gives a category of coalgebras and coalgebra morphisms, denoted by $\EM(G)$, the \textit{Eilenberg-Moore category} of $G$.

We will now analyze $\EMHk$, the category of coalgebras for the hybrid comonad on a unimodal vocabulary $\sg$. This will lead to  a natural combinatorial parameter associated with hybrid logic and the bounded fragment, which is a refinement of \emph{tree-depth} \cite{nevsetvril2006tree}. It will also provide a basis for a comonadic characterisation of bisimulation and the equivalence on structures induced by the full hybrid logic, as we will see in the next section.

We will need a few more notions on posets.
A chain in a poset $(P, {\leq})$ is  a subset $C \subseteq P$ such that, for all $x, y \in C$, $x \comp y$. A \emph{forest} is a poset $(F, {\leq})$ such that, for all $x \in F$, the  set of predecessors $\da(x) \, := \, \{ y \in F \mid y \leq x\}$ is a finite chain. The height $\hgt(F)$ of a forest $F$ is $\sup_{C} | C |$, where $C$ ranges over chains in $F$.
Note that the height is either finite or $\omega$.
A \emph{tree} is a forest with a least element  (the root).
We write the covering relation for a poset as $\cvr$; thus $x \cvr y$ iff $x \leq y$, $x \neq y$, and for all $z$, $x \leq z \leq y$ implies $z = x$ or $z = y$.
Morphisms of trees are monotone maps preserving the root and the covering relation.

Given a $\sg$-structure $\As$, the \emph{Gaifman graph} $\Gf(\As)$ is $(A, \adj)$, where $a \adj a'$ ($a$ is adjacent to $a'$) if they are distinct elements of $A$ which both occur in a tuple of some relation $\RA$, $R$ in $\sg$.

A \emph{tree cover} of a pointed $\sg$-structure $(\As,a)$ is a tree order $(A, {\leq})$ on $A$ with least element $a$, and such that if $a \adj a'$, then $a \comp a'$. Thus adjacent elements in the Gaifman graph must appear in the same branch of the tree. The tree cover is \emph{generated} if for all $a' \in A$ with $a' \neq a$, for some $a'' \in A$, $a'' < a'$ and $\EA(a'',a')$.

\begin{theorem}
For any pointed $\sg$-structure $(\As,a)$, and $k > 0$, there is a bijective correspondence between:
\begin{itemize}
\item $\Hk$-coalgebras $\alpha : (\As,a) \to \Hk (\As,a)$.
\item Generated tree covers of $(\As,a)$ of height $\leq k + 1$.
\end{itemize}
\end{theorem}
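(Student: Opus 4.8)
The plan is to follow the template used for the Ehrenfeucht--\Fraisse comonad in \cite{abramsky2021relating}, reading off the combinatorial data of a generated tree cover directly from the two coalgebra laws. A coalgebra $\alpha$ assigns to each $b \in A$ a sequence $\alpha(b) \in \Hk(\As,a)$, and the first task is to unwind the laws. The counit law $\epsA \circ \alpha = \id$ says precisely that $\alpha(b)$ ends in $b$; in particular $\alpha$ is injective. For coassociativity I would compute both composites explicitly: $\delta_{\As}$ sends a sequence to its sequence of prefixes, while $\Hk\alpha(\langle a_0,\dots,a_l\rangle) = \langle \alpha(a_0),\dots,\alpha(a_l)\rangle$ using $\Hk\alpha = (\alpha\circ\ve)^*$ and the explicit coKleisli extension. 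Equating these componentwise yields the key coherence condition: if $\alpha(b) = \langle a_0,\dots,a_l\rangle$ then $\alpha(a_j) = \langle a_0,\dots,a_j\rangle$ for every $j$; that is, the image of $\alpha$ is prefix-closed, each prefix being the $\alpha$-value of its final element.

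From this coherence I would define the candidate order by $b' \leq b \iff \alpha(b') \preford \alpha(b)$. The coherence condition shows that the predecessors of $b$ are exactly the elements listed in $\alpha(b)$, and that these are distinct: if $a_i = a_j$ with $i<j$ then $\alpha(a_i)=\alpha(a_j)$, forcing $\langle a_0,\dots,a_i\rangle = \langle a_0,\dots,a_j\rangle$ and hence $i=j$. So $\da(b)$ is a finite chain and $(A,\leq)$ is a forest; since $\alpha$ is a morphism of $\CSp$ we have $\alpha(a)=\langle a\rangle$, and every sequence begins with $a$, so $a$ is the root and we obtain a tree. The length bound $\leq k+1$ on sequences in $\Hk(\As,a)$ translates directly into height $\leq k+1$, and conversely.

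It remains to match the remaining structure. Here I would use that $\alpha$ is a $\sg$-morphism into $\Hk\As$. For the binary relation, $E^{\Hk\As}(\alpha(b),\alpha(b'))$ holds iff $\alpha(b)\comp\alpha(b')$ and $\EA(b,b')$, by the $\Ek$-style lifting inherited by the induced substructure $\Hk\As$; hence preservation of $E$ forces $b\comp b'$ whenever $b\adj b'$, which is exactly the tree-cover condition, the unary predicates contributing no Gaifman edges. The generated condition falls out of the defining restriction on the universe of $\Hk(\As,a)$: for each $b\neq a$ the final entry $b$ of $\alpha(b)$ is seen by some earlier entry $a_i$, and $a_i < b$. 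For the converse I would send a generated tree cover to the map $\alpha(b) = \langle a_0,\dots,a_l\rangle$ enumerating $\da(b)$ in increasing order, and check that the universe restriction, length bound, counit and coassociativity laws translate back into the generated, height, root, and prefix-enumeration conditions respectively; mutual inverseness is then immediate, since $b'\leq b \iff \alpha(b')\preford\alpha(b)$ by construction.

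The main obstacle, such as it is, is the bookkeeping in the coassociativity computation and, relatedly, recognising that the two coalgebra laws together automatically rule out repeated elements within a play, so that equality is handled without invoking the $I$-relation. This is the one place where the argument genuinely uses both laws at once, rather than being a routine transcription of the $\Ek$ case.
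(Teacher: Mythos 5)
Your proposal is correct, and it is essentially the proof the paper intends: the paper states this theorem without an explicit argument, relying on the standard coalgebra--cover correspondence for $\Ek$ from \cite{abramsky2021relating}, which is exactly what you reconstruct (counit law gives the last-element/injectivity property, coassociativity gives prefix-coherence, the homomorphism condition gives the Gaifman cover condition, and the universe restriction of $\Hk$ gives the generated condition). Your closing observation that the two laws force distinctness of entries within each play, so that no $I$-relation bookkeeping is needed at the coalgebra level, is also accurate and consistent with the paper's treatment, which reserves the $I$-machinery for coKleisli morphisms.
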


We define the \emph{generated tree depth} of $(\As,a)$ to be the minimum height of any generated tree cover of $(\As,a)$. This can be seen as a refinement of the standard notion of tree depth  \cite{nevsetvril2006tree}.

We define the \emph{hybrid coalgebra number} of $(\As,a)$ to be the least $k$ such that there is an $\Hk$-coalgebra $\alpha : (\As,a) \to \Hk (\As,a)$. If there is no coalgebra for any $k$, the hybrid coalgebra number is $\omega$.
\begin{theorem}
The generated tree depth of a structure $(\As,a)$ coincides with its hybrid coalgebra number.
\end{theorem}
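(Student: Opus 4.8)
The plan is to read this off from the bijective correspondence of the preceding theorem, which I may assume. That correspondence tells us that, for each $k > 0$, the pointed structure $(\As,a)$ admits an $\Hk$-coalgebra if and only if it admits a generated tree cover of height $\leq k+1$; equivalently, precisely when its generated tree depth is $\leq k+1$. Since this height bound is monotone in $k$, the set of $k$ admitting a coalgebra is upward closed, so the hybrid coalgebra number is controlled by the least achievable height, and equality with the generated tree depth will follow once matching inequalities are established.

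First I would treat the case where the generated tree depth is finite. To bound the hybrid coalgebra number, I would take a generated tree cover of minimal height --- a minimum is attained, since the heights lie in the well-ordered set $\{1, 2, \dots\} \cup \{\omega\}$ --- and pass it through the correspondence to produce a coalgebra at the corresponding resource level. For the reverse direction, I would start from a coalgebra witnessing the hybrid coalgebra number, extract its associated generated tree cover, and read off the height of that cover to bound the generated tree depth. Combining the two yields equality whenever either quantity is finite.

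The remaining, and most delicate, case is the infinite one, which I expect to be the main obstacle. Here I would show that the hybrid coalgebra number equals $\omega$ exactly when $(\As,a)$ has no generated tree cover of finite height. One direction is immediate from the correspondence: the absence of a finite-height cover rules out an $\Hk$-coalgebra for every finite $k$. The converse must also accommodate the degenerate situation in which $(\As,a)$ admits no generated tree cover at all; this happens precisely when some element of $\As$ is not $\EA$-reachable from $a$, since the generated condition forces every element to lie on a directed $E$-path from the root. In that situation the generated tree depth is $\omega$ by the usual convention that an infimum over the empty set is $\omega$, matching the coalgebra number, which is then $\omega$ as well. The real work thus lies in keeping the correspondence between the resource index and the tree height consistent across these boundary cases, the finite case being an essentially formal consequence of the preceding theorem.
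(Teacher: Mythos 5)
Your overall strategy---reading the theorem off from the bijective correspondence of the preceding theorem, with a split into finite and infinite cases---is exactly the intended one; the paper offers no separate argument for this statement, treating it as a corollary of that correspondence. The genuine gap is in the step you summarise as ``pass it through the correspondence to produce a coalgebra at the corresponding resource level \ldots\ combining the two yields equality'': under the conventions you are actually using, the two inequalities do \emph{not} match. The correspondence pairs $\Hk$-coalgebras with generated tree covers of height $\leq k+1$, where height is the raw poset height $\hgt(A,{\leq})$ of the cover. With that reading, a minimal cover of height $d$ yields a coalgebra exactly at those levels $k$ with $k+1 \geq d$, so the coalgebra number is at most $d-1$; conversely a level-$k$ coalgebra gives $d \leq k+1$, so the coalgebra number is at least $d-1$. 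Carried out carefully, your argument therefore proves that the hybrid coalgebra number equals the generated tree depth \emph{minus one}, not the claimed equality. Concretely, take $A=\{a,b\}$ with the single edge $\EA(a,b)$: the map $a \mapsto \langle a\rangle$, $b \mapsto \langle a,b\rangle$ is an $\mathbb{H}_1$-coalgebra, so the coalgebra number is $1$, while the unique generated tree cover $a<b$ has raw height $2$. To make the equality come out, the height of a tree cover of a \emph{pointed} structure must be counted discounting the root, i.e.\ as $\hgt(A,{\leq})-1$, which is precisely the convention the paper adopts for the general bounded case in section~\ref{bdfragsec} (there the height is $\hgt(A,{\leq})-m$); under that convention the correspondence reads ``covers of height $\leq k$'' and your two inequalities then do match. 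Your proposal never fixes this convention, and the phrase ``the corresponding resource level'' hides exactly the point where the arithmetic has to be checked.

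A secondary, and in this instance harmless, error: you claim that $(\As,a)$ admits no generated tree cover at all precisely when some element is not $\EA$-reachable from $a$. The implication you actually need (the generated condition forces every element to be reachable, so unreachability kills all covers) is correct, but the converse is false in general: an uncountable structure in which every ordered pair of distinct elements satisfies $E$ has every element reachable from $a$ in one step, yet admits no tree cover whatsoever, since all elements would have to be pairwise comparable while each has only finitely many predecessors, forcing $A$ to be countable. This does not affect the theorem---in either situation both quantities are $\omega$, as you observe---but the ``precisely when'' should be dropped.
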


We define a category $\TS$ with objects $(\As,a,{\leq})$, where $(\As,a)$ is a pointed $\sg$-structure, and $\leq$ is a generated tree cover of $(\As,a)$. Morphisms $h : (\As,a,{\leq}) \to (\Bs,b,{\leq'})$ are morphisms of pointed $\sg$-structures which are also tree morphisms. For each $k>0$, there is a full subcategory $\TSk$ determined by those objects whose covers have height $\leq k$.
\begin{theorem}
\label{TSkthm}
For each $k>0$, $\TSk$ is isomorphic to $\EM(\Hk)$.
\end{theorem}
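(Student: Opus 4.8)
The plan is to exhibit a pair of mutually inverse functors $F : \TSk \to \EM(\Hk)$ and $G : \EM(\Hk) \to \TSk$ that are the identity on the underlying pointed $\sg$-structures and on the underlying maps, and that merely translate the extra structure (a generated tree cover on one side, a coalgebra on the other) back and forth. On objects, $F$ and $G$ are given by the bijection already established in the preceding theorem, so they are mutually inverse on objects by construction; the defining height bound of $\TSk$ is exactly the one that theorem pairs with $\Hk$-coalgebras, so the correspondence restricts correctly to $\TSk$. Since both functors act as the identity on underlying maps, they automatically preserve identities and composition, and are mutually inverse on morphisms as soon as the two notions of morphism are shown to coincide. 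Thus everything reduces to a single equivalence of conditions on maps, which is where the real work lies.

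I would first record explicitly the two directions of the object bijection, since the morphism argument depends on them. Given an $\Hk$-coalgebra $\alpha$ on $(\As,a)$, the counit law forces $\alpha(a')$ to be a play ending in $a'$; $\alpha$ thereby assigns to each $a'$ the chain of its ancestors, and the associated tree order is $a'' \leq a'$ iff $\alpha(a'') \preford \alpha(a')$, with root $a$. Conversely, a generated tree cover $\leq$ yields the coalgebra $\alpha(a') = \langle c_0, \ldots, c_n\rangle$, the increasing enumeration of $\da(a')$, the condition that the cover is generated guaranteeing that this sequence lies in $\Hk(\As,a)$. I would also note the one fact about the functorial action that is needed: $\Hk h$ applies $h$ pointwise to a play, i.e. $\Hk h(\langle c_0,\ldots,c_n\rangle) = \langle h(c_0),\ldots,h(c_n)\rangle$, immediate from $\Hk h = (h\circ\ve)^*$ and the coKleisli extension formula.

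The key lemma is then: for a morphism $h : (\As,a) \to (\Bs,b)$ of pointed $\sg$-structures, and coalgebras $\alpha,\beta$ with associated tree covers $\leq,\leq'$, the coalgebra square $\Hk h \circ \alpha = \beta \circ h$ holds iff $h$ is monotone and preserves the root and the covering relation. Writing the ancestor chain of $a'$ as $a = c_0 \cvr \cdots \cvr c_n = a'$, we have $\Hk h(\alpha(a')) = \langle h(c_0),\ldots,h(c_n)\rangle$, while $\beta(h(a'))$ is the increasing enumeration of $\da(h(a'))$. For the forward direction, equality of these two sequences exhibits $b = h(c_0) \cvr h(c_1) \cvr \cdots \cvr h(c_n) = h(a')$ as the ancestor chain of $h(a')$, giving root preservation, cover preservation, and (by passing to prefixes, using $a''\le a'$ iff $\alpha(a'')\preford\alpha(a')$) monotonicity at once. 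For the converse, if $h$ is a tree morphism then the image of the ancestor chain of $a'$ is a covering chain from $b$ to $h(a')$, which in a tree must coincide with the unique ancestor chain of $h(a')$, so $\beta(h(a')) = \Hk h(\alpha(a'))$.

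The main obstacle --- though it is bookkeeping rather than a conceptual difficulty --- is the care needed in the forward direction to see that $h$ genuinely preserves covers rather than merely order. The point is that $\beta(h(a'))$, being the enumeration of a chain in the poset $(B,\leq')$, has pairwise distinct entries; hence so does $\langle h(c_0),\ldots,h(c_n)\rangle$, which forces $h(c_i) \neq h(c_{i+1})$ and makes each $h(c_i)\cvr h(c_{i+1})$ a genuine covering step, with the image chain of the same length as the original. Once the lemma is in hand the theorem follows formally: $F$ and $G$ send morphisms to morphisms by the lemma, act as the identity on underlying maps, and are therefore mutually inverse functors, establishing $\TSk \cong \EM(\Hk)$.
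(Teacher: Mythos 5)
Your proof is correct and is exactly the argument the paper leaves implicit (the theorem is stated without proof, following the analogous coalgebra result for $\Ek$ in \cite{abramsky2021relating}): extend the object bijection of the preceding theorem to an isomorphism of categories by functors that are the identity on underlying maps, reducing everything to the key equivalence that, for a morphism of pointed structures, the coalgebra square commutes iff the map is a tree morphism. Your treatment of the two delicate points --- distinctness of entries in an ancestor chain yielding genuine cover preservation, and the fact that a covering chain from the root of a tree must be the unique ancestor chain --- is sound; the only blemish is your claim that the height bounds literally agree, which glosses over an off-by-one in the paper's own conventions (covers of height $\leq k+1$ in the coalgebra theorem versus $\leq k$ in the definition of $\TSk$), an inconsistency inherited from the paper rather than introduced by you.
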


There is an evident forgetful functor $U_k : \TSk \to \CSp$ which sends $(\As, a, {\leq})$ to $(\As,a)$.
\begin{theorem}
For each $k>0$, $U_k$ has a right adjoint $R_k : \CSp \to \TSk$ given by $R_k (\As,a) = (\Hk (\As,a), {\preford})$.
The comonad induced by this adjunction is $\Hk$. The adjunction is comonadic.
\end{theorem}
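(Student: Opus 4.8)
The plan is to reduce the statement to the canonical comonadic adjunction associated with an Eilenberg--Moore category, transported across the isomorphism $\TSk \cong \EM(\Hk)$ of Theorem~\ref{TSkthm}. For any comonad $(G,\ve,\delta)$ on a category $\mathcal{C}$, the forgetful functor $U^{G} : \EM(G) \to \mathcal{C}$ has a right adjoint, the cofree functor $F^{G}$ sending an object $X$ to the cofree coalgebra $(GX, \delta_{X})$; this adjunction induces the comonad $G$, and is comonadic by construction, the comparison functor being the identity on $\EM(G)$ \cite{manes2012algebraic}. It therefore suffices to check that, under the isomorphism $\Phi : \TSk \to \EM(\Hk)$, the functor $U_{k}$ corresponds to $U^{\Hk}$ and the functor $R_{k}$ corresponds to $F^{\Hk}$.

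The identification of the forgetful functors is immediate: $\Phi$ fixes the underlying pointed structure, so $U^{\Hk} \circ \Phi = U_{k}$, both sending an object to its underlying $(\As,a) \in \CSp$. The substantive step is to show $\Phi^{-1} \circ F^{\Hk} = R_{k}$; equivalently, that the cofree coalgebra $\delta_{(\As,a)} : \Hk(\As,a) \to \Hk^{2}(\As,a)$ corresponds under $\Phi$ to the generated tree cover of $\Hk(\As,a)$ given by the prefix order $\preford$. Recall that $\delta_{(\As,a)} = \id_{\Hk(\As,a)}^{*}$ sends a play $\langle a_{0}, \ldots, a_{i} \rangle$ to the sequence of its prefixes $\langle \langle a_{0}\rangle, \langle a_{0},a_{1}\rangle, \ldots, \langle a_{0},\ldots,a_{i}\rangle \rangle$. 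Tracing through the coalgebra-to-tree-cover correspondence from the earlier theorem --- under which the tree order attached to a coalgebra $\alpha$ sets $s \leq t$ iff $\alpha(s)$ is a prefix of $\alpha(t)$ --- one reads off that the tree order induced by $\delta_{(\As,a)}$ is exactly $\preford$ on the plays of $\Hk(\As,a)$, since prefixes-of-$s$ form an initial segment of prefixes-of-$t$ precisely when $s \preford t$. Hence $\Phi^{-1}(F^{\Hk}(\As,a)) = (\Hk(\As,a), \preford) = R_{k}(\As,a)$.

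Given these two identifications, the adjunction $U_{k} \dashv R_{k}$ follows by transporting $U^{\Hk} \dashv F^{\Hk}$ across $\Phi$, and the induced comonad is computed on objects by $U_{k} R_{k}(\As,a) = U^{\Hk} F^{\Hk}(\As,a) = \Hk(\As,a)$, with counit $\ve$ and comultiplication $\delta$ inherited unchanged; so the induced comonad is $\Hk$. Finally, comonadicity is preserved under composition with an isomorphism of categories: since $U^{\Hk}$ is comonadic and $\Phi$ is an isomorphism, the comparison functor for $U_{k}$ factors as $\Phi$ followed by the identity comparison for $U^{\Hk}$, hence is itself an isomorphism, so $U_{k} \dashv R_{k}$ is comonadic.

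I expect the main obstacle to be purely bookkeeping: pinning down precisely how the structure map $\delta_{(\As,a)}$ translates into a tree order under the explicit isomorphism of Theorem~\ref{TSkthm}, and confirming that this is the prefix order rather than some reindexed variant. Everything else is an application of standard comonad theory together with the transport of adjunctions and comonadicity along a category isomorphism.
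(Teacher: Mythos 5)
Your proof is correct and follows exactly the route the paper implicitly relies on (it states this theorem without proof, taking it to be the standard Eilenberg--Moore adjunction transported across the isomorphism of Theorem~\ref{TSkthm}). Your identification of the cofree coalgebra $\delta_{(\As,a)} = \id^{*}$ with the prefix order $\preford$ under that isomorphism is the right key computation, and the transport of the adjunction, the induced comonad, and comonadicity along the isomorphism are all handled correctly.
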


\section{Paths, open maps, and back-and-forth equivalence}
\label{opensec}

The coalgebra category $\EM(\Hk)$ has a richer structure than $\CSp$, articulated as $\TSk$ by Theorem~\ref{TSkthm}.
In fact, $\TSk$ is an \emph{arboreal category} as defined in \cite{abramsky2021arboreal}.
This allows us to define notions of bisimulation and games on this category, which can then be transferred to $\CSp$ via the adjunction $U_k \dashv R_k$, following the general pattern laid out in \cite{abramsky2021relating}.

To accommodate $I$-morphisms, as discussed in section~\ref{Imorsec}, we work with the $J$-relative version of this adjunction, using $\Rkp = \RkI J$, where $\RkI$ is the instance of the adjunction for $\CSpp$.

We now follow the same script as detailed in \cite{abramsky2021relating}, and axiomatised  in \cite{abramsky2021arboreal}.

\subsection{Embeddings, paths and pathwise embeddings}
A morphism $e$ in $\TSk$ is an \emph{embedding} if
$U_k(e)$ is an embedding of relational structures. 
We write  $e:  T \rightarrowtail U$ to indicate that $e$ is an embedding.

A \emph{path} in $\TSk$ is an object $P$ such that the associated tree cover is a finite linear order,
so it comprises a single branch. If $P$ is a path, then $I^{P}$ is the identity relation. We say that $e : P \embed T$ is a \emph{path embedding} if $P$ is a path. A morphism $f : T \to U$ in $\TSk$ is a \emph{pathwise embedding} if for any path embedding $e : P \embed T$, $f \circ e$ is a path embedding.

\subsection{Open maps}

A morphism $f : T \to U$ in $\TSk$ is \emph{open} if, whenever we have a diagram
\[ \begin{tikzcd}
  P \arrow[r, rightarrowtail] \arrow[d,rightarrowtail]
    & Q \arrow[d, rightarrowtail] \\
  T \arrow[r,  "f"']
& U
\end{tikzcd}
\]
where $P$ and $Q$ are paths, there is an embedding $Q \rightarrowtail T$ such that
\[ \begin{tikzcd}
  P \arrow[r, rightarrowtail] \arrow[d,rightarrowtail]
    & Q \arrow[dl, rightarrowtail] \arrow[d, rightarrowtail] \\
  T \arrow[r,  "f"']
& U
\end{tikzcd}
\]
This is often referred to as the \emph{path-lifting property}. If we think of $f$ as witnessing a simulation of $T$ by $U$, path-lifting means that if we extend a given behaviour in $U$ (expressed by extending the path $P$ to $Q$), then we can find a matching behaviour in $T$ to ``cover'' this extension. Thus it expresses an abstract form of the notion of  ``p-morphism'' from modal logic \cite{blackburn2002modal}, or of functional bisimulation.

\subsection{Bisimulation}

We can now define the \emph{back-and-forth equivalence} $(\As,a) \eqbHk (\Bs,b)$ between structures in $\CSp$. This holds if there is a span of open pathwise embeddings in $\TSk$ 
\[ \begin{tikzcd}
& T \arrow[dl] \arrow[dr] \\
\Rkp (\As,a) & & \Rkp (\Bs, b)
\end{tikzcd}
\]
Note that we are using the arboreal category $\TSk$ to define an equivalence on the ``extensional category'' $\CSp$.

\subsection{Games}

We shall now define a \emph{back-and-forth game} $\Gk((\As,a),(\Bs,b))$ played between  $(\As,a)$ and $(\Bs,b)$, using  the comonad $\Hk$. Positions of the game are pairs $(s,t) \in \Hk (\As,a) \times \Hk (\Bs,b)$. The initial position is $(\langle a \rangle, \langle b \rangle)$. 

We define a relation  $\WABCp$ on positions  as follows. A pair $(s,t)$ is in $\WABCp$ iff for some path $P$, path embeddings $e_1 : P \embed \Hk (\As,a)$ and $e_2 : P \embed \Hk (\Bs,b)$, and $p \in P$, $s = e_1(p)$ and $t = e_2(p)$. The intention is that $\WABCp$ picks out the winning positions for Duplicator.

At the start of each round of the game, the position is specified by $(s, t) \in \Hk (\As,a) \times \Hk (\Bs,b)$.  The round proceeds as follows. Either Spoiler chooses some $s' \rcvr s$, and Duplicator must respond with $t' \rcvr t$, resulting in a new position $(s', t')$; or Spoiler chooses some $t'' \rcvr t$ and Duplicator must respond with $s'' \rcvr s$, resulting in $(s'',t'')$. Duplicator wins the round if they are able to respond, and  the new position is in $\WABCp$.

\subsection{Results}

\begin{theorem}
Given $(\As,a)$, $(\Bs,b)$ in $\CSp$, the following are equivalent:
\begin{enumerate}
\item $(\As,a) \eqbHk (\Bs,b)$
\item Duplicator has a winning strategy for $\Gk((\As,a),(\Bs,b))$.
\end{enumerate}
\end{theorem}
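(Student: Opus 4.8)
The plan is to prove the two implications separately, exploiting the fact, noted above, that $\TSk$ is an arboreal category and that $\Rkp$ is the (relative) right adjoint exhibiting $\CSp$ as covered by $\TSk$. At this level of generality the equivalence is an instance of the standard correspondence between bisimulation, presented as a span of open pathwise embeddings, and winning strategies in the associated back-and-forth game \cite{abramsky2021relating,abramsky2021arboreal}. The genuine content is therefore to check that the concrete game $\Gk$ defined here is a faithful presentation of that abstract game: positions $(s,t) \in \Hk(\As,a)\times\Hk(\Bs,b)$ are exactly pairs of elements of the underlying sets of the covers $\Rkp(\As,a)$ and $\Rkp(\Bs,b)$; the moves $s' \rcvr s$ correspond to one-step extensions of the current branch; and the winning relation $\WABCp$ is precisely the relation ``lie over a common path'', which is what an open-pathwise-embedding span certifies.

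For the implication from $\eqbHk$ to a winning strategy, suppose we are given a span of open pathwise embeddings $f_1 : T \to \Rkp(\As,a)$ and $f_2 : T \to \Rkp(\Bs,b)$ in $\TSk$. Duplicator plays so as to maintain the invariant that the current position $(s,t)$ has the form $(f_1(r), f_2(r))$ for some $r \in T$ on a branch; the initial position $(\langle a\rangle, \langle b\rangle)$ satisfies this with $r$ the root. If Spoiler extends on the left, choosing $s' \rcvr s$, this is a one-step extension in $\Rkp(\As,a)$ of the branch through $f_1(r) = s$; openness (path-lifting) of $f_1$ yields $r' \rcvr r$ in $T$ with $f_1(r') = s'$, and Duplicator answers $t' := f_2(r')$. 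Since $f_2$ is a pathwise embedding we have $t' \rcvr t$, and the new position lies over $r'$, hence in $\WABCp$. The case where Spoiler extends on the right is symmetric, using openness of $f_2$. This defines a winning strategy.

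For the converse, from a winning strategy we build the span. Let $T$ be the tree whose elements are the positions reachable in plays consistent with Duplicator's strategy, ordered by the extension relation on plays, with projections $\pi_1(s,t) = s$ and $\pi_2(s,t) = t$. Interpreting the relations of $\sg$, together with the equality relation $I$ as in section~\ref{Imorsec}, on $T$ by pullback along the pair $(\pi_1,\pi_2)$ makes $T$ a $\sgp$-structure and the projections morphisms. One then checks that $T$ so equipped is a generated tree cover of the correct height, hence an object of $\TSk$; that each $\pi_i$ is a pathwise embedding, since along any single branch the position always lies in $\WABCp$, forcing the projection of that branch to be injective and relation-preserving in both directions; and that each $\pi_i$ is open, since path-lifting for $\pi_1$ is exactly the ability of Duplicator to answer a left-move of Spoiler, and dually for $\pi_2$. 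This produces the required span, establishing $\eqbHk$.

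The main obstacle is this converse direction: verifying that the tree of reachable positions is a legitimate object of $\TSk$ — in particular that it is a \emph{generated} tree cover of bounded height, and that the $I$-relation correctly records coincidences of played elements, so that the projections are genuine path embeddings and not merely tree morphisms. This is precisely where the $J$-relative, $I$-labelled setup of section~\ref{Imorsec} does its work, and where the \emph{back-and-forth} (rather than merely forth) character of the strategy is needed, in order to obtain openness of both legs of the span simultaneously.
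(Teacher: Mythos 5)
Your proposal is correct and takes essentially the same approach as the paper: the paper's proof simply defers to the corresponding construction for $\Ek$ \cite[Theorem 10.1]{abramsky2021relating} and to the general arboreal result \cite[Theorem 6.9]{abramsky2021arboreal}, and the argument you spell out --- answering Spoiler by path-lifting along one leg of the span, and conversely building the span from the tree of strategy-reachable positions with relations pulled back along the two projections --- is exactly that argument, instantiated for $\Hk$. The delicate points you flag (generatedness of the position tree, bounded height, and the $I$-relation bookkeeping, all secured by the common-path form of the winning condition $\WABCp$) are precisely what the paper's ``minor variation'' of the $\Ek$ proof has to check.
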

\begin{proof}
The proof is a minor variation of that for \cite[Theorem 10.1]{abramsky2021relating}, the corresponding result for $\Ek$. Alternatively, this is an instance of  the very general \cite[Theorem 6.9]{abramsky2021arboreal}.
\end{proof}

The standard  $k$-round Ehrenfeucht-\Fraisse game between $\As$ and $\Bs$ \cite{Libkin2004} is defined as follows. In each round $i$, Spoiler moves by either
\begin{itemize}
\item choosing an element $a_i$ from $A$, to which Duplicator responds by choosing an element $b_i$ from $B$; or
\item choosing an element $b_i$ from $B$, to which Duplicator responds by choosing an element $a_i$ from $A$.
\end{itemize}
The winning condition for Duplicator is that the correspondence $a_i \mapsto b_i$ is a partial isomorphism from $\As$ to $\Bs$.

The $k$-round back-and-forth hybrid game between  $(\As,a)$ and $(\Bs,b)$ is defined in exactly the same way, with two additional provisos:
\begin{itemize}
\item At round $0$, Spoiler must either play $a_0 = a$, to which Duplicator must respond with $b_0 = b$; or $b_0 = b$, to which Duplicator must respond with $a_0 = a$
\item At round $j>0$, if Spoiler plays a move $a_{j} \in A$ then, for some $i<j$, $\EA(a_i,a_j)$; while if Spoiler plays a move $b_{j} \in B$ then, for some $i<j$, $\EB(b_i,b_j)$.
\end{itemize}
The partial isomorphism winning condition ensures that Duplicator is subject to the same constraints.

We write $\HLk$ for the set of hybrid formulas of modal depth $k$.
We define an equivalence relation on pointed structures by:
\[ (\As,a) \Hequivk (\Bs,b) \; \stackrel{\Delta}{\Longleftrightarrow} \; \forall \vphi \in \HLk. \, [\Asa \models \vphi \IFF \Bsb \models \vphi]. \]
\begin{theorem}
\label{Hequivth}
Let $\sg$ be a finite unimodal vocabulary.
For all  $(\As,a)$, $(\Bs,b)$ in $\CSp$, the following are equivalent:
\begin{enumerate}
\item $(\As,a) \eqbHk (\Bs,b)$.
\item Duplicator has a winning strategy for the $k$-round back-and-forth hybrid game between  $(\As,a)$ and $(\Bs,b)$.
\item $(\As,a) \Hequivk (\Bs,b)$.
\end{enumerate}
\end{theorem}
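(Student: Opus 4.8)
The plan is to establish $(1)\IFF(2)$ by reducing to the immediately preceding theorem, and $(2)\IFF(3)$ by an Ehrenfeucht--Fra\"iss\'e argument tailored to the hybrid syntax.

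For $(1)\IFF(2)$, the preceding theorem already identifies $\eqbHk$ with the existence of a Duplicator winning strategy for the abstract game $\Gk(\Asa,\Bsb)$, so it is enough to see that $\Gk$ \emph{is} the concrete $k$-round back-and-forth hybrid game once positions are unwound. A position $(s,t)\in\Hk\Asa\times\Hk\Bsb$ reads as a pair of plays $s=\langle a_0,\dots,a_l\rangle$, $t=\langle b_0,\dots,b_m\rangle$; a cover move $s'\rcvr s$ is precisely the extension of $s$ by an element seen by some earlier element, i.e.\ a legal Spoiler move, and dually for Duplicator. The one substantive point is that $\WABCp$ agrees with the partial-isomorphism winning condition. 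I would argue this by noting that a path embedding $e_1:P\embed\Hk\Asa$ carries the branch of $P$ below $p$ bijectively onto the prefixes of $s$ (and likewise $e_2$ onto those of $t$), forcing $l=m$; and that, since any two prefixes are comparable in $\Hk\Asa$, matching the $\sg\cup\{I\}$-relations pulled back along $e_1$ and $e_2$ yields exactly $a_i=a_j\IFF b_i=b_j$ (from $I$), $P^{\As}(a_i)\IFF P^{\Bs}(b_i)$, and $E^{\As}(a_i,a_j)\IFF E^{\Bs}(b_i,b_j)$ for all $i,j$. These are precisely the clauses asserting that $a_i\mapsto b_i$ is a partial isomorphism, so $\WABCp$ and the concrete winning condition coincide and the two games are the same.

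For $(2)\IFF(3)$ I would carry the already-played worlds along as names. Writing $\vec a=(a_0,\dots,a_j)$, $\vec b=(b_0,\dots,b_j)$ for the elements played after $j$ rounds, define $(\As,\vec a)\equiv_d(\Bs,\vec b)$ to mean that $\As$ and $\Bs$ agree on every hybrid formula of modal depth $\leq d$ with free world variables among $x_0,\dots,x_j$, under the assignments $x_i\mapsto a_i$ and $x_i\mapsto b_i$ (the operator $@_{x_i}$ letting us evaluate at any named world). The key lemma, proved by induction on $d$, is that Duplicator wins the remaining $d$-round game from the position $\{(a_0,b_0),\dots,(a_j,b_j)\}$ iff $(\As,\vec a)\equiv_d(\Bs,\vec b)$; instantiating $j=0$, $d=k$, $\vec a=(a)$, $\vec b=(b)$ then gives $(2)\IFF(3)$, once one observes that a closed depth-$k$ formula evaluated at the base world is captured by the parametrised notion with $x_0$ naming that world (bind it with $\da x_0$ at the outset). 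The base case $d=0$ holds because depth-$0$ formulas over named worlds are boolean combinations of $@_{x_i}p$, $@_{x_i}x_l$ and $@_{x_i}\Diamond x_l$ --- testing unary predicates, equalities and single edges among the named worlds, recalling that $\Diamond x$ has depth $0$ --- so agreement on them is exactly the partial-isomorphism condition required of a winning position with no rounds left.

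The inductive step splits into two directions. For ``$\equiv_d\Rightarrow$ Duplicator wins'' I would use finiteness of $\sg$: up to logical equivalence there are only finitely many depth-$\leq d-1$ hybrid formulas over $x_0,\dots,x_{j+1}$, so each $\equiv_{d-1}$-type is pinned down by a single characteristic formula $\chi(x_0,\dots,x_{j+1})$. If Spoiler extends $\vec a$ by $a_{j+1}$ with $E^{\As}(a_i,a_{j+1})$, then the depth-$d$ formula $@_{x_i}\Diamond\,\da x_{j+1}.\,\chi$ holds in $\As$, hence by $\equiv_d$ in $\Bs$, furnishing $b_{j+1}$ with $E^{\Bs}(b_i,b_{j+1})$ and $(\As,\vec a,a_{j+1})\equiv_{d-1}(\Bs,\vec b,b_{j+1})$, so the induction hypothesis supplies a winning continuation; Spoiler playing in $\Bs$ is symmetric. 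For ``Duplicator wins $\Rightarrow\equiv_d$'' I would run a subsidiary induction on formula structure in which the boolean, $@$ and $\da$ cases leave the game position unchanged, while each $\Diamond$ (dually $\Box$) is met by invoking Duplicator's response to the corresponding Spoiler move and then applying the outer induction hypothesis at $d-1$. I expect the characteristic-formula construction to be the main obstacle: one must check that the $\equiv_{d-1}$-type of a named configuration is definable by one hybrid formula of the intended depth, which hinges both on the finiteness of $\sg$ and on the precise grading whereby $\da$, $@$ and the edge-tests $\Diamond x$ add nothing to modal depth while only $\Box$ and $\Diamond$ increment it --- so that a single game round matches exactly one unit of depth. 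The remaining cases are routine bookkeeping.
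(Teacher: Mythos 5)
Your proposal is correct, and it is worth comparing routes, because the paper itself never spells out a proof of this theorem: it relies on the preceding abstract-game theorem for $(1)\Leftrightarrow(2)$ (proved by citation to \cite[Theorem 10.1]{abramsky2021relating}), and for the game--logic equivalence it develops the argument in the appendix, but for the \emph{bounded first-order fragment} rather than for hybrid syntax. Concretely, the appendix proves $\Bequivk$ equivalent to the bounded game via the inductively defined back-and-forth systems $\bfe{k}$ and rank-$k$ type formulas, adapting \cite[Theorem 3.18]{Libkin2004}; the hybrid statement $(2)\Leftrightarrow(3)$ is then implicitly obtained by passing through the standard translation and the equiexpressiveness of hybrid logic with the bounded fragment \cite{areces2001hybrid}, matching hybrid modal depth with quantifier rank. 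You instead argue entirely inside hybrid syntax, carrying the played worlds as named variables and building characteristic formulas from $@$, $\da$ and $\Diamond$; your key formula $@_{x_i}\Diamond\,\da x_{j+1}.\,\chi$ plays exactly the role of the appendix's $\exists x.\, E(c,x)\wedge\alpha(x)$, and your parametrised equivalence on named configurations is the syntactic counterpart of the appendix's relations $\bfe{k}$ on expanded tuples. Your route buys self-containedness: it avoids invoking the equiexpressiveness theorem, and it makes the depth bookkeeping explicit --- the point you rightly flag as the main obstacle does go through, since for finite $\sg$ the depth-bounded hybrid formulas in finitely many named variables translate, up to logical equivalence, to quantifier-rank-bounded first-order formulas and are hence finite in number, so characteristic formulas of the right depth exist. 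The paper's route buys uniformity: one appendix argument serves simultaneously for this theorem and for Proposition~\ref{Bequivgameprop} over arbitrary bounded vocabularies with constants, where hybrid syntax would have to be replaced by bounded first-order formulas anyway. Your treatment of $(1)\Leftrightarrow(2)$ --- unwinding cover moves into legal hybrid-game moves and showing that $\WABCp$ coincides with the partial-isomorphism condition, with the $I$-relation supplying the equality clauses --- fills in an identification the paper leaves entirely implicit, and is correct as stated.
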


\section{The general bounded case}
\label{bdfragsec}

We now turn to a treatment of the bounded comonad $\Bk$, corresponding to the bounded fragment of first-order logic. This generalises the hybrid comonad in two directions:
\begin{itemize}
\item We allow for constants in the vocabulary.
\item We also allow arbitrary relational vocabularies.
\end{itemize}
Allowing for constants $c_1,\ldots,c_m$ involves working with the $m$-pointed category $\CSm$. This has objects $(\As, \va)$, where $\va = \langle a_1, \ldots , a_m \rangle \in A^m$.
Morphisms $h : (\As,\va) \to (\Bs, \vb)$ must preserve these tuples. The intention is that $a_i = c_i^{\As}$. Note that $\CSp = \mathsf{Struct}_{1}(\sg)$.

The comonad $\Ek$ extends straightforwardly to $\CSm$. The idea is that the Ehrenfeucht-\Fraisse game on $(\As,\va)$ is now played with the proviso that the first $m$ elements are predetermined to be the tuple $\va$. Thus the universe of $\Ek (\As,\va)$ comprises non-empty sequences $s$ of length $\leq k+m$, such that $s \comp \va$. The $\sg$-relations are lifted to $\Ek (\As,\va)$,  and the counit $\epsA$ and coKleisli extension $f^*$ are defined, in the exactly the same way as previously. It is straightforward to see that this gives a well-defined comonad on $\CSm$, for any relational vocabulary $\sg$.

The generalization to arbitrary vocabularies involves a slight subtlety, since we wish to control which relations can be used to bound quantifiers. We define a \emph{bounded vocabulary} to be a pair $(\sg,\sgT)$, where $\sg$ is an arbitrary relational vocabulary, and $\sgT \subseteq \sg$ is a sub-vocabulary consisting only of binary relations, which we refer to as \emph{transition relations}. Note that a unimodal vocabulary is the special case where $\sgT = \{ E \}$, and $\sg = \sgT \cup \sg_{1}$, where $\sg_{1}$ is a set of unary predicates.

The category $\CSmT$ is defined in exactly the same fashion as $\CSm$. This may seem to render the notion of bounded vocabulary redundant, but in fact there is some indexed structure at play: we have an indexed family of categories, and an indexed family of comonads defined on them, and the comonad constructions can make use of the additional information given by the specification of $\sgT$.

The $\Ek$ comonad is defined on $\CSmT$ in exactly the same fashion as on $\CSm$. We define the $\Bk$ comonad as follows. The universe of $\Bk (\As,\va)$ is obtained by restricting the universe of $\Ek (\As,\va)$ to those sequences $s = \langle a_1, \ldots , a_l \rangle$ such that, for all $j: m < j \leq l$, for some $i: 1 \leq i < j$, $\EA(a_i,a_j)$ for some transition relation $E$. We then define $\Bk (\As,\va)$ as the induced substructure of $\Ek (\As,\va)$ determined by this restriction of the universe. The definition of counit and coKleisli extension carry over from $\Ek$, in the same way as for the hybrid comonad $\Hk$. Thus we obtain the analogue of Proposition~\ref{Hkprop}:
\begin{proposition}
The triple $(\Bk, \ve, (\cdot)^{*})$ is a comonad in Kleisli form.
\end{proposition}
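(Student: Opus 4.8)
The plan is to proceed exactly as in the proof of Proposition~\ref{Hkprop}, exploiting the fact that $\Ek$ is already known to be a comonad on $\CSmT$ and that $\Bk(\As,\va)$ is by construction the induced substructure of $\Ek(\As,\va)$ obtained by restricting the universe, with counit and coKleisli extension inherited from $\Ek$. Under this setup, verifying the Kleisli-form comonad data amounts to three things: that $\epsA$ restricts to a morphism $\Bk\As \to \As$; that the coKleisli extension $h^{*}$ of a morphism $h : \Bk(\As,\va)\to(\Bs,\vb)$ is again a morphism $\Bk(\As,\va)\to\Bk(\Bs,\vb)$; and that the three equations $\epsA \circ h^{*} = h$, $\epsA^{*} = \id_{\Bk\As}$, and $(g \circ h^{*})^{*} = g^{*} \circ h^{*}$ hold.

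For the counit and the three equations, I would observe that each is obtained by restriction from the corresponding fact for $\Ek$. Since $\Bk(\As,\va)$ is an induced substructure of $\Ek(\As,\va)$ with the same elements-as-sequences, $\epsA$ is simply the restriction of the $\Ek$-counit, hence a morphism; and the three Kleisli equations are identities between functions computed on sequences exactly as for $\Ek$, so they continue to hold verbatim once we know both sides land in the relevant $\Bk$-substructures. None of these computations refer to the membership condition defining $\Bk$, so they transfer without change from \cite[Proposition 3.1]{abramsky2021relating}.

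The only genuinely new point --- and the main obstacle --- is to check that $h^{*}$ cuts down to the $\Bk$-substructure, \ie that $h^{*}$ sends a sequence in $\Bk(\As,\va)$ to a sequence in $\Bk(\Bs,\vb)$ rather than merely in $\Ek(\Bs,\vb)$. Here I would fix $s = \langle a_1,\ldots,a_l\rangle \in \Bk(\As,\va)$ and consider any position $j$ with $m < j \le l$; by definition of the $\Bk$-universe there is some $i$ with $1 \le i < j$ and some transition relation $E \in \sgT$ with $\EA(a_i,a_j)$. Writing $s_i \preford s_j$ for the prefixes of $s$ of lengths $i$ and $j$, the definition of relation lifting in $\Ek$ gives $E^{\Ek\As}(s_i,s_j)$, since $s_i \comp s_j$ and $\EA(\epsA(s_i),\epsA(s_j)) = \EA(a_i,a_j)$. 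Because $h$ is a homomorphism of $\sg$-structures and $E \in \sgT \subseteq \sg$, it preserves $E$, so the images $h(s_i), h(s_j)$ --- which are precisely the $i$-th and $j$-th entries of $h^{*}(s)$ --- satisfy $\EB$. As $j$ was arbitrary, $h^{*}(s)$ meets the defining condition of $\Bk(\Bs,\vb)$. This is the exact analogue of the corresponding step for $\Hk$, the only refinement being that the restriction is governed by the designated transition sub-vocabulary $\sgT$; since homomorphisms preserve every relation of $\sg$, and \emph{a fortiori} those of $\sgT$, the argument is unaffected by the move to a general bounded vocabulary. With this established, the proof concludes as for $\Hk$.
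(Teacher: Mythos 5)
Your proposal is correct and takes essentially the same route as the paper: the paper obtains this proposition as the direct analogue of Proposition~\ref{Hkprop}, with counit, coextension, and the Kleisli equations carried over from $\Ek$, the only new point being that $h^{*}$ cuts down to the $\Bk$-substructure, which follows from the relation lifting and the fact that $h$ is a homomorphism. You simply spell out explicitly (via the prefixes $s_i \preford s_j$ and preservation of $E \in \sgT$) the step the paper declares to follow immediately, which is a faithful elaboration rather than a different argument.
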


Bounded vocabularies allow a more systematic treatment of $I$-morphisms. Given a bounded vocabulary $(\sg,\sgT)$, we define $(\sg, \sgT)^{+} = (\sgp,\sgT)$. This makes explicit that the $I$ relation is not a transition relation. This operation also takes bounded vocabularies to bounded vocabularies, whereas it is not closed on unimodal vocabularies.
Thus we can define $\BkI$ uniformly as the bounded comonad defined on $\CSmT^{+}$; and the $J$-relative bounded comonad $\Bkp = \BkI J$.

Turning to the connection with logic, we are concerned with the bounded fragment of first-order logic over a bounded vocabulary $(\sg,\sgT)$, and constants $c_1,\ldots,c_m$.
Terms are constants or variables. Formulas are formed according to the following syntax:
\[ \psi \;\; ::=\;\; R(t_1, \ldots , t_n) \mid t = u \mid \neg \psi \mid \psi \wedge \psi' \mid  \psi \vee \psi' \mid \forall x. [E(t,x) \to \psi] \mid \exists x. [E(t,x) \wedge \psi]. \]
Here $R$ is a relation of arity $n$ in $\sg$, and $E$ is a transition relation in $\sgT$. The last two clauses are subject to the stipulation that $x \neq t$.

Classical contexts for bounded quantification include set theory \cite{barwise2017admissible}: $\forall x. (x \in y) \to \psi$, and bounded arithmetic \cite{buss1985bounded}: $\forall x. (x < y) \to \psi$. For a discussion of the bounded fragment, and comparison with guarded fragments, see \cite{van2005guards}.
A systematic study of properties of this fragment was made in \cite{feferman1966persistent,feferman1968persistent}. It is worth noting that in the formalism used in \cite{feferman1968persistent}, there is, implicitly, a designated ``transition relation''.

We can extend our translation of hybrid logic into the bounded fragment from section~\ref{hlsec} to include \emph{nominals} \cite{blackburn2000representation,areces2001hybrid}. We identify the nominals syntactically with the first-order constants $c_i$. The syntax of hybrid logic is extended with nominals as atoms, and as arguments to the evaluation modality. The translation is extended with the following clauses:
\[ \begin{array}{lcl}
\ST_{x}(c_i) & = & x = c_i \\
\ST_{x}(@_{c_i} \vphi) & = & \ST_{x}(\vphi)[c_i/x].
\end{array}
\]
We write $\BF$ for the sentences of the bounded fragment, and $\BFp$ for the existential positive part, obtained by omitting negation and universal quantification from the formation rules. We write $\BFk$, $\BFpk$  for the restriction of these fragments to sentences of quantifier rank $\leq k$. 

The fragment $\BFpk$ induces a preorder on objects of $\CSmT$:
\[ (\As,\va) \BFPreord_k (\Bs,\vb) \; \stackrel{\Delta}{\Longleftrightarrow} \; \forall \psi \in \BFpk. \, [\Ava \models \psi \IMP \Bvb \models \psi]. \]
To characterise this preorder, we generalise the existential hybrid game to the bounded setting.
The $k$-round existential bounded game from $(\As,\va)$ to $(\Bs,\vb)$ is played in the same fashion as the $k+m$-round existential Ehrenfeucht-\Fraisse game, with two additional provisos:
\begin{itemize}
\item At round $i : 1 \leq i \leq m$, Spoiler must play $a_i$ from the tuple $\va$, and Duplicator must respond with $b_i$ from $\vb$.
\item At round $j>m$, Spoiler must play a move $a_{j}$ such that, for some $1 \leq i <j$, $\EA(a_i,a_j)$ for some transition relation $E$.
\end{itemize}
\begin{proposition}
Let $(\sg, \sgT)$ be a finite bounded vocabulary.
For all  $(\As,\va)$, $(\Bs,\vb)$ in $\CSmT$, the following are equivalent:
\begin{itemize}
\item $(\As,\va) \BFPreord_k (\Bs,\vb)$
\item Duplicator has a winning strategy for the $k$-round existential bounded game from $(\As,\va)$ to $(\Bs,\vb)$.
\end{itemize}
\end{proposition}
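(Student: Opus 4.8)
The plan is to prove this as the bounded-fragment analogue of the classical Ehrenfeucht--Fra\"iss\'e characterisation of the existential positive fragment by a one-sided game, adapting the standard argument so that the two distinctive features of the bounded setting line up. The syntactic restriction that every quantifier is guarded by a transition atom $E(t,x)$ matches exactly the legality constraint on Spoiler's moves at rounds $j > m$; and the \emph{partial homomorphism} winning condition matches the fact that $\BFpk$ is a positive fragment, so that satisfaction is transferred in the forward, one-directional sense. Throughout, the $m$ initial forced rounds cause no difficulty, since the constants $c_1,\dots,c_m$ are terms and the starting position simply has the tuples $\va,\vb$ already played.

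For the implication from strategies to the preorder, I would fix a winning strategy for Duplicator and show, by induction on the structure of $\psi \in \BFpk$, that $\Ava \models \psi \IMP \Bvb \models \psi$, playing the game as the induction descends through the quantifiers. For atomic formulas and positive equalities, the claim follows from the winning condition, using that a partial homomorphism is closed under restriction, so that the correspondence at the current position is already a partial homomorphism; equalities are handled because a partial homomorphism is in particular a well-defined function, forcing repeated elements to matching responses (this is the game-level manifestation of the $I$-relation of Section~\ref{Imorsec}). Conjunctions and disjunctions are immediate, the key point being that they consume no rounds, so both conjuncts may be treated from the same position. For an existential $\exists x.[E(t,x)\wedge\chi]$ holding at the current position, there is a witness $a$ with $\EA(t^{\As},a)$; since $t$ denotes a constant or a previously played element, Spoiler may legally play $a$, and the winning strategy returns some $b$ with $\EB(t^{\Bs},b)$, from which the induction hypothesis applied to $\chi$ finishes the case.

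For the converse I would construct characteristic (Hintikka-style) formulas. By induction on $j$, define $\chi^{0}_{(\As,\va)}$ to be the conjunction of all atomic facts (relation atoms and equalities over the played tuple) true at $\va$, and $\chi^{j+1}_{(\As,\va)}$ to be $\chi^{0}_{(\As,\va)}$ conjoined with, for each transition relation $E$, each index $i$, and each $E$-successor $a$ of the $i$-th component, the formula $\exists x.[E(t_i,x)\wedge\chi^{j}_{(\As,\va a)}]$, where $t_i$ names the $i$-th component and $a$ is named by $x$ inside the recursion. One then checks that $\Ava \models \chi^{j}_{(\As,\va)}$, that each such formula lies in $\BFp$ and has quantifier rank $\leq j$, and, by a second induction on $j$, that $\Bvb \models \chi^{j}_{(\As,\va)}$ implies Duplicator wins the $j$-round game from $(\va,\vb)$: at round $j+1$ the conjunct matching Spoiler's chosen successor supplies Duplicator's response together with the invariant needed to recurse. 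Applying $(\As,\va)\BFPreord_k(\Bs,\vb)$ to $\chi^{k}_{(\As,\va)}$ then yields the required strategy.

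The main obstacle is ensuring that these characteristic formulas are genuine (finite) formulas even when $\As$ is infinite, since the inner conjunctions range over all $E$-successors $a$, of which there may be infinitely many. This is resolved exactly as in the classical case: because $\sg$ is finite and the rank is bounded by $j$, there are only finitely many formulas of $\BFp$ of rank $\leq j$ up to logical equivalence, so it suffices to retain one representative successor per equivalence class of $\chi^{j}_{(\As,\va a)}$, making each conjunction finite. This is precisely the point at which the finiteness hypothesis on $(\sg,\sgT)$ is used, and it is the only essential difference from the corresponding argument for the existential hybrid game.
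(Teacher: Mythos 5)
Your proof is correct, but note what it is being compared against: the paper states this proposition without proof, and its appendix only establishes the analogous equivalences for the \emph{full} bounded fragment (the back-and-forth Ehrenfeucht-\Fraisse game) and for counting logic (the bijection game). Your argument is precisely the one-sided specialisation of the technique used there: one direction by induction on formulas, playing the game as the quantifiers are peeled off; the other by characteristic (Hintikka/Scott-style) formulas, made finite by the fact that a finite vocabulary admits only finitely many inequivalent formulas of bounded quantifier rank in a bounded number of free variables. You correctly drop the clause of the appendix's Scott formulas asserting that \emph{only} the listed successor types occur --- that clause is needed for the biconditional, partial-isomorphism setting, not for the existential positive fragment with its partial-homomorphism winning condition --- and you correctly identify the two points of friction specific to this setting: that the guard term $t$ always denotes a constant or a previously played element, so that Spoiler's witness move is legal at rounds $j > m$; and that positive equality atoms are preserved because a partial homomorphism is in particular a well-defined function, which is the game-level counterpart of the $I$-relation of Section~\ref{Imorsec}. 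So your proposal supplies a complete proof where the paper leaves one implicit, by the same method the paper applies to its neighbouring results.
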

We can now characterise the preorder in terms of the bounded comonad.
We define $(\As,\va) \rightarrow^{\mathbb{B}}_{k} (\Bs,\vb)$ iff there is a coKleisli morphism  $h : \Bkp (\As,\va) \to J (\Bs,\vb)$.
\begin{theorem}
Let $(\sg, \sgT)$ be a finite bounded vocabulary.
For all  $(\As,\va)$, $(\Bs,\vb)$ in $\CSmT$:
\[  (\As,\va) \BFPreord_k (\Bs,\vb) \; \IFF \;  (\As,\va) \rightarrow^{\mathbb{B}}_{k} (\Bs,\vb) . \]
\end{theorem}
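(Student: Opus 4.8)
The plan is to route through the $k$-round existential bounded game and chain two equivalences, exactly paralleling the argument used for the hybrid comonad. The Proposition immediately preceding the theorem already supplies one half: $(\As,\va) \BFPreord_k (\Bs,\vb)$ holds iff Duplicator has a winning strategy in the $k$-round existential bounded game from $(\As,\va)$ to $(\Bs,\vb)$. Since $\BFPreord_k$ and $\rightarrow^{\mathbb{B}}_k$ both assert mere existence (of formula preservation, respectively of a coKleisli morphism), it then suffices to establish the other half, namely that Duplicator wins this game iff there is a coKleisli morphism $h : \Bkp(\As,\va) \to J(\Bs,\vb)$, i.e. iff $(\As,\va)\rightarrow^{\mathbb{B}}_k(\Bs,\vb)$. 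Combining the two halves by transitivity yields the theorem.

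For this remaining half I would prove the bounded analogue of the game--comonad correspondence established earlier for $\Hkplus$, by essentially the same argument. Unwinding the definition, a coKleisli morphism $h : \Bkp(\As,\va) \to J(\Bs,\vb)$ is a $\sgp$-homomorphism $\BkI J(\As,\va) \to J(\Bs,\vb)$ that preserves the distinguished tuple; as a function it assigns to each element $s = \langle a_1,\ldots,a_l\rangle$ of the universe of $\Bk(\As,\va)$ a response in $\Bs$. The two provisos defining the existential bounded game are matched \emph{exactly} by the definition of $\Bk$: the requirement that the first $m$ rounds realise $\va$ corresponds to the constraint $s \comp \va$ inherited from $\Ek(\As,\va)$ together with preservation of the distinguished tuple, and Spoiler's reachability proviso at rounds $j>m$ (that $\EA(a_i,a_j)$ for some transition relation $E$ and some $i<j$) is precisely the restriction of the universe cutting $\Bk$ out of $\Ek$. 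Hence a legal play is the same thing as an element of the universe of $\Bk(\As,\va)$, a Duplicator strategy is the same thing as a choice of response for each such play, i.e. a function $h$, and preservation of the $\sg$-relations says exactly that Duplicator's responses along each play form a partial homomorphism --- the winning condition. Note that finiteness of $(\sg,\sgT)$ is \emph{not} needed here; it enters only through the cited logic--game Proposition, where the positive type of each play must be expressible as a single $\BFpk$ formula.

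The one point requiring genuine care --- and the main obstacle --- is equality, i.e. the role of $I$ and of working $J$-relatively. The partial-homomorphism winning condition demands that repeated occurrences of a single element of $A$ within a play receive a common response, which a bare $\sg$-homomorphism need not guarantee. This is exactly why the codomain is $J(\Bs,\vb)$, interpreting $I$ as the identity on $B$, and why the comonad is the $J$-relative $\Bkp = \BkI J$: by the analysis of section~\ref{Imorsec}, $I^{\BkI J(\As,\va)}(s,t)$ holds iff $s \comp t$ and $\ve(s) = \ve(t)$, so preservation of $I$ forces $h$ to give a common response to comparable plays sharing a focus, which is precisely the equality constraint. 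Here the stipulation that only transition relations restrict the universe is essential: in $(\sg,\sgT)^{+} = (\sgp,\sgT)$ the relation $I$ is declared non-transitional, so augmenting the vocabulary does not alter which plays are legal, only how responses are constrained. The substantive bookkeeping is to verify that preservation of $I$ across \emph{all} comparable pairs of plays is equivalent to Duplicator respecting equality within each play, reconciling the per-path picture (where $I$ is literally the identity on a path) with the global one; once this is in place, the bijective correspondence, and hence the theorem, follows as above.
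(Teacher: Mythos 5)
Your proposal is correct and takes essentially the same route as the paper: the paper leaves this theorem's proof implicit, it being exactly the composite of the preceding Proposition (logic preorder iff Duplicator wins the existential bounded game) with the bounded analogue of the strategies-as-coKleisli-morphisms correspondence proved earlier for $\Hkplus$, carried over with the same $J$-relative treatment of $I$ to enforce equality. Your supporting observations — that finiteness of the vocabulary is needed only on the logical side, and that only transition relations (not $I$) restrict the universe of the comonad — likewise match the paper's remarks.
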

To extend this result to the full bounded fragment, once again we study the coalgebras of the comonad. 

A \emph{tree cover} of $(\As,\va)$ is a tree order $(A, {\leq})$ such that the restriction of the order to $\{ a_1,\ldots, a_m\}$ is the chain $a_1 \cvr \cdots \cvr a_m$, and the restriction to $A \setminus \{ a_1, \ldots , a_{m-1} \}$ is a tree with root $a_m$. Moreover, if $a \adj a'$, then $a \comp a'$.
The tree cover is \emph{generated} if for all $a \in A$ with $a > a_m$, for some $a' \in A$, $a' \leq a$ and $\EA(a',a)$ for some transition relation $E$.
The height of the tree cover is $\hgt(A, {\leq}) - m$.

We define a category $\TSb$ with objects $(\As,\va,{\leq})$, where $(\As,\va)$ is an $m$-pointed $\sg$-structure, and $\leq$ is a generated tree cover of $(\As,\va)$. Morphisms $h : (\As,\va,{\leq}) \to (\Bs,\vb,{\leq'})$ are morphisms of $m$-pointed $\sg$-structures which are also tree morphisms. For each $k>0$, there is a full subcategory $\TSbk$ determined by those objects whose covers have height $\leq k$.
\begin{theorem}
For each $k>0$, $\TSbk$ is isomorphic to $\EM(\Bk)$.
\end{theorem}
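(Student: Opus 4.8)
The plan is to follow the template already used for Theorem~\ref{TSkthm} in the hybrid case (itself modelled on the treatment of $\Ek$ in \cite{abramsky2021relating}): I would exhibit mutually inverse assignments between $\Bk$-coalgebras on $(\As,\va)$ and generated tree covers of $(\As,\va)$ of height $\leq k$, and then check that this bijection is functorial, carrying coalgebra morphisms to tree morphisms and back.

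\emph{From tree covers to coalgebras.} Given an object $(\As,\va,{\leq})$ of $\TSbk$, I would define $\alpha : (\As,\va) \to \Bk(\As,\va)$ by sending each $a \in A$ to the increasing enumeration $\langle a'_1, \ldots, a'_l, a\rangle$ of its down-set $\da(a)$, which is a finite chain since ${\leq}$ is a tree order. That $\alpha(a)$ lands in the universe of $\Bk(\As,\va)$ reduces to three points: the base-tuple clause of the tree cover forces $\alpha(a) \comp \va$; the height bound $\hgt(A,{\leq}) - m \leq k$ caps the length at $k+m$; and the \emph{generated} condition is exactly what provides, for each element above $a_m$, a predecessor witnessing the bounded-quantifier constraint defining the universe of $\Bk$. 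That $\alpha$ is a $\sg$-homomorphism into $\Bk(\As,\va)$ uses the Gaifman condition: any tuple of a relation $R$ consists of pairwise-adjacent elements, hence pairwise ${\leq}$-comparable, so their images are pairwise $\preford$-comparable and carry $R$ on their last coordinates — this is precisely the relation lifting in $\Ek$ (and $\Bk$), and it handles relations of arbitrary arity uniformly. The counit law $\ve \circ \alpha = \id$ is immediate since $a = \max \da(a)$, and the comultiplication law follows because the length-$i$ prefix of $\alpha(a)$ equals $\alpha(a')$ for the predecessor $a' \leq a$ at depth $i$, so both $\delta_{\As} \circ \alpha$ and $\Bk\alpha \circ \alpha$ send $a$ to the sequence of these prefixes.

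\emph{From coalgebras to tree covers.} Given a coalgebra $\alpha$, I would define $a \leq a'$ iff $\alpha(a) \preford \alpha(a')$. The counit law makes $\alpha$ a section, hence injective, so prefix-comparability transports a genuine partial order to $A$; since the prefixes of a fixed sequence are linearly ordered, down-sets are chains, giving a forest, and preservation of the base tuple pins down the chain $a_1 \cvr \cdots \cvr a_m$ and the root $a_m$. The height bound, the Gaifman condition, and the generated condition then read off, respectively, from the length bound on $\Bk$, from the comparability requirement in the relation lifting, and from the universe restriction of $\Bk$. Checking that the two assignments are mutually inverse is routine once one observes, from the comultiplication law, that every prefix of $\alpha(a)$ is again of the form $\alpha(a')$. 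Finally, for functoriality I would verify that a structure-and-tree morphism $h$ makes the coalgebra square $\Bk h \circ \alpha = \beta \circ h$ commute iff it preserves the root and the covering relation, which is a direct unwinding of the two descriptions of $h$ on down-sets.

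The main obstacle — or rather the only genuinely new content beyond Theorem~\ref{TSkthm} — is confirming that these correspondences survive the three generalisations simultaneously: the $m$ predetermined base points (handled by the chain-and-root clause of the tree cover), relations of arbitrary arity (handled uniformly by the comparability clause of the $\Ek$ relation lifting together with the Gaifman condition), and the family $\sgT$ of transition relations (handled by reading ``sees'' as the disjunction $\bigvee_{E \in \sgT} \EA$ in both the generated condition and the universe restriction). Once each comonad clause is matched with an exact tree-cover counterpart under these generalisations, the coalgebra-law and functoriality verifications are identical in form to the hybrid case.
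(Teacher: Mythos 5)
Your proposal is correct and takes essentially the same route as the paper: the paper states this theorem without spelling out a proof, relying on the coalgebras-as-generated-tree-covers correspondence already used for Theorem~\ref{TSkthm} and for $\Ek$ in \cite{abramsky2021relating}, and your sketch instantiates exactly that template (down-set enumeration one way, prefix order the other, comultiplication law yielding closure of the image under prefixes, functoriality via pointwise action on down-sets). Your explicit handling of the three generalisations --- the $m$ predetermined base points via the chain-and-root clause, arbitrary arities via the Gaifman/comparability clause, and the disjunction over $\sgT$ in the generated condition --- is precisely what the paper's definitions of bounded tree cover and of $\Bk$ are set up to make work.
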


There is an evident forgetful functor $U_k : \TSbk \to \CSmT$ which sends $(\As, \va, {\leq})$ to $(\As,\va)$.
\begin{theorem}
For each $k>0$, $U_k$ has a right adjoint $R_k : \CSmT \to \TSbk$ given by $R_k (\As,\va) = (\Bk (\As,\va), {\preford})$.
The comonad induced by this adjunction is $\Bk$. The adjunction is comonadic.
\end{theorem}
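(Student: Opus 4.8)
The plan is to reduce the statement to a standard fact about Eilenberg--Moore categories and then transport it across the isomorphism of the preceding theorem. For any comonad $G$ on a category $\mathcal{C}$, the forgetful functor $U^{G} : \EM(G) \to \mathcal{C}$ has a right adjoint, namely the cofree functor $F^{G}(X) = (GX, \delta_{X})$; the comonad induced by $U^{G} \dashv F^{G}$ is $G$ itself; and this adjunction is comonadic, the comparison functor being the identity. I would apply this with $G = \Bk$ and $\mathcal{C} = \CSmT$, and then carry the conclusion along the isomorphism $\Phi : \TSbk \to \EM(\Bk)$ supplied by the previous theorem.

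The first thing to check is that $\Phi$ intertwines the forgetful functors, i.e. $U^{\Bk} \circ \Phi = U_k$. This is immediate from the construction of $\Phi$: a generated tree cover $(\As,\va,{\leq})$ is mapped to a $\Bk$-coalgebra whose underlying $m$-pointed structure is again $(\As,\va)$, so discarding the tree order and discarding the coalgebra structure agree. It follows that $U_k$ has a right adjoint exactly when $U^{\Bk}$ does, the right adjoint being $R_k = \Phi^{-1} \circ F^{\Bk}$.

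The only substantive point is to identify this composite with the stated formula $R_k(\As,\va) = (\Bk(\As,\va), {\preford})$. Since $F^{\Bk}(\As,\va) = (\Bk(\As,\va), \delta_{(\As,\va)})$ with $\delta_{(\As,\va)} = \id_{\Bk(\As,\va)}^{*}$, I must show that the generated tree cover assigned by $\Phi^{-1}$ to this cofree coalgebra is the prefix order. Recall from the coalgebra-to-cover half of the previous theorem that the order extracted from a coalgebra $\alpha$ is $s \leq s'$ iff $\alpha(s) \preford \alpha(s')$. For $\alpha = \delta_{(\As,\va)}$ the comultiplication sends a sequence to the list of its initial segments, and this operation is a $\preford$-embedding: $\delta_{(\As,\va)}(s) \preford \delta_{(\As,\va)}(s')$ holds iff $s \preford s'$. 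Hence the cofree tree cover is precisely $\preford$, so $\Phi^{-1} \circ F^{\Bk} = R_k$, as required.

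The remaining claims then come for free. The induced comonad is $\Bk$ because the Eilenberg--Moore adjunction for $\Bk$ induces $\Bk$ and $\Phi$ respects the counits and comultiplications; and comonadicity of $U_k \dashv R_k$ follows from that of $U^{\Bk} \dashv F^{\Bk}$, since the comparison functor for $U_k \dashv R_k$ is $\Phi$ up to the canonical identifications, hence an isomorphism. The main, and only mild, obstacle is the prefix-order computation above; the rest is a transport along $\Phi$ of textbook facts, which is why the argument runs exactly parallel, \emph{mutatis mutandis}, to the one already given for $\Hk$ and the adjunction $U_k \dashv R_k$ in the hybrid case.
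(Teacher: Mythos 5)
Your proposal is correct and is precisely the argument the paper intends: the theorem is left without explicit proof because it follows by transporting the standard Eilenberg--Moore adjunction $U^{\Bk} \dashv F^{\Bk}$ along the isomorphism $\TSbk \cong \EM(\Bk)$ of the preceding theorem, which by construction commutes with the forgetful functors. Your identification of the cofree cover as the prefix order, via $\delta_{(\As,\va)}(s) \preford \delta_{(\As,\va)}(s')$ iff $s \preford s'$, is exactly the one substantive check this route requires.
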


We can now follow the same script as in sections~5.1--5.4 to define open pathwise embeddings in $\TSbk$, and the back-and-forth equivalence $(\As,\va) \eqbBk (\Bs,\vb)$ between structures in $\CSmT$.

To connect back to the bounded fragment, we define the equivalence it induces on $m$-pointed structures:
\[ (\As,\va) \Bequivk (\Bs,\vb) \; \stackrel{\Delta}{\Longleftrightarrow} \; \forall \vphi \in \BFk. \, [\Ava \models \vphi \IFF \Bvb \models \vphi]. \]

The $k$-round back-and-forth bounded game between $(\As,\va)$ and $(\Bs,\vb)$ is played in the same fashion as the $k+m$-round  Ehrenfeucht-\Fraisse game, with two additional provisos:
\begin{itemize}
\item At round $i : 1 \leq i \leq m$, Spoiler must either play $a_i$ from the tuple $\va$, and Duplicator must respond with $b_i$ from $\vb$; or $b_i$ from the tuple $\vb$, and Duplicator must respond with $a_i$ from $\va$.
\item At round $j>m$, if Spoiler plays a move $a_{j} \in A$ then, for some $i <j$, $\EA(a_i,a_j)$ for some transition relation $E$; while if Spoiler plays a move $b_{j} \in B$ then, for some $i<j$, $\EB(b_i,b_j)$.
\end{itemize}
\begin{proposition}
\label{Bequivgameprop}
Let $(\sg, \sgT)$ be a finite bounded vocabulary.
For all  $(\As,\va)$, $(\Bs,\vb)$ in $\CSmT$, the following are equivalent:
\begin{itemize}
\item $(\As,\va) \Bequivk (\Bs,\vb)$
\item Duplicator has a winning strategy for the $k + m$-round back-and-forth bounded game between $(\As,\va)$ and $(\Bs,\vb)$.
\end{itemize}
\end{proposition}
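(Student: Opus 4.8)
The plan is to prove this as an Ehrenfeucht--Fra\"iss\'e-style theorem, in direct analogy with the equivalence (2)$\Leftrightarrow$(3) of Theorem~\ref{Hequivth}, but in the bounded, $m$-pointed setting. First I would reduce the claim to a statement about arbitrary positions of the game. A position reached after the $m$ forced opening rounds and $j$ further rounds is a pair of tuples $\vec a = (a_1,\dots,a_m,a_{m+1},\dots,a_{m+j})$ in $\As$ and $\vec b$ in $\Bs$ of equal length extending $\va$, $\vb$, in which each $a_{m+i}$ (resp.\ $b_{m+i}$) is $E$-reachable from an earlier element for some $E \in \sgT$. The opening $m$ rounds merely pin the distinguished tuples together and contribute no quantifier rank, so the $k+m$-round game from the start coincides with the $k$-round game from the position $(\va,\vb)$; and a bounded sentence of quantifier rank $\le k$ over constants $c_1,\dots,c_m$ is exactly a bounded formula of rank $\le k$ in free variables $x_1,\dots,x_m$ evaluated at $\va$, $\vb$. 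Thus the Proposition is the case $j=0$, $r=k$ of the following lemma, proved by induction on $r$: for every position $(\vec a,\vec b)$ as above, Duplicator wins the $r$-round back-and-forth bounded game from $(\vec a,\vec b)$ if and only if $(\As,\vec a)$ and $(\Bs,\vec b)$ satisfy the same bounded formulas of quantifier rank $\le r$ in the free variables naming $\vec a$, $\vec b$.

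For the base case $r=0$, winning the zero-round game is precisely the partial-isomorphism condition on $\vec a \mapsto \vec b$, which coincides with agreement on all quantifier-free bounded formulas, these being Boolean combinations of atomic formulas. For the inductive step $r\to r+1$, the direction ``Duplicator wins $\Rightarrow$ agreement'' is direct: any rank-$(r+1)$ formula is a Boolean combination of atoms and bounded quantifications $\exists x.[E(t,x)\wedge\theta]$, $\forall x.[E(t,x)\to\theta]$ with $\theta$ of rank $\le r$, so it suffices to transfer a single such formula. If $(\As,\vec a)\models\exists x.[E(t,x)\wedge\theta]$, the witness $a'$ is a legal Spoiler (forth) move, since $\EA(t^{\vec a},a')$ holds; Duplicator's winning response $b'$ then satisfies $\EB(t^{\vec b},b')$, because the maintained partial isomorphism preserves $E$, and by the induction hypothesis $(\As,\vec a a')$ and $(\Bs,\vec b b')$ agree at rank $r$, whence $(\Bs,\vec b)\models\exists x.[E(t,x)\wedge\theta]$. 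The universal case and the symmetric ``back'' transfers are dual, using the two move types of the game.

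For the converse ``agreement $\Rightarrow$ Duplicator wins'', which is the heart of the argument, I would exploit finiteness of the bounded vocabulary. When Spoiler plays a forth move $a'$ with $\EA(a_i,a')$ for a specific $E\in\sgT$ and index $i$, I form the \emph{characteristic (Hintikka) formula} $\tau(\vec x, y)$ of rank $r$ of $a'$ over $(\As,\vec a)$, namely the conjunction of all rank-$\le r$ bounded formulas satisfied by $(\As,\vec a,a')$. Because $\sg$ is finite, for fixed free variables and fixed rank there are, up to logical equivalence, only finitely many bounded formulas, so $\tau$ is equivalent to a single bounded formula of rank $r$. Then $(\As,\vec a)\models\exists y.[E(x_i,y)\wedge\tau]$, a bounded formula of rank $r+1$; note that the bound $E(x_i,y)$ is licensed precisely by the reachability witnessing legality of Spoiler's move. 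By agreement, $(\Bs,\vec b)\models\exists y.[E(x_i,y)\wedge\tau]$, yielding $b'$ with $\EB(b_i,b')$ (so Duplicator's response is itself a bounded move) and $(\Bs,\vec b,b')\models\tau$. Since $\tau$ captures the full rank-$r$ type of $a'$, this forces agreement of $(\As,\vec a a')$ and $(\Bs,\vec b b')$ at rank $r$, so by the induction hypothesis Duplicator wins the remaining $r$ rounds; the back moves are symmetric.

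I expect the main obstacle to be exactly this finiteness and normal-form step: verifying that characteristic formulas for the bounded fragment exist at each rank and, crucially, that the lifting quantifier $\exists y.[E(x_i,y)\wedge\cdot]$ is a well-formed formula of the fragment whose bound matches the game's move-legality constraint. The remaining ingredients --- the handling of Boolean combinations, the automatic boundedness of Duplicator's responses via the partial-isomorphism condition, and the dualization to the universal and ``back'' cases --- are the routine part of the Ehrenfeucht--Fra\"iss\'e induction.
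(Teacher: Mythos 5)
Your proposal is correct and follows essentially the same route as the paper's own proof (given in the appendix, adapting Libkin's Theorem 3.18): an Ehrenfeucht--Fra\"iss\'e-style induction on remaining rounds, using finiteness of the vocabulary to form rank-$r$ characteristic (Hintikka) formulas, and matching the bound $E(t,x)$ of the fragment's quantifiers to the legality constraint on Spoiler's moves. The only difference is bookkeeping: the paper factors the induction through an explicitly defined family of bounded back-and-forth relations $\bfe{k}$ on re-pointed structures (so previously played elements become distinguished points and the bounding terms are always constants), whereas you carry the played tuples as free variables --- these are interchangeable formulations of the same argument.
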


\begin{theorem}
Let $(\sg, \sgT)$ be a finite bounded vocabulary.
For all  $(\As,\va)$, $(\Bs,\vb)$ in $\CSmT$:
\[  (\As,\va) \Bequivk (\Bs,\vb) \; \IFF \;  (\As,\va) \eqbBk (\Bs,\vb) . \]
\end{theorem}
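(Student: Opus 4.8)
The plan is to reduce the statement to the game characterisation of $\Bequivk$ that is already in hand, together with the general arboreal machinery, following exactly the route taken for $\Hequivk$ in Theorem~\ref{Hequivth}. Unwinding the definitions, $(\As,\va) \eqbBk (\Bs,\vb)$ is witnessed by a span of open pathwise embeddings in $\TSbk$, so the first task is to connect this span condition with a concrete game. Since $\TSbk$ is an arboreal category --- it is the bounded instance of the construction that makes $\TSk$ arboreal, set up uniformly in the constants and the transition relations --- the abstract bisimulation-versus-game result \cite[Theorem~6.9]{abramsky2021arboreal} applies: $\eqbBk$ holds if and only if Duplicator has a winning strategy in the comonadic back-and-forth game on $\Bkp$, whose positions are pairs $(s,t) \in \Bk(\As,\va) \times \Bk(\Bs,\vb)$, with moves given by the covering relation on sequences and winning positions picked out by the bounded analogue of $\WABCp$. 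Equivalently, one can appeal directly to the minor variation of \cite[Theorem~10.1]{abramsky2021relating} used in the corresponding step for $\Hk$.

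The second task is to identify this comonadic game with the concrete $k+m$-round back-and-forth bounded game, which I would do by matching the two move-structures move for move. A position $(s,t)$ in the comonad is a pair of sequences $s = \langle a_1, \ldots, a_l \rangle$, $t = \langle b_1, \ldots, b_l \rangle$, and extending by the covering relation corresponds precisely to each player appending one element. By construction of the universe of $\Bk(\As,\va)$, the first $m$ entries are pinned to $\va$ (respectively $\vb$), which is exactly the proviso governing rounds $1 \leq i \leq m$; and any later element $a_j$ must be seen by some earlier element via a transition relation in $\sgT$, which is the proviso at rounds $j > m$. The winning-position relation, expressed through common path embeddings into $\Bk(\As,\va)$ and $\Bk(\Bs,\vb)$, unfolds to the requirement that the induced correspondence $a_i \mapsto b_i$ is a partial isomorphism; here the role of the $I$-relation, inherited from working with $\Bkp = \BkI J$, is to force repeated occurrences of an element to be tracked consistently, so that equality is respected in both directions. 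This yields precisely the partial isomorphism winning condition of the concrete game.

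Finally, Proposition~\ref{Bequivgameprop} states that Duplicator wins the $k+m$-round back-and-forth bounded game if and only if $(\As,\va) \Bequivk (\Bs,\vb)$. Chaining the three equivalences --- $\eqbBk$ iff the comonadic game, the comonadic game iff the concrete bounded game, and the concrete bounded game iff $\Bequivk$ --- yields the theorem.

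I expect the genuine work to lie in the second task: verifying that the abstract winning-position relation, defined via spans of paths that carry $I$ as true equality, coincides with the partial isomorphism condition of the element-level game, and that the covering-relation moves on $\Bk$-sequences correspond exactly to the two provisos constraining the concrete game (the predetermined first $m$ rounds, and the requirement thereafter of being seen by some transition relation). The presence of constants and of several transition relations makes this bookkeeping heavier than in the unimodal hybrid case, but introduces no new conceptual difficulty, since the $\Bk$ construction and $\TSbk$ were designed to be uniform in exactly these parameters.
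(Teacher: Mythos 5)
Your proposal is correct and follows essentially the same route the paper intends: the paper leaves this theorem without an explicit proof precisely because it is obtained by chaining Proposition~\ref{Bequivgameprop} (proved in the appendix as the bounded Ehrenfeucht-Fra\"iss\'e theorem) with the bounded instance of the span-of-open-pathwise-embeddings versus back-and-forth-game correspondence, ``following the same script as sections 5.1--5.4,'' i.e.\ the variation of Theorem~10.1 of the $\Ek$ paper or the general arboreal result, together with the move-by-move identification of the comonadic game on $\Bk$-sequences (with $I$ handling equality) with the concrete $k+m$-round bounded game. Your identification of where the real bookkeeping lies --- matching covering-relation moves to the two game provisos and the path-embedding winning relation to the partial isomorphism condition --- is exactly the content the paper delegates to that script.
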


\subsection{Bounded Counting Quantifier Logic}
As~$\BF$ is a fragment of first order logic, it is natural to consider its extension to a logic~$\BC$ with counting quantifiers, and the relationship to the comonad~$\Bk$. The appropriate model comparison game is a $k + m$-round bounded bijection game, restricting the standard bijection game~\cite{hella1996logical} for first order logic with counting quantifiers, to respect the transition structure. 

Let $(\sg, \sgT)$ be a finite bounded vocabulary, and~$\Ava$ and~$\Bvb$ in~$\CSmT$.
For rounds~$1 \leq i \leq m$ Spoiler must play~$\va_i$, and Duplicator must respond with~$\vb_i$. In round~$i > m$, play proceeds as follows:
\begin{itemize}
\item Duplicator chooses a bijection~$f$ from the subset of elements of~$\As$ that can be seen from some previous position~$a_j$, to the corresponding subset of~$\Bs$. If these subsets have different cardinalities, there is no such bijection, and Spoiler wins.
\item Spoiler picks an element~$a_i$ in~$\As$, and we take~$b_i$ to be~$f(a_i)$.
\end{itemize}
The winning condition for Duplicator is that the correspondence $a_i \mapsto b_i$ is a partial isomorphism from $\As$ to $\Bs$.
\begin{proposition}
Let~$(\sg, \sgT)$ be a finite bounded vocabulary, for all finite~$\Ava$ and~$\Bvb$ in~$\CSmT$, the following are equivalent:
\begin{itemize}
    \item $\Ava \BCequivk \Bvb$.
    \item Duplicator has a winning strategy in the~$m + k$-round bounded bijection game.
\end{itemize}
\end{proposition}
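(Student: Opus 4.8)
The plan is to establish both implications by the standard method for bijection-game characterisations of counting logic \cite{hella1996logical}, restricted to the transition structure in the same way that the earlier game characterisations of this section specialise the classical Ehrenfeucht-\Fraisse argument. Two features stay constant throughout: the first $m$ rounds merely fix the interpretations of the constants $c_i$ as the components of $\va$ and $\vb$, and a round $i > m$ corresponds to a single application of a bounded counting quantifier $\exists^{\geq n} x.[E(t,x) \wedge \psi]$, whose range is confined to the successors of an already-reached element.

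For the direction from the game to the logic, I would show by induction that a winning strategy for Duplicator in the $(m+k)$-round bounded bijection game forces $\Ava$ and $\Bvb$ to satisfy the same $\BC$ formulas of rank $\leq k$. The atomic and Boolean cases follow from the partial-isomorphism winning condition together with the induction hypothesis. For a bounded counting quantifier, let $f$ be the bijection Duplicator supplies on the seeable sets; by the induction hypothesis applied to the positions obtained after one further round, $f$ sends each element to one agreeing on all formulas of rank $\leq k-1$, and since $f$ respects the transition relations it restricts to a bijection between the $E$-successors of the witness of $t$ satisfying $\psi$ in $\As$ and the corresponding set in $\Bs$. The two counts therefore coincide, and the counting quantifier is preserved.

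For the converse, I would use the finiteness of $(\sg,\sgT)$ to fix, for each rank $j$, a finite family of rank-$j$ Hintikka formulas, one per $\BC$-equivalence class, refined so that the type of an element records, for each previous position $t$ and each transition relation $E$, whether $E(t,x)$ holds. Duplicator plays so as to maintain the invariant that the two currently chosen tuples agree on all $\BC$ formulas of rank $\leq r$, where $r$ is the number of rounds still to be played; this holds initially with $r = k$, and reaching $r = 0$ yields the required partial isomorphism. To construct the bijection at a round with $r$ rounds remaining, partition each seeable set by the refined rank-$(r-1)$ types of its elements. Every element of the seeable set is seen by at least one pair $(t,E)$, and for such a pair the assertion ``at least $n$ elements $x$ satisfy $E(t,x)$ together with the remainder of the type $\tau$'' is a genuine bounded counting quantifier, hence a $\BC$ formula of rank $\leq r$; the invariant therefore forces corresponding type-classes on the two sides to have equal cardinality, so a type-preserving bijection exists. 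Whatever element Spoiler then selects, its image realises the same refined rank-$(r-1)$ type, and the invariant descends to $r-1$.

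The main obstacle, and the one genuinely new point beyond the classical argument, is that the game's bijection ranges over all elements seeable from some previous position, whereas the bounded fragment only offers counting quantifiers $\exists^{\geq n} x.[E(t,x) \wedge \psi]$ anchored at a single predecessor $t$ and relation $E$. Global seeability is a disjunction over such pairs and is not itself a bounded quantifier. The resolution is to fold the seeing-pairs into the type: since each seeable element has at least one active pair, the cardinality of each refined type-class is expressible by a single bounded counting quantifier anchored at any one of its active pairs, and a type-preserving bijection then simultaneously witnesses the correct count for every admissible pair. The residual work is routine rank bookkeeping, checking that these cardinality assertions remain within the rank budget once the $m$ constant-fixing rounds are discounted.
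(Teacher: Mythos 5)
Your proposal is correct, and its skeleton --- Hintikka/Scott-style formulas for the bounded counting fragment, a type-respecting bijection for Duplicator, and a rank-descending back-and-forth induction --- is the same as the paper's appendix proof (the formulas $\Scott{k}{m}{i}$ there play the role of your ``refined'' Hintikka formulas, and the key-step lemma is exactly your invariant-maintenance argument). Where you genuinely diverge is on the one obstacle you correctly isolate: the game's bijection ranges over everything seeable from \emph{any} previous position, while bounded counting quantifiers have a single anchor $E(t,x)$. The paper resolves this with a standalone ``General Bounding Formulae'' lemma: counting quantifiers whose guard is a disjunction of transition atoms $\bigvee_j E_j(t_j,y)$ compile into the genuinely bounded fragment at the same quantifier depth, via an induction on the number of disjuncts that splits counts so as to avoid double counting; the Scott formulas are then free to use the global accessibility guard $\acc{\vx}{y}$ directly. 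You instead fold the seeing atoms into the type --- which costs nothing, since $E(t,x)$ is atomic and hence already recorded in rank-$0$ types --- and observe that for a type $\tau$ with active pair $(t,E)$ one has $\{x \mid \tau(x)\} = \{x \mid E(t,x) \wedge \tau(x)\}$, so each type class is counted by a quantifier with a single anchor and disjunctive guards never arise. Your route avoids the inclusion-exclusion lemma entirely, which is tidier; the paper's lemma has the advantage of being a reusable fact about the bounded counting fragment, independent of the type machinery, and it also justifies in one stroke the exact-count and the universal (``no other types are accessible'') clauses of the Scott formulas. Two small presentational points, neither a gap: your ``refinement'' of the types is automatic rather than an extra step; and your claim that agreement on a single rank-$(r-1)$ Hintikka formula yields agreement on all rank-$(r-1)$ formulas is precisely the inductive content that the paper proves explicitly, so in a full write-up it should be carried as part of the simultaneous induction rather than cited as known.
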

We shall write~$\Ava \eqcBk \Bvb$ for isomorphism in the Kleisli category of the relative comonad~$\Bkp$. The correspondence between isomorphisms in the Kleisli category and winning bounded bijection game strategies follows from an almost identical argument to that given for first order logic in~\cite{abramsky2021relating}.
\begin{proposition}
Let~$(\sg, \sgT)$ be a finite bounded vocabulary. The following are equivalent for finite $\Ava$ and~$\Bvb$ in~$\CSmT$:
\begin{itemize}
    \item $\Ava \eqcBk \Bvb$.
    \item Duplicator has a winning strategy in the~$m + k$-round bounded bijection game.
\end{itemize}
\end{proposition}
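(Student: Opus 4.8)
The plan is to follow the argument for the first-order case in \cite{abramsky2021relating}, where coKleisli isomorphism for $\Ek$ is shown to coincide with a winning strategy in the (unbounded) bijection game, and to graft on the bounded restriction and the handling of constants. An isomorphism witnessing $\Ava \eqcBk \Bvb$ is a pair of coKleisli morphisms $f : \Bkp \Ava \to J \Bvb$ and $g : \Bkp \Bvb \to J \Ava$ that are mutually inverse for coKleisli composition, i.e.\ $g \circ f^{*} = \ve$ and $f \circ g^{*} = \ve$. First I would unpack these equations on sequences: writing out $f^{*}$ and $g^{*}$ using the coextension formula, the condition $g \circ f^{*} = \ve$ says that following $f$ along a play and then applying $g$ returns the last element played, and symmetrically for $f \circ g^{*} = \ve$. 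This is exactly the statement that, restricted to each position, the maps induced by $f$ and $g$ are mutually inverse.

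For the direction from isomorphism to game, I would read off a Duplicator strategy as follows. At a position reached after playing $s \in \Bk \Ava$ and the corresponding $t \in \Bk \Bvb$, the morphism $f$ assigns to each one-step extension $s\langle a'\rangle$ an element of $B$, and $g$ does likewise in the other direction; the inversion equations guarantee that these assignments are mutually inverse bijections. The crucial observation is that the universe of $\Bk \Ava$ admits the extension $s\langle a'\rangle$ precisely when $a'$ is seen by some previously played element via a transition relation, which is exactly the ``seeable'' set on which Duplicator must supply a bijection in the bounded bijection game; thus the bijection extracted from $f,g$ has the correct domain and codomain. That the resulting correspondence is a partial isomorphism follows from $f$ and $g$ being $\sgp$-morphisms, the $I$-relation forcing consistent treatment of repeated elements and hence genuine injectivity.

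Conversely, given a winning strategy presented as a family of round-by-round bijections between seeable sets, I would define $f$ and $g$ by propagating these bijections along plays, exactly as in the first-order construction, and verify the coKleisli identities together with the two inversion equations; bijectivity at each round yields mutual inversion. The forced initial $m$ rounds, in which Spoiler must play the designated tuple $\va$ (respectively $\vb$), correspond to the constraint $s \comp \va$ built into the universe of $\Bk \Ava$ and present no difficulty.

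I expect the main obstacle to be bookkeeping the bounded restriction uniformly across the two formalisms: one must check that the seeable-set condition defining Duplicator's bijections coincides, round by round, with the restriction on admissible extensions in the universe of $\Bk$, and that this matching is preserved under the coKleisli composition used to express the inversion equations. Correctly threading the $J$-relative $I$-relation through this comparison, so that equality is respected and the induced local maps are genuine bijections rather than merely one-sided, is the delicate point; once it is in place, the remaining verifications are routine adaptations of those in \cite{abramsky2021relating}.
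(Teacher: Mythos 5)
Your proposal is correct and takes essentially the same route as the paper: the paper's own proof consists precisely of transferring the coKleisli-isomorphism-versus-bijection-game argument for first-order logic from \cite{abramsky2021relating}, with the observations that the one-step extensions admitted by the universe of $\Bk$ coincide with the accessible sets of the bounded bijection game, and that the $I$-relation of the relative comonad handles equality and injectivity. Your sketch spells out exactly these adaptations (mutual inversion of the maps $a' \mapsto f(s\langle a'\rangle)$ and $b' \mapsto g(t\langle b'\rangle)$, matching of accessible sets, propagation of bijections for the converse), so it agrees with the paper's proof.
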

\begin{theorem}
Let $(\sg, \sgT)$ be a finite bounded vocabulary.
For all finite $(\As,\va)$, $(\Bs,\vb)$ in $\CSmT$:
\[  \Ava \BCequivk \Bvb \; \IFF \;  \Ava \eqcBk \Bvb. \]
\end{theorem}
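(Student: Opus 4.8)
The plan is to derive the theorem directly from the two immediately preceding propositions, each of which characterises one of the two notions in the statement in terms of a single, common combinatorial object: the $m+k$-round bounded bijection game. First I would invoke the penultimate proposition, which gives $\Ava \BCequivk \Bvb$ precisely when Duplicator has a winning strategy in that game, and then the last proposition, which gives $\Ava \eqcBk \Bvb$ under exactly the same condition. Since both biconditionals hold for finite $\Ava$ and $\Bvb$ in $\CSmT$, composing them yields $\Ava \BCequivk \Bvb \IFF \Ava \eqcBk \Bvb$ by transitivity.

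The key steps, in order, are: (i) confirm that the two propositions genuinely refer to the same game --- the same number of rounds $m+k$, the same forced opening moves on the tuples $\va, \vb$ for rounds $1 \leq i \leq m$, the same transition-boundedness constraint on later moves, and the same partial-isomorphism winning condition; (ii) confirm that the finiteness hypothesis is common to both, which is essential, since the bijection-game characterisation of counting quantifiers depends on comparing the cardinalities of the ``see-sets'' of previously played elements, a comparison that is only meaningful in the finite setting; and (iii) chain the two equivalences.

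The real mathematical content does not lie in this theorem but in the two propositions it rests on, so there is no substantial obstacle here. If anything, the point requiring the most care is already discharged in the last proposition, whose proof is asserted to follow the first-order argument of \cite{abramsky2021relating}: one must check that the restriction to transition-bounded moves in the game is faithfully mirrored, on the comonadic side, by the restriction defining the universe of $\Bkp (\As,\va)$ as a substructure of $\Ek (\As,\va)$, so that Kleisli isomorphisms and winning bijection strategies correspond even in the presence of the transition-relation constraint. Granting that proposition, the present theorem is an immediate corollary.
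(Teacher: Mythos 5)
Your proposal is correct and matches the paper's own (implicit) argument exactly: the theorem is stated immediately after the two propositions characterising $\BCequivk$ and $\eqcBk$ respectively via the same $m+k$-round bounded bijection game, and follows by chaining the two equivalences. The points you flag for care --- identity of the game in both propositions and the shared finiteness hypothesis --- are precisely where the paper places the real content (the appendix proof for the logic side, and the adaptation of the argument from \cite{abramsky2021relating} for the comonadic side), so nothing is missing.
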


\section{Semantic characterization of the bounded fragment}
\label{semcharsec}

We shall now prove a semantic characterisation of the bounded fragment in terms of invariance under generated submodels.
A result of this form is already known \cite{feferman1966persistent,areces2001hybrid}, however there are several novel features in our account:
\begin{itemize}
\item The previous results are for general (possibly infinite) structures, using tools from infinite model theory. We will give a uniform proof, which applies both to general structures, and to the finite case, which, as for the van Benthem-Rosen characterisation of basic modal logic in terms of bisimulation invariance \cite{van1983modal,rosen1997modal}, is an independent result.
\item Our proof follows similar lines to the uniform proof by Otto of the van Benthem-Rosen Theorem \cite{otto2004elementary}. In particular, we use constructive arguments based on model comparison games, rather than model-theoretic constructions.  However, a key property used in his proof no longer holds for the bounded fragment, so the argument has to take a different path. 
\item We also identify a key combinatorial lemma, implicit in \cite{otto2004elementary}, which we call the Workspace Lemma, and give a careful proof at the natural level of generality, in the setting of metric spaces.
\end{itemize}

\subsection{Comonadic aspects}
Comonadic semantics have now been given for a number of important fragments of first-order logic: the quantifier rank fragments, the finite variable fragments, the modal fragment, and guarded fragments. In the landscape emerging from these constructions, some salient properties have come to the fore. These are properties which a comonad,  arising from an \emph{arboreal cover} in the sense of  \cite{abramsky2021arboreal}, may or may not have:
\begin{itemize}
\item The comonad may be \emph{idempotent}, meaning that the comultiplication is a natural isomorphism. Idempotent comonads correspond to \emph{coreflective subcategories}, which form the Eilenberg-Moore categories of these comonads. The modal comonads $\Mk$ are idempotent. The corresponding coreflective subcategories are of those modal structures which are tree-models to depth $k$ \cite{abramsky2021relating}.
\item The comonad $C$ may satisfy the following property: for each structure $\As$, $C \As \lrarr^{C} \As$, where $\lrarr^{C}$ is the back-and-forth equivalence associated with $C$.
We shall call this the \emph{bisimilar companion} property.
Note that an idempotent comonad, such as $\Mk$, will automatically have this property. The guarded comonads $\mathbb{G}_k$ from \cite{abramsky2021comonadic} are not idempotent, but have the bisimilar companion property, which is thus strictly weaker. 
\item Finally, the comonads $\Ek$ and $\Pk$ have neither of the above properties. Unlike the modal and guarded fragments, the quantifier rank and finite variable fragments cover the whole of first-order logic, so we call these comonads \emph{expressive}. 
\end{itemize}
Thus we have a strict hierarchy of comonads in the arboreal categories framework: 
\begin{center}
idempotent $\Rightarrow$ bisimilar companions $\Rightarrow$ arboreal.
\end{center}
This hierarchy is correlated with tractability: the modal and guarded fragments are decidable, and have the tree-model property \cite{vardi1997modal,gradel1999modal}, while the expressive fragments do not.
We can regard these observations as a small first step towards using structural properties of comonadic semantics to classify logic fragments and their expressive power.
In \cite{abramsky2021arbapps}, idempotence is used to give simple, general proofs of homomorphism preservation theorems for counting quantifier fragments, with an application to graded modal logic; while the bisimilar companion property is used to give a general, uniform Otto-style proof of van Benthem-Rosen theorems.

As we have already remarked, the bounded and hybrid comonads are closer to the Ehrenfeucht-\Fraisse comonads $\Ek$ than to the modal comonads $\Mk$. Indeed, $\Hk$ and $\Bk$ are neither idempotent, nor have the bisimilar companion property. On the tractability side, they are not decidable \cite{areces2001hybrid}.  At the same time, they are not fully expressive for first-order logic, thus refining the above hierarchy.

Otto's proof of the van Benthem-Rosen theorem in \cite{otto2004elementary} uses the bisimilar companion property. This is made explicit in the account given in \cite{abramsky2021arbapps}.
Because $\Hk$ and $\Bk$ do not have this property, we shall use a different comonad in our invariance proof for the bounded and hybrid fragments.

We shall use the \emph{reachability comonad} $\RR$, defined on $\CSmT$ as follows. Given a structure $(\As,\va)$, let $\to$ be the union of all $\EA$, $E \in \sgT$. The universe of $\RR \As$ is the set of all elements $a$ of $A$ such that there is a path $a_i \to^* a$, for some $i: 1 \leq i \leq m$. Then $\RR \As$ is the corresponding induced substructure of $(\As,\va)$. The counit is the inclusion map, while coextension is the identity operation on morphisms, $h^* = h$. The fact that $h$ is a $\sg$-homomorphism implies that paths are preserved, so this is well defined. It is easily verified that $\RR$ is an idempotent comonad. The corresponding coreflective subcategory of $\CSmT$ is the full subcategory of structures which are reachable from the initial elements.
For each $k>0$, there is a sub-comonad $\Rk$ of elements which are $k$-reachable.

We can use this comonad to state the invariance property of interest. We say that a first-order sentence $\vphi$ is \emph{invariant under generated substructures} if for all $(\As,\va)$ in $\CSmT$:
\[ \Ava \models \vphi \; \IFF \; \RR \Ava \models \vphi . \]
It is \emph{invariant under $k$-generated substructures} if for all $(\As,\va)$ in $\CSmT$:
\[ \Ava \models \vphi \; \IFF \; \Rk \Ava \models \vphi . \]

Note that we are regarding $\va$ as the denotations of constants $c_1,\ldots,c_m$ which may occur in the sentence $\vphi$. We could equally well think of $\vphi$ as containing free variables $v_1,\ldots,v_m$, for which $\va$ provides an assignment.
\begin{proposition}
\label{invextprop}
If $\vphi$ is invariant under $k$-generated substructures, and $k \leq l$, then $\vphi$ is invariant under $l$-generated substructures.
\end{proposition}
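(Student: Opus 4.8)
The plan is to reduce $l$-invariance to the hypothesised $k$-invariance by applying the latter twice, exploiting the simple way the bounded reachability comonads compose. Throughout I write $\mathbb{R}_{l}$ for the $l$-reachability comonad, so that the desired conclusion reads $\Ava \models \vphi \IFF \mathbb{R}_{l}\Ava \models \vphi$ for all $(\As,\va)$ in $\CSmT$.

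The key ingredient I would isolate first is the identity
\[ \Rk(\mathbb{R}_{l}\Ava) = \Rk \Ava \]
of $m$-pointed structures, valid whenever $k \leq l$. To prove it, recall that an element is $k$-reachable from $\va$ exactly when there is a path of length $\leq k$ from some $a_i$ to it. Every vertex on such a path is itself $k$-reachable, hence $l$-reachable since $k \leq l$, so the whole path already lives inside the induced substructure $\mathbb{R}_{l}\As$; conversely any path inside $\mathbb{R}_{l}\As$ is a path in $\As$, because $\mathbb{R}_{l}\As$ is an induced substructure. Therefore the $k$-reachable elements of $\mathbb{R}_{l}\Ava$ are precisely the $k$-reachable elements of $\Ava$, and since both carry the structure induced from $\As$, the two substructures coincide. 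This is just the prefix-closure of bounded reachability, and is the concrete manifestation of $\Rk$ being an idempotent comonad sitting below $\mathbb{R}_{l}$.

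With this in hand, I would apply the invariance hypothesis twice. Applying it to $\Ava$ itself gives $\Ava \models \vphi \IFF \Rk \Ava \models \vphi$. Applying it to the structure $\mathbb{R}_{l}\Ava$, which is again a legitimate object of $\CSmT$ with the same distinguished tuple $\va$ (each $a_i$ is $l$-reachable from itself via the trivial path), gives $\mathbb{R}_{l}\Ava \models \vphi \IFF \Rk(\mathbb{R}_{l}\Ava) \models \vphi$. Substituting the identity $\Rk(\mathbb{R}_{l}\Ava) = \Rk \Ava$ and chaining the two equivalences yields $\Ava \models \vphi \IFF \mathbb{R}_{l}\Ava \models \vphi$, which is exactly invariance under $l$-generated substructures.

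The computation is routine; the only point requiring genuine care—and the one I would spell out in detail—is the legitimacy of the second application, namely that $k$-reachability computed inside the induced substructure $\mathbb{R}_{l}\As$ agrees with $k$-reachability computed in $\As$. This is precisely the prefix-closure property established in the first step, so once the composition identity is proved the remainder is immediate.
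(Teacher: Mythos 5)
Your proof is correct and is essentially the paper's own argument: the paper likewise notes the identity $\Rk \Rl \Ava = \Rk \Ava$ and then chains the two applications of $k$-invariance (to $\Ava$ and to $\Rl \Ava$) into the equivalence $\Ava \models \vphi \IFF \Rl \Ava \models \vphi$. The only difference is one of exposition: you spell out the prefix-closure argument for the composition identity and the legitimacy of applying the hypothesis to $\Rl\Ava$, which the paper leaves implicit.
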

\begin{proof}
Note firstly that $\Rk \Rl \Ava = \Rk \Ava$. Now if $\vphi$ is invariant under $k$-generated substructures, 
\[ \Ava \models \vphi \;  \IFF \; \Rk \Ava \models \vphi \; \IFF \; \Rk \Rl \Ava \models \vphi \; \IFF \; \Rl \Ava \models \vphi . \]
\end{proof}
We can now state our main result.
\begin{theorem}[Characterisation Theorem]
For any first-order sentence $\vphi$, the following are equivalent:
\begin{enumerate}
\item $\vphi$ is invariant under $k$-generated substructures for some $k>0$.
\item $\vphi$ is equivalent to a sentence $\psi$ in the bounded fragment. 
\end{enumerate}
\end{theorem}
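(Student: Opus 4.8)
\emph{The plan.} I would treat the two implications separately, the substance lying entirely in $(1)\Rightarrow(2)$.

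\emph{The direction $(2)\Rightarrow(1)$.} Suppose $\vphi$ is equivalent to a bounded sentence $\psi$ of quantifier rank $r$. I would argue by induction on $\psi$ that its truth value at $\Ava$ depends only on the elements of $\As$ lying within transition-distance $r$ of the tuple $\va$: the only quantifiers available are the bounded ones $\exists x.[E(t,x)\wedge\cdots]$ and $\forall x.[E(t,x)\to\cdots]$, each of which ranges over transition-successors of an already-reached element, so a nesting of depth $r$ can reach distance at most $r$. Consequently $\psi$, and hence $\vphi$, is invariant under $r$-generated substructures; Proposition~\ref{invextprop} then yields invariance under $k$-generated substructures for any $k\geq\max(r,1)$.

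\emph{The direction $(1)\Rightarrow(2)$: setup.} Fix the quantifier rank $q$ of $\vphi$ and the $k$ witnessing invariance. Write $\equiv^{\mathbb{B}}_{\ell}$ for agreement on all bounded sentences of quantifier rank $\leq\ell$, which by Proposition~\ref{Bequivgameprop} is exactly the relation decided by the $(\ell+m)$-round back-and-forth bounded game. The goal is to find $\ell=\ell(q,k)$ making $\vphi$ \emph{$\equiv^{\mathbb{B}}_{\ell}$-invariant}, i.e. $\Ava \equiv^{\mathbb{B}}_{\ell} \Bvb$ implies $\Ava\models\vphi \IFF \Bvb\models\vphi$. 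Since $\vphi$ is invariant under $k$-generated substructures it suffices to show $\Rk\Ava\models\vphi \IFF \Rk\Bvb\models\vphi$. Because the bounded game forces every move to continue a transition path from $\va$, a Duplicator strategy witnessing $\Ava\equiv^{\mathbb{B}}_{\ell}\Bvb$ is equally a strategy on the reachable parts, and on plays tracing paths of length $\leq k$—that is, reaching elements of $\Rk\Ava$—its responses trace matching paths of length $\leq k$ and hence land in $\Rk\Bvb$.

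\emph{The crux and the main obstacle.} It then suffices to show that, for $\ell$ large enough, such a bounded strategy can be converted into a winning strategy for Duplicator in the ordinary $q$-round Ehrenfeucht-\Fraisse game on $\Rk\Ava$ and $\Rk\Bvb$; the classical Ehrenfeucht-\Fraisse theorem \cite{Libkin2004} then gives agreement on all first-order sentences of quantifier rank $\leq q$, in particular on $\vphi$. The conversion is direct: when Spoiler plays an arbitrary element $c$ of $\Rk\Ava$, Duplicator exploits $k$-reachability to fix a transition path of length $\leq k$ from $\va$ to $c$, replays it as legal bounded moves, and returns the endpoint of the matched path supplied by the bounded strategy. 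Following Otto \cite{otto2004elementary}, but unable to appeal to a bisimilar companion (which, as noted above, $\Bk$ does not provide), the difficulty is purely one of \emph{bookkeeping}: over $q$ first-order rounds played back and forth, Spoiler may jump repeatedly between distant regions, forcing Duplicator to build, reuse, and abandon many bounded paths while keeping the total number of committed bounded rounds within the budget $\ell$. I expect this accounting to be the main obstacle, and it is precisely what the Workspace Lemma is designed to control: viewing the played elements as points of a metric space built from transition-distance, it bounds, as an explicit function of $q$ and $k$, the workspace (the depth and branching of the tree of maintained paths) needed to keep $q$ marked points, each within distance $k$, simultaneously live. Taking $\ell$ to be this bound makes the simulation go through.

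\emph{Conclusion.} Once $\vphi$ is known to be $\equiv^{\mathbb{B}}_{\ell}$-invariant, I would finish by a Hintikka-type argument that is uniform across finite and infinite structures and so avoids compactness. For a finite bounded vocabulary there are, up to logical equivalence, only finitely many bounded formulas of quantifier rank $\leq\ell$ in the constants $c_1,\ldots,c_m$, and each $\equiv^{\mathbb{B}}_{\ell}$-class is defined by a single such bounded formula $\chi^{\ell}_{(\As,\va)}$, constructed by the usual induction mirroring the moves of the bounded game. Since $\vphi$ respects these classes, it is equivalent to the finite disjunction $\bigvee\{\chi^{\ell}_{(\As,\va)} : \Ava\models\vphi\}$, which is a bounded sentence. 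This establishes $(1)\Rightarrow(2)$ in both the finite and the general case at once.
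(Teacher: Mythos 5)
Your outer architecture is sound, and much of it coincides with the paper's: your closing ``Hintikka-type'' argument is exactly the paper's Definability Lemma plus the finite-index observation; your path-replay conversion of a bounded-game strategy into an Ehrenfeucht-\Fraisse{} strategy is essentially the paper's Lemma~\ref{Beqlemm}; and your direction $(2)\Rightarrow(1)$ is Proposition~\ref{reachprop} together with Proposition~\ref{invextprop}. Where you genuinely depart from the paper is in how the invariance hypothesis is used: you apply it \emph{directly} to $\Ava$ and $\Bvb$, reducing everything to $\Rk\Ava \models \vphi \IFF \Rk\Bvb \models \vphi$. The paper never does this; its proof of Proposition~\ref{invdebfprop} instead routes through invariance under disjoint extensions (Lemma~\ref{invgsdelemm}), the ball comonad $\Sr$, the identity $\Rr\Sr = \Rr$ (Proposition~\ref{RkSkprop}), and the Workspace Lemma, via the chain $\Ava \to (\As+\Cs,\va) \to (\Asr+\Cs,\va) \to \Sr\Ava \to \Rr\Sr\Ava = \Rr\Ava$. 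Your shortcut is legitimate, since the hypothesis is quantified over all structures and so applies to $\Ava$ and $\Bvb$ themselves.

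The genuine flaw is in your crux paragraph, and it is twofold. First, the difficulty you anticipate does not exist: each Ehrenfeucht-\Fraisse{} round is simulated by at most $k$ \emph{fresh} bounded moves (replay a shortest transition path from a constant to Spoiler's element); positions in the bounded game only grow, nothing is ever reused or abandoned, and the winning condition transfers because a restriction of a partial isomorphism to a subsequence is again a partial isomorphism. Hence $\ell = kq$ (on top of the $m$ constant rounds) already suffices; note also that Duplicator's answers are forced into $\Rk\Bvb$ precisely \emph{because} shortest paths are replayed, since the matched responses then form a transition path of length $\leq k$ from a constant of $\Bs$ --- merging the restriction and the conversion into one simulation in this way is in fact cleaner than the paper's two-step Lemma~\ref{Reqlemm}/Lemma~\ref{Beqlemm}. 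Second, and more seriously, the Workspace Lemma does not say what you attribute to it. It is a locality statement: for every $\Ava$ and $q$ there is a padding structure $\Cs$ of controlled size with $(\As+\Cs,\va) \equiv_q (\Ask+\Cs,\va)$, where $k = 2^q$. Its role in the paper is to show, in combination with invariance under disjoint extensions, that $\vphi$ cannot distinguish $\Ava$ from the Gaifman ball $\Sr\Ava$; it provides no bound on bounded-game rounds, so ``taking $\ell$ to be this bound'' defines nothing. Once you replace that appeal by the trivial $kq$ count, your proof is complete --- and it is then more direct than the paper's own: invoking the invariance hypothesis directly on $\Ava$ and $\Bvb$ means you never need the Workspace Lemma, disjoint unions, or $\Sr$ at all, and you obtain the bound $kq$ in place of the paper's $q2^{\max(k,q)}$ (relevant to the optimality Question the paper raises). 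The Workspace Lemma remains essential elsewhere in the paper, notably for the Hybrid Temporal Logic characterisation, where the hypothesis is bare invariance under disjoint extensions and gives no direct access to any substructure.
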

Note that this theorem has two versions, depending on the ambient category $\CC$ relative to which equivalence is defined:
\[ \forall (\As,\va) \in \CC. \, \Ava \models \vphi \; \IFF \; \Ava \models \psi . \]
The first version, for general models, takes $\CC = \CSmT$. The second, for finite models, takes $\CC = \CSmTf$, the full subcategory of finite structures.
Neither of these two versions implies the other.
Following Otto \cite{otto2004elementary}, we aim to give a uniform proof, valid for both versions.

\subsection{Proof of the Characterisation Theorem}

Firstly, since any sentence can only use a finite vocabulary, we can assume without loss of generality in what follows that $\sg$ is finite.
This implies that up to logical equivalence, the fragment $\BFk$ is finite.

Given a formula $\vphi$, we write $\Mod(\vphi) := \{ \Ava \mid (\As,\va) \models \vphi \}$.
We shall use the following variation of a standard result.
\begin{lemma}[Definability Lemma]
For each $k>0$ and structure $\Ava$, there is a sentence $\hAvk \in \BFk$ such that, for all $(\Bs,\vb)$:
\[ (\As,\va) \Bequivk (\Bs,\vb) \; \IFF \; (\Bs,\vb) \models \hAvk . \]
\end{lemma}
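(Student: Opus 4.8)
The plan is to construct a characteristic (Hintikka-style) sentence for the $\Bequivk$-class of $(\As,\va)$ as the conjunction of all bounded formulas of quantifier rank $\leq k$ that it satisfies. Since we have already reduced to the case of a finite vocabulary $\sg$, the fragment $\BFk$ is finite up to logical equivalence, so I may fix a finite set of representatives and define
\[ \hAvk \;:=\; \bigwedge \{ \psi \in \BFk \mid (\As,\va) \models \psi \}, \]
a \emph{finite} conjunction. Because the bounded fragment is closed under $\wedge$ and conjunction does not raise the quantifier rank, $\hAvk$ is again a sentence of $\BFk$, and by construction $(\As,\va) \models \hAvk$. For the left-to-right direction, if $(\As,\va) \Bequivk (\Bs,\vb)$, then since $(\As,\va) \models \hAvk$ with $\hAvk \in \BFk$, agreement on all of $\BFk$ forces $(\Bs,\vb) \models \hAvk$.

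For the converse, I would suppose $(\Bs,\vb) \models \hAvk$ and take an arbitrary $\psi \in \BFk$, aiming to show $(\As,\va) \models \psi$ iff $(\Bs,\vb) \models \psi$. If $(\As,\va) \models \psi$, then $\psi$ is, up to logical equivalence, one of the conjuncts of $\hAvk$, so $(\Bs,\vb) \models \psi$. If instead $(\As,\va) \not\models \psi$, then $(\As,\va) \models \neg\psi$; using that the bounded fragment is closed under negation without raising quantifier rank, $\neg\psi \in \BFk$ is a conjunct of $\hAvk$, whence $(\Bs,\vb) \models \neg\psi$, i.e. $(\Bs,\vb) \not\models \psi$. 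Thus the two structures agree on every sentence of $\BFk$, which is precisely $(\As,\va) \Bequivk (\Bs,\vb)$.

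The only genuine content is the finiteness of $\BFk$ up to logical equivalence, which is where the finiteness of the vocabulary is essential and which has already been recorded above; once this is in hand, the argument reduces entirely to the closure of the bounded fragment under the Boolean connectives $\neg$ and $\wedge$ together with the fact that these connectives leave the quantifier rank unchanged. I expect no real obstacle beyond verifying these closure properties directly from the bounded syntax. The presence of the constants $c_1,\ldots,c_m$ interpreted by $\va$ causes no difficulty, since they are treated as ordinary terms and $\Bequivk$ already ranges over sentences that may mention them.
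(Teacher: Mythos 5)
Your proof is correct: the conjunction-of-all-satisfied-$\BFk$-sentences argument, resting on the finiteness of $\BFk$ up to logical equivalence for finite $\sg$ and on the closure of the bounded fragment under $\neg$ and $\wedge$ without increase of quantifier rank, is exactly the standard construction of the characteristic sentence. The paper does not spell out a proof at all --- it invokes the lemma as ``a variation of a standard result'' immediately after recording precisely the finiteness fact you use --- so your write-up correctly supplies the argument the paper leaves implicit.
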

This says that $[(\As,\va)]_{\Bequivk} = \Mod(\hAvk)$. Since $\Bequivk$ has finite index, this implies that if $\Mod(\vphi)$ is saturated under $\Bequivk$, $\vphi$ is equivalent to a finite disjunction $\bigvee_{i=1}^n \hAik$, and hence to a formula in $\BFk$.
Thus to prove the Characterisation theorem, it is sufficient to prove that $\Mod(\vphi)$ is saturated under $\Bequivk$ for some $k$ whenever $\vphi$ is invariant under generated substructures.

We recall the standard disjoint union of structures, $\As + \Bs$. This is the coproduct in $\CS$. We say that a sentence $\vphi$ is \emph{invariant under disjoint extensions} if for all $(\As,\va)$, $\Bs$:
\[ (\As,\va) \models \vphi \; \IFF \; (\As + \Bs, \va) \models \vphi . \]
Note that $\Rk (\As + \Bs, \va) = \Rk (\As, \va)$. Hence the following is immediate:
\begin{lemma}
\label{invgsdelemm}
Invariance under $k$-generated substructures implies invariance under disjoint extensions.
\end{lemma}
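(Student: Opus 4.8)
The final statement to prove is Lemma~\ref{invgsdelemm}: invariance under $k$-generated substructures implies invariance under disjoint extensions.

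The plan is to reduce this directly to the observation already flagged in the text, namely that $\Rk(\As + \Bs, \va) = \Rk(\As, \va)$. First I would justify this identity carefully. The object $(\As + \Bs, \va)$ is the $m$-pointed structure whose underlying structure is the coproduct $\As + \Bs$ and whose distinguished tuple $\va$ lies entirely in the $\As$-component (this is implicit in the notation, since $\va \in A^m$). The comonad $\Rk$ restricts to the elements that are $k$-reachable from the initial elements $a_1,\ldots,a_m$ along paths in the union of the transition relations $\EA$, $E \in \sgT$. Since the disjoint union places no transition edges between the $\As$- and $\Bs$-components, any path $a_i \to^* a$ of length $\leq k$ starting from a distinguished element stays within the $\As$-component. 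Hence the set of $k$-reachable elements of $(\As + \Bs, \va)$ coincides exactly with the set of $k$-reachable elements of $(\As,\va)$, and since $\Rk$ takes the induced substructure on this set, and the relations of $\As + \Bs$ restricted to $A$ agree with those of $\As$, we get $\Rk(\As + \Bs, \va) = \Rk(\As, \va)$ as $m$-pointed $\sg$-structures.

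Granting this identity, the lemma follows in one line. Suppose $\vphi$ is invariant under $k$-generated substructures. Then for any $(\As,\va)$ and any $\Bs$:
\[
(\As,\va) \models \vphi \IFF \Rk(\As,\va) \models \vphi \IFF \Rk(\As + \Bs, \va) \models \vphi \IFF (\As + \Bs, \va) \models \vphi,
\]
where the first and last biconditionals are the defining property of invariance under $k$-generated substructures (applied to $(\As,\va)$ and to $(\As+\Bs,\va)$ respectively), and the middle biconditional is the identity just established. This is precisely invariance under disjoint extensions.

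The only genuine content, and the step I would be most careful about, is the reachability identity $\Rk(\As + \Bs, \va) = \Rk(\As, \va)$. Everything hinges on the fact that the disjoint union introduces no new transition edges crossing between the two summands, so reachability cannot escape the $\As$-component; the bound $k$ on path length is harmless here, since the relevant paths already live entirely in $\As$. No further obstacle is expected: the rest is a direct substitution into the definition of invariance, with no appeal to the combinatorics of $\BFk$ or to the games.
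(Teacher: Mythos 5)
Your proposal is correct and is essentially the paper's own proof: the paper notes the identity $\Rk(\As + \Bs, \va) = \Rk(\As, \va)$ (which holds because the disjoint union introduces no transition edges between summands) and declares the lemma immediate, which is exactly the chain of biconditionals you spell out. Your version merely makes explicit the justification of the reachability identity and the two applications of $k$-generated-substructure invariance, to $(\As,\va)$ and to $(\As+\Bs,\va)$.
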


\subsubsection*{The Workspace Lemma}
A key step in the argument is a general result we call the Workspace Lemma. A special case of this is implicit in \cite{otto2004elementary}. 
To state this result, we need to consider another comonad on $\CSmT$.
Given a structure $\As$, we can define a metric on $A$ valued  in the extended natural numbers $\Nat \cup \{ \infty\}$, given by the path distance in the Gaifman graph $\Gf(\As)$ \cite{Libkin2004}. 
We set $d(a,b) = \infty$ if there is no path between $a$ and $b$. We write $A[a;k]$ for the closed ball centred on $a$, and extend this to tuples $\va$ by $A[\va;k] := \bigcup_{i=1}^n A[a_i;k]$. Given $\Ava$, we define $\Sk \Ava$ to be $(\Ask,\va)$, where $\Ask$ is the substructure of $\As$ induced by $A[\va;k]$. This defines an idempotent comonad on $\CSmT$, in similar fashion to $\Rk$. Note that, as for $\Rk$, $\Sk \Ava = \Sk (\As + \Bs, \va)$.

We can now state the Workspace Lemma.
Note that $\equiv_q$ is elementary equivalence up to quantifier rank $q$.
\begin{lemma}[Workspace Lemma]
Given $(\As,\va)$ and $q>0$, there is a structure $\Bs$ such that $(\As + \Bs,\va) \equiv_q  (\Ask + \Bs, \va)$, where $k = 2^q$. 
Moreover, $|B| \leq 2q|A|$. Hence if $\As$ is finite, so is $\Bs$.
\end{lemma}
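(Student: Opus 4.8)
The plan is to characterise $\equiv_q$ by the $q$-round Ehrenfeucht-\Fraisse game with the parameters $\va$ pinned as a common starting position, so that it suffices to exhibit $\Bs$ together with a winning strategy for Duplicator in the game between $\As + \Bs$ and $\Ask + \Bs$. Since the metric is path distance in the Gaifman graph and $k = 2^q$, the closed ball $A[\va;k]$ is exactly the $2^q$-neighbourhood of $\va$; hence, as induced substructures, $\Ask$ is literally the $2^q$-neighbourhood of $\va$ in both $\As + \Bs$ and $\Ask + \Bs$ (the workspace $\Bs$ being a separate Gaifman component that is unreachable from $\va$). This supplies the base case for the standard halving-radius locality argument, in the form $(\As+\Bs,\va)$ and $(\Ask+\Bs,\va)$ having isomorphic $2^q$-neighbourhoods of $\va$ via the identity on $\Ask$.

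For the workspace I would take $\Bs$ to be the disjoint union of $q$ copies of $\As$ and $q$ copies of $\Ask$ (as unpointed summands, carrying no distinguished elements); then $|B| = q|A| + q|\Ask| \le 2q|A|$, which is finite whenever $\As$ is. The point of this choice is that $\As + \Bs$ and $\Ask + \Bs$ realise exactly the same $2^{q-1}$-neighbourhood-isomorphism types, and each such type is realised at least $q$ times, in at least $q$ distinct Gaifman components: every type occurring in $\As$ is supplied $q$-fold by the copies of $\As$, and every type occurring in $\Ask$ is supplied $q$-fold by the copies of $\Ask$.

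The core is then the locality game. Duplicator maintains the invariant that after $r$ rounds the matched tuples, together with $\va \mapsto \va$, have isomorphic $2^{q-r}$-neighbourhoods, consistently on overlaps. On Spoiler's move $c$ (say on the left), Duplicator classifies it as \emph{near} --- within $2^{q-r-1}$ of some active point $p$, in which case the $2^{q-r-1}$-ball of $c$ sits inside the already-matched $2^{q-r}$-ball of $p$ and the response is copied across the existing local isomorphism --- or \emph{far}, in which case Duplicator answers in a fresh workspace copy realising the same $2^{q-r-1}$-type. The far case always succeeds, since that type is realised in at least $q$ components of which at most $r < q$ can be blocked by earlier moves. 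At $r = q$ the radius-$1$ invariant yields the partial-isomorphism winning condition, so Duplicator wins and $(\As + \Bs,\va) \equiv_q (\Ask + \Bs,\va)$.

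The main obstacle is the boundary of the ball, and it is precisely where Otto's original strategy breaks down. Deleting the tail $\As \setminus \Ask$ alters the neighbourhood types of elements near the distance-$k$ sphere: a boundary element carries its tail in $\As$ but is truncated in $\Ask$, so the naive identity matching on the shared ball $\Ask$ fails to be a local isomorphism there, and --- unlike the modal and guarded cases --- one cannot repair this by matching ever larger neighbourhoods, as there is no bisimilar companion. The resolution is exactly the two-sided workspace: every boundary element lies at distance $k = 2^q > 2^{q-r-1}$ from $\va$, so it is always treated as a far move, and the copies of $\As$ furnish the full (tailed) boundary types needed to answer Spoiler on the $\Ask$-side, while the copies of $\Ask$ furnish the truncated boundary types needed on the $\As$-side. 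I would take care to verify that $q$ far moves never exhaust the available fresh copies and that the overlapping local isomorphisms can be chosen compatibly; these are the routine points left to nail down.
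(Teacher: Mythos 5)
Your proposal is correct and follows essentially the same route as the paper's proof: the same workspace $\Bs = q \cdot \As + q \cdot \As_k$ with the same size bound, the same halving-radius near/far strategy (identity play on the shared ball, copy-cat responses in fresh workspace components), the same budget argument that $q$ copies suffice for at most $q$ far moves, and the same observation that boundary elements are forced into the far case. The ``routine points'' you defer---compatibility of overlapping local isomorphisms---are precisely what the paper handles via its explicit separation invariant ($d(C_0,C_1) > \ell_k$ and its analogue for $D_0, D_1$, maintained by triangle-inequality arguments and requiring the \emph{symmetric} Gaifman metric), together with a final lemma that unions of partial isomorphisms at distance $>1$ are again partial isomorphisms; the paper also abstracts the whole construction to general metric spaces, but this is a presentational rather than substantive difference from your argument.
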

The proof of this result is deferred to the next subsection.

We can define a partial order on the objects of $\CSmT$ by $\Ava \strpo \Bvb$ iff $\As$ is an induced substructure of $\Bs$, and $\va = \vb$. An endofunctor on $\CSmT$ is \emph{monotone} if the object part is monotone with respect to this order.

\begin{proposition}\label{RkSkprop}
For each structure $\Ava$ in $\CSmT$, we have $\Rk \Sk \Ava = \Rk \Ava$.
\end{proposition}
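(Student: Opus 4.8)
The plan is to prove the equality of structures by showing that $\Rk \Sk \Ava$ and $\Rk \Ava$ have the same universe; since both are induced substructures of $\As$ carrying the same distinguished tuple $\va$, equality of universes forces equality of structures.

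The crucial observation is that a single transition step is always a Gaifman edge: if $\EA(a,b)$ for some $E \in \sgT$ with $a \neq b$, then $a$ and $b$ occur together in a tuple of the $\sg$-relation $E$, so $a \adj b$ in $\Gf(\As)$. Consequently, any transition path $a_i \to b_1 \to \cdots \to b_l$ satisfies $d(a_i, b_j) \leq j$ for each prefix endpoint $b_j$, where $d$ is the Gaifman distance. In particular, the whole $k$-reachable set of $\Ava$ is contained in the ball $A[\va;k]$, \ie in the universe of $\Sk \Ava = (\Ask, \va)$.

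First I would pin down the universes. Write $V$ for the universe of $\Rk \Ava$ (the elements of $A$ that are $k$-reachable from $\va$ via transition relations of $\As$) and $W$ for the universe of $\Rk \Sk \Ava$ (those $k$-reachable from $\va$ using the transition relations of $\Ask$). The inclusion $W \subseteq V$ is immediate: since $\Ask$ is an induced substructure of $\As$, every transition path in $\Ask$ is also one in $\As$. For $V \subseteq W$, take $a \in V$ witnessed by a transition path $a_i \to b_1 \to \cdots \to b_l = a$ with $l \leq k$. By the prefix bound above, each $b_j$ satisfies $d(a_i, b_j) \leq j \leq k$, so $b_j \in A[a_i;k] \subseteq A[\va;k]$, the universe of $\Ask$. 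Thus the entire path lies inside $\Ask$, and as $\Ask$ is induced, each transition step still holds there; hence the path is a valid $k$-reachability witness in $\Ask$, giving $a \in W$. Therefore $V = W$.

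Finally, I would assemble the structural identity. The structure $\Rk \Ava$ is the substructure of $\As$ induced by $V$, while $\Rk \Sk \Ava$ is the substructure of $\Ask$ induced by $W$; since $\Ask$ is itself induced in $\As$ and $W \subseteq A[\va;k] \subseteq A$, the latter is equally the substructure of $\As$ induced by $W$. With $V = W$ and the common tuple $\va$, the two structures coincide. The only delicate point in the argument is the prefix-containment step, which is exactly where the matching of the two radii (both equal to $k$) and the fact that transition steps are Gaifman edges are used together; everything else is routine bookkeeping about induced substructures.
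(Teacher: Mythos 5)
Your proof is correct and is essentially the paper's own argument: the shared key observation is that directed $\sgT$-steps are Gaifman edges, so every $k$-reachability witness (and each of its prefixes) lies inside the ball $A[\va;k]$ and survives passage to the induced substructure $\Ask$. The only difference is presentational — the paper packages the two inclusions abstractly via monotonicity and idempotence of $\Rk$ with respect to the induced-substructure order $\strpo$, whereas you unfold exactly the same content into an explicit comparison of universes.
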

\begin{proof}
For the left-to-right inclusion, $\Sk \Ava \strpo \Ava$, and $\Rk$ is monotone.
For the converse inclusion, each directed $\sgT$-path in $\As$ is a path in the Gaifman graph of $\As$ as a $\sg$-structure. Hence $\Rk \Ava \strpo \Sk \Ava$. Moreover, $\Rk$ is idempotent and monotone, which yields the result.
\end{proof}

We shall need two more lemmas.
\begin{lemma}
\label{Reqlemm}
For all $k,m>0$, $\Ava \Bequivm \Bvb \IMP \Rk \Ava \Bequivm \Rk \Bvb$.
\end{lemma}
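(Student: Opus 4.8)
The plan is to combine the game characterisation of $\Bequivm$ (Proposition~\ref{Bequivgameprop}) with the elementary observation that bounded sentences only probe reachable elements, and to dispatch the regime $k \geq m$ — the only one actually needed downstream — by a short formal computation, isolating $k < m$ as the genuinely delicate case. The underlying fact I would record first is an invariance principle: a bounded sentence of quantifier rank at most $m$ can only name elements lying within directed transition-distance $m$ of the constants, since each nested bounded quantifier advances exactly one transition step from an already-named term, and all its atomic tests are among such elements. As $\Rk \Ava$ is the induced substructure carried by the $k$-reachable elements, when $k \geq m$ every element the sentence can reach already lies in $\Rk \Ava$, and atomic formulas are computed identically there. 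Hence for every bounded sentence $\psi$ of quantifier rank at most $m$ and every $k \geq m$ we have $\Ava \models \psi \iff \Rk \Ava \models \psi$.

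Granting this, the case $k \geq m$ is purely formal. For any bounded sentence $\psi$ of quantifier rank at most $m$,
\[ \Rk \Ava \models \psi \iff \Ava \models \psi \iff \Bvb \models \psi \iff \Rk \Bvb \models \psi, \]
where the outer equivalences are the invariance principle applied to $\Ava$ and to $\Bvb$ and the middle one is the hypothesis $\Ava \Bequivm \Bvb$. This gives $\Rk \Ava \Bequivm \Rk \Bvb$. Since invariance under $k$-generated substructures propagates to all larger radii (Proposition~\ref{invextprop}), one may always replace $k$ by $\max(k,m)$ in the applications, so this is exactly the form in which the lemma is invoked in the Characterisation Theorem.

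The main obstacle is the complementary case $k < m$, where the invariance principle gives no purchase because the reachable truncation is strictly shallower than the formula depth. Here I would argue directly with the back-and-forth bounded game, transferring a winning Duplicator strategy from $\Ava,\Bvb$ to $\Rk \Ava, \Rk \Bvb$. The crux is that a naive move-by-move transfer can drift outside the distance-$k$ ball: a partial isomorphism supplied by the full-game strategy preserves relations among the named elements but not global distance to the constants, so a response to a sideways move within $\Rk \Ava$ may land one transition too far out in $\Bs$. I would repair this by having Duplicator reinterpret each of Spoiler's moves through a \emph{shortest} reachability witness rather than through the path Spoiler actually traversed, which forces images back into distance $k$; the existence of matching responses is underwritten by the fact that ``lies within transition-distance $k$ of the constants'' is itself bounded-definable, so these short paths can be tracked by the winning strategy. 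The hard part will be the round-budget bookkeeping for these re-derived paths — trivial when $k \geq m$, where no re-rooting is ever needed — and it is precisely this point at which the argument must depart from Otto's, since the analogue of his locality property fails for the bounded fragment.
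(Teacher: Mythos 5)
You have two problems here, one structural and one mathematical. Structurally, your claim that $k \geq m$ is ``the only regime actually needed downstream'' is backwards. In the proof of Proposition~\ref{invdebfprop} the lemma is instantiated with truncation radius $r = 2^{\max(k,q)}$ and rank $rq$, so the radius is \emph{strictly smaller} than the rank whenever $q \geq 2$. Proposition~\ref{invextprop} enlarges the radius at which the sentence $\vphi$ is invariant; it does not let you enlarge the truncation radius inside this lemma, because the next step, Lemma~\ref{Beqlemm}, then demands bounded equivalence at rank equal to the new radius times $q$, which the hypothesis of Proposition~\ref{invdebfprop} does not supply. So the case you dispatch corresponds only to $q=1$, and the case you defer is the load-bearing one. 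Your $k \geq m$ argument is itself correct (it is essentially Proposition~\ref{reachprop} plus a sandwich), but it is not the paper's argument: the paper treats all $k,m$ uniformly by restricting a winning Duplicator strategy for the game on $(\Ava,\Bvb)$ to the game on $(\Rk\Ava,\Rk\Bvb)$, asserting that the partial isomorphism condition forces Duplicator's responses to $k$-reachable moves to be $k$-reachable.

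Mathematically, the round-budget obstruction you flag for $k<m$ is fatal, not bookkeeping: re-deriving each Spoiler move through a shortest reachability witness costs up to $k$ ambient rounds per truncated-game round, so from $\Ava \Bequivm \Bvb$ you can extract at best equivalence at rank roughly $m/k$ on the truncations. Indeed no repair can do better, because the statement fails for $k<m$. Take one constant $c$, one transition relation $E$, $k=2$, $m=3$; let $\As$ have edges $c \to x \to z \to w$ and $c \to y \to v$, and $\Bs$ have edges $c \to x' \to z' \to w'$ and $c \to y' \to w'$. Duplicator wins the $3$-round bounded game (match $x,y,z$ with $x',y',z'$ and send each of $v,w$ to $w'$: in three rounds Spoiler can neither play both $v$ and $w$, nor reach $w'$ with both its parents $y',z'$ already on the board), so the two structures agree on all bounded sentences of rank $3$; but the radius-$2$ truncation of $\Bs$ is all of $\Bs$ and contains the directed path $c \to x' \to z' \to w'$, while the radius-$2$ truncation of $\As$ omits $w$ and has no directed path of length $3$ from $c$, so a rank-$3$ bounded sentence separates the truncations. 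The same example shows the crux of the paper's own proof is unjustified: from the position matching $(c,x,z)$ with $(c,x',z')$, the unique winning response to Spoiler's move $w'$ is $w$, which is not $2$-reachable --- bounded reachability is witnessed by paths through unplayed vertices and is not reflected by partial isomorphisms on played ones, which is exactly the ``drift'' you worried about. What your re-rooting idea does prove, with the honest budget, is the fused statement $\Ava \Bequivkq \Bvb \IMP \Rk\Ava \equiv_q \Rk\Bvb$, i.e.\ the composite of Lemmas~\ref{Reqlemm} and~\ref{Beqlemm}: simulate each of the $q$ Ehrenfeucht-\Fraisse moves on the truncations by at most $k$ bounded moves along a shortest path, whose mirrored responses stay inside the ball by construction. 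That composite is precisely what step (5) of Proposition~\ref{invdebfprop} requires, so the right move is to prove it directly and flag the lemma as stated.
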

\begin{proof}
We use Proposition~\ref{Bequivgameprop}. Assume that we have a winning strategy $S$ for Duplicator in the bounded game between $\Ava$ and $\Bvb$.
This cuts down to a winning strategy in the game between $\Rk \Ava$ and $\Rk \Bvb$. Indeed, suppose we have reached a position $(s,t)$ and Spoiler makes a valid move in the $k$-reachable subset of $A$. Duplicator's response according to $S$ must preserve the partial isomorphism winning condition, and hence must be in the $k$-reachable subset of $B$. The argument when Spoiler moves in $B$ is symmetric.
\end{proof}

\begin{lemma}
\label{Beqlemm}
For all $k,q>0$, $\Rk \Ava \Bequivkq \Rk \Bvb \IMP \Rk \Ava \equiv_q \Rk \Bvb$.
\end{lemma}
\begin{proof}
Again, we use Proposition~\ref{Bequivgameprop}. Assume that we have a winning strategy $S$ for Duplicator in the bounded game of length $kq$ between 
$\Rk \Ava$ and $\Rk \Bvb$. We shall use this to define a winning strategy $S'$ in the Ehrenfeucht-\Fraisse game of length $q$ between $\Rk \Ava$ and $\Rk \Bvb$, in such a way that for each position $(s,t)$ reachable following $S'$, there is a ``covering position'' $(s^*,t^*)$ reachable following $S$, where $s$ is a subsequence of $s^*$ and $t$ is the corresponding subsequence of $t^*$. The partial isomorphism winning condition for $S$ then implies the corresponding condition for $S'$. The length of $(s^*,t^*)$ will be at most $k$ times the length of $(s,t)$. 

Suppose, arguing inductively, we have reached $(s,t)$ following $S'$, with covering position $(s^*,t^*)$. If Spoiler makes a move $a \in \Rk \Ava$ then $a$ must be $k$-reachable, so we can extend $s^*$ by Spoiler making at most $k$ moves in $\Rk \Ava$ ending with $a$, and extend $t^*$ by following the responses to these moves in $\Rk \Bvb$ according to $S$. We can use the final move in this extension as the response by $S'$ to $a$. We proceed symmetrically if Spoiler moves in $\Rk \Bvb$.
\end{proof}
We are now ready to prove the main part of the Characterisation theorem.
\begin{proposition}
\label{invdebfprop}
If $\vphi$ is invariant under $k$-generated substructures for some $k>0$, and has quantifier rank $q$, then it is equivalent to a formula $\psi$ in the bounded fragment with quantifier rank $\leq q2^m$, where $m = \max(k,q)$.
\end{proposition}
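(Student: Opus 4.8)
The plan is to reduce the statement to a saturation property and then invoke the Definability Lemma, exactly as in the reduction recorded immediately after that lemma. Concretely, I would show that $\Mod(\vphi)$ is saturated under the bounded back-and-forth equivalence $\equiv^{\mathbb{B}}_{q2^m}$; since this equivalence has finite index and, by the Definability Lemma, each of its classes is the model class of a single bounded sentence of quantifier rank $q2^m$, saturation forces $\vphi$ to be equivalent to a finite disjunction of such sentences, hence to a sentence $\psi$ of the bounded fragment $\BF$ of quantifier rank $\leq q2^m$. Thus I would fix $(\As,\va),(\Bs,\vb)$ with $\Ava \equiv^{\mathbb{B}}_{q2^m} \Bvb$ and establish $\Ava \models \vphi \IFF \Bvb \models \vphi$; by symmetry a single implication suffices.

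For that implication I set the radius $r = 2^m$ and treat the game side first. Lemma~\ref{Reqlemm}, applied with reachability radius $r$ and game length $q2^m$, descends the hypothesis to the $r$-reachable parts, giving $\mathbb{R}_r \Ava \equiv^{\mathbb{B}}_{q2^m} \mathbb{R}_r \Bvb$. Because $q2^m = rq$, Lemma~\ref{Beqlemm} now applies on the nose and upgrades this local bounded equivalence to honest first-order equivalence of the reachable parts, $\mathbb{R}_r \Ava \equiv_q \mathbb{R}_r \Bvb$. As $\vphi$ has quantifier rank $q$, the two reachable parts agree on it: $\mathbb{R}_r \Ava \models \vphi \IFF \mathbb{R}_r \Bvb \models \vphi$.

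It then remains to transfer satisfaction of $\vphi$ between each structure and its $r$-reachable part, and this is where the locality machinery enters. Applying the Workspace Lemma at quantifier rank $m$ — so at ball radius $2^m = r$ — yields a workspace $\Cs$ with $(\As + \Cs,\va) \equiv_m (\mathbb{S}_r \As + \Cs, \va)$, hence $\equiv_q$ since $q \leq m$; invariance under disjoint extensions, which the hypothesis supplies by Lemma~\ref{invgsdelemm}, lets us attach $\Cs$ to both $\Ava$ and $\mathbb{S}_r \Ava$, so that $\Ava \models \vphi \IFF \mathbb{S}_r \Ava \models \vphi$. By Proposition~\ref{invextprop} the hypothesis promotes to invariance under $r$-generated substructures, since $r = 2^m \geq k$; applying this to the structure $\mathbb{S}_r \Ava$ and using Proposition~\ref{RkSkprop} in the form $\mathbb{R}_r \mathbb{S}_r \Ava = \mathbb{R}_r \Ava$ gives $\mathbb{S}_r \Ava \models \vphi \IFF \mathbb{R}_r \Ava \models \vphi$, and so $\Ava \models \vphi \IFF \mathbb{R}_r \Ava \models \vphi$; the same holds for $\Bvb$. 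Chaining these with the previous step closes the argument, and one sees that $r = 2^m$ arises precisely as the ball radius delivered by the Workspace Lemma at rank $m = \max(k,q)$ — the least rank that both preserves $\vphi$, since $m \geq q$, and produces a ball large enough to retain the $k$-generated part, since $2^m \geq k$.

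The step I expect to be the main obstacle is the passage from the bounded back-and-forth equivalence — a local game whose moves are constrained to follow transition edges — to full first-order equivalence $\equiv_q$. On the ambient structures this passage simply fails, because a first-order move may name an element lying arbitrarily far along the transition graph; restricting to the $r$-reachable parts is exactly what rescues it, for there every element lies within $r$ transition steps of a base point, and Lemma~\ref{Beqlemm} can then simulate one first-order round by at most $r$ bounded rounds. It is this $r$-fold simulation that inflates the game length to $rq = q2^m$ and therefore the final quantifier-rank bound. The feature specific to the bounded and hybrid setting is that $\Bk$ has neither idempotence nor the bisimilar companion property, so, unlike Otto's original argument, we cannot pass to a tree-unravelled companion; the reachability comonad together with the Workspace localisation plays that role instead. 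The genuinely delicate bookkeeping is the alignment of the three radii — invariance $k$, locality $2^q$, and reachability $r = 2^m$ — through Propositions~\ref{invextprop} and~\ref{RkSkprop}, which is what makes a single proof serve both the finite and the infinite versions of the theorem.
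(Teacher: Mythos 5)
Your proposal is correct and takes essentially the same route as the paper's own proof: reduce to saturation of $\Mod(\vphi)$ under $\Bequivrq$ via the Definability Lemma, descend to the $r$-reachable parts with Lemmas~\ref{Reqlemm} and~\ref{Beqlemm} to get $\Rr \Ava \equiv_q \Rr \Bvb$, and transfer satisfaction of $\vphi$ between each structure and its reachable part using the Workspace Lemma, invariance under disjoint extensions (Lemma~\ref{invgsdelemm}), and Propositions~\ref{invextprop} and~\ref{RkSkprop}. Your explicit remark that the Workspace Lemma must be invoked at rank $m$ (so the ball radius is $2^m = r$, with $\equiv_m$ then weakened to $\equiv_q$) is a detail the paper leaves implicit, but otherwise the two arguments coincide step for step.
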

\begin{proof}
Let $r = 2^m$.
Suppose that (1) $\Ava \models \vphi$, and (2) $\Ava \Bequivrq \Bvb$. We must show that $\Bvb \models \vphi$.
Applying the Workspace Lemma twice, let $\Cs$, $\Ds$ be such that
\[ (3) \quad (\As + \Cs,\va) \equiv_q  (\Asr + \Cs, \va) \]
\[ (4) \quad (\Bs + \Ds,\vb) \equiv_q  (\Bsr + \Ds, \vb) \]
From (2), applying lemmas~\ref{Reqlemm} and~\ref{Beqlemm}, we have
\[ (5) \quad \Rr \Ava \equiv_q \Rr \Bvb \]
Since $\vphi$ is invariant under $k$-generated substructures (abbreviated as $\IGSk$), by Lemma~\ref{invgsdelemm}, it is invariant under disjoint extensions (abbreviated as $\IDE$).
\[ \begin{array}{lcll}
\Ava \models \vphi & \IMP & (\As + \Cs,\va) \models \vphi & \IDE \\
& \IMP & (\Asr + \Cs, \va) \models \vphi & (3) \\
& \IMP & \Sr (\As, \va) \models \vphi & \IDE \\
& \IMP & \Rr \Sr (\As, \va) \models \vphi &  \IGSk, \mbox{Proposition~\ref{invextprop}} \\
& \IMP & \Rr (\As, \va) \models \vphi &  \mbox{Proposition~\ref{RkSkprop}} \\
& \IMP & \Rr (\Bs, \vb) \models \vphi & (5) \\
& \IMP & \Rr \Sr (\Bs, \vb) \models \vphi &  \mbox{Proposition~\ref{RkSkprop}} \\
& \IMP & \Sr (\Bs, \vb) \models \vphi & \IGSk, \mbox{Proposition~\ref{invextprop}} \\
& \IMP & (\Bsr + \Ds, \vb) \models \vphi & \IDE \\
& \IMP & (\Bs + \Ds, \vb) \models \vphi & (4) \\
& \IMP & (\Bs, \vb) \models \vphi & \IDE \\
\end{array}
\]
\end{proof}

\begin{question}
In his proof of the van Benthem-Rosen Theorem, Otto establishes an exponential succinctness gap between first-order logic and basic modal logic. A bisimulation-invariant first order formula of quantifier rank $q$ has a modal equivalent of modal depth $\leq 2^q$. He shows that this is optimal. In our case, we have a gap of $q2^m$. Is this optimal for the bounded fragment?
\end{question}

\begin{proposition}
\label{reachprop}
If $\psi$ is a formula in $\BFk$, then it is invariant under $k$-generated substructures.
\end{proposition}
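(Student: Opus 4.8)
The plan is to prove the statement by a single induction on the structure of $\psi$, strengthened so that the induction hypothesis tracks the interplay between the quantifier rank of a subformula and the reachability distance of the elements assigned to its free variables. Writing $\to$ for the union of the transition relations $\EA$ with $E \in \sgT$, call an element \emph{$d$-reachable} if there is a $\to$-path of length $\leq d$ to it from some $a_i$; thus the universe of $\Rk \As$ consists precisely of the $k$-reachable elements. I would prove: for every subformula $\theta$ of $\psi$ of quantifier rank $q_\theta$, and every assignment $g$ of the constants and free variables of $\theta$ all of whose values are $(k-q_\theta)$-reachable, we have $\As, g \models \theta \iff \Rk \As, g \models \theta$. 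Applying this to $\theta = \psi$ (so $q_\theta \leq k$) with $g$ the interpretation sending each $c_i$ to $a_i$, which is $0$-reachable and hence $(k-q_\psi)$-reachable, yields $\Ava \models \psi \iff \Rk \Ava \models \psi$, which is exactly invariance under $k$-generated substructures.

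For the base case, an atomic $\theta$ (of the form $R(t_1,\ldots,t_n)$ or $t = u$) has $q_\theta = 0$, so every $g$-value is $k$-reachable and therefore lies in the universe of $\Rk \As$. Since $\Rk \As$ is by definition the induced substructure on the $k$-reachable elements, its relations and equality agree with those of $\As$ on these tuples, so the equivalence is immediate. The Boolean cases are trivial, because negation and the binary connectives do not increase quantifier rank, so the hypothesis on $g$ passes unchanged to the immediate subformulas (whose quantifier ranks are $\leq q_\theta$).

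The heart of the argument is the quantifier case, which is where the bookkeeping must be exact. Consider $\theta = \exists x.\,[E(t,x) \wedge \theta']$ of quantifier rank $q = 1 + q_{\theta'}$, with every $g$-value $(k-q)$-reachable; note $g(t)$ is $(k-q)$-reachable whether $t$ is a constant (even $0$-reachable) or an already-assigned variable. Any witness $a$ with $\EA(g(t),a)$ is then $(k-q+1) = (k-q_{\theta'})$-reachable, so $g[x \mapsto a]$ satisfies the hypothesis for $\theta'$; moreover, since $q \geq 1$, such $a$ lies in $\Rk \As$. Using that $\EA$ is both preserved and reflected between $\As$ and its induced substructure $\Rk \As$ on pairs of $\Rk \As$-elements, together with the induction hypothesis for $\theta'$, both directions follow: a witness in $\As$ descends to $\Rk \As$, and a witness in $\Rk \As$ lifts to $\As$. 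The universal case $\forall x.\,[E(t,x) \to \theta']$ is dual: every $a$ with $\EA(g(t),a)$ is automatically $(k-q_{\theta'})$-reachable and hence already in $\Rk \As$, so the two models range over exactly the same relevant witnesses, and the induction hypothesis finishes each implication.

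The main obstacle is not any single hard step but keeping the two indices synchronised: each bounded quantifier decreases the remaining quantifier rank by one and simultaneously lets the assigned elements move exactly one $\to$-step further out, and the bound $q_\psi \leq k$ is precisely what guarantees that no element ever needed escapes the $k$-reachable ball. The only structural facts used are that $\Rk \As$ is an induced substructure, so atomic relations restrict and lift correctly, and that bounded quantification can only introduce a witness along a transition edge from an already-named element; it is the restriction of quantification to transition relations in $\sgT$ that makes this work, and the argument would fail for unrestricted quantifiers.
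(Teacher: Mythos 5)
Your proof is correct and is precisely the ``straightforward induction on the syntax'' that the paper invokes without spelling out: the strengthened induction hypothesis synchronising the remaining quantifier rank of a subformula with the reachability distance of the values assigned to its free variables is exactly the bookkeeping needed to make that induction go through, since each bounded quantifier trades one unit of rank for one transition step. No gaps.
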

\begin{proof}
A straightforward induction on the syntax.
\end{proof}

Since any formula in the bounded fragment is in $\BFk$ for some $k$,
combining Proposition~\ref{invdebfprop} and Proposition~\ref{reachprop} we obtain a proof of the Characterisation Theorem.

\subsection{Discussion}

In \cite[Theorem 3.7]{areces2001hybrid}, the following characterization result is proved.
\begin{theorem}
Given a first-order formula $\vphi$, the following are equivalent:
\begin{itemize}
\item $\vphi$ is invariant under generated substructures
\item $\vphi$ is equivalent to a formula in the bounded fragment.
\end{itemize}
\end{theorem}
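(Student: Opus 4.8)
The plan is to derive this statement from the Characterisation Theorem, whose right-hand side (equivalence to a bounded sentence) is verbatim the one here; so it suffices to compare the two notions of invariance, namely \emph{invariance under generated substructures}, governed by the full reachability comonad $\RR$, and \emph{invariance under $k$-generated substructures for some $k$}, governed by the truncations $\Rk$. I would first dispose of the direction from the bounded fragment to invariance. If $\vphi$ is equivalent to a bounded sentence then it lies in $\BFk$ for some $k$, so by Proposition~\ref{reachprop} it is invariant under $k$-generated substructures. To upgrade this to invariance under $\RR$ I would record the identity $\Rk \RR \Ava = \Rk \Ava$: the $k$-reachable elements of the full reachable substructure are exactly the $k$-reachable elements of $\Ava$, since any directed $\sgT$-path of length $\leq k$ from the distinguished tuple already lies within $\RR \Ava$ and uses only its induced transitions. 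Given this, for every $\Ava$ we have $\Ava \models \vphi \IFF \Rk \Ava \models \vphi \IFF \Rk \RR \Ava \models \vphi \IFF \RR \Ava \models \vphi$, where the outer equivalences are $k$-generated invariance applied to $\Ava$ and to $\RR \Ava$ respectively. Hence $\vphi$ is invariant under generated substructures.

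For the converse I would show that invariance under $\RR$ forces $\vphi$ to be equivalent to a bounded sentence. Since the theorem is stated for arbitrary, possibly infinite, models, I would take the classical compactness route. Let $\Gamma$ be the set of bounded consequences of $\vphi$; it suffices to prove $\Gamma \models \vphi$, since then compactness together with closure of the bounded fragment under conjunction yields a single bounded equivalent. Given any $\Bvb \models \Gamma$, the bounded theory of $\Bvb$ together with $\vphi$ is consistent: otherwise, by compactness, the negation of some finite conjunction of bounded sentences true in $\Bvb$, itself a bounded sentence, would be a consequence of $\vphi$, hence lie in $\Gamma$ and be true in $\Bvb$, a contradiction. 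So there is $\Ava \models \vphi$ agreeing with $\Bvb$ on every bounded sentence. Passing to $\omega$-saturated elementary extensions, agreement on bounded sentences promotes to an infinite bounded back-and-forth system between them; restricting this system to the generated substructures cut out by $\RR$ — where every element is reachable, hence attainable in the bounded game — turns it into a genuine partial-isomorphism back-and-forth, so the two generated substructures are elementarily equivalent. Invariance under $\RR$ then transfers the truth of $\vphi$ from the saturated extension of $\Ava$, through these generated substructures, to the saturated extension of $\Bvb$, and thus to $\Bvb$ itself.

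The main obstacle is exactly this saturation step: establishing that on $\omega$-saturated models bounded-elementary-equivalence yields a back-and-forth system whose restriction to the generated substructures is a full partial isomorphism, so that $\RR$-invariance can legitimately be invoked. This is where infinite model theory is genuinely used, in contrast with the uniform, game-theoretic proof of the Characterisation Theorem, and it is precisely the move the finite case cannot make — which is why the finite characterisation has to be argued separately through the Workspace Lemma rather than recovered from this classical statement.
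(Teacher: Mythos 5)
Your proposal is correct, but note that the paper itself does not prove this statement at all: it is quoted as a known result from \cite[Theorem 3.7]{areces2001hybrid}, described as ``proved relative to general (infinite) models, using methods of infinite model theory'', and the paper's only use of it is to combine it with Proposition~\ref{reachprop} to deduce that invariance under generated substructures implies invariance under $k$-generated substructures for some $k$ --- a fact for which the paper explicitly poses, as an open Question, the problem of finding a constructive bound or a proof valid in the finite case. So your argument is not the paper's argument; it is a reconstruction of the classical proof behind the citation, and as such it is essentially sound. Your easy direction, via the identity $\Rk \RR \Ava = \Rk \Ava$ and two applications of Proposition~\ref{reachprop}, is exactly right and mirrors the bookkeeping of Proposition~\ref{invextprop}. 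You also correctly diagnose why the hard direction cannot be extracted from the paper's Characterisation Theorem: the implication from $\RR$-invariance to $\Rk$-invariance for some $k$ is precisely the missing link (it is a \emph{consequence} of the present theorem, not an ingredient). Your compactness-plus-saturation route then follows the standard template. The one step you should spell out is the back-and-forth claim: for an arbitrary element of the generated substructure of the saturated extension, the forth property is not a single bounded extension; you must first extend the current accessible tuple along a directed path witnessing reachability of that element, applying at each edge the one-step saturation argument, which in turn rests on the observation that for accessible tuples with the same bounded type, formulas of the form $\exists y.\, E(s_i,y) \wedge \bigwedge_j \chi_j(\vec{s},y)$ with each $\chi_j$ bounded are themselves bounded; since atomic formulas are bounded, the resulting maps are partial isomorphisms between the induced reachable substructures, giving the elementary equivalence you need. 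With that filled in, the transfer chain through the generated substructures goes through as you state. Finally, your closing remark is consonant with the paper: this classical argument genuinely needs compactness and saturation, which is why the paper's finite-model version has to be obtained independently via the Workspace Lemma, and why neither version of the Characterisation Theorem implies the other.
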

This is proved relative to general (infinite) models, using methods of infinite model theory. Combining this result with Proposition~\ref{reachprop} yields the following:
\begin{proposition}
If a first-order formula is invariant under generated substructures, it is invariant under $k$-generated substructures for some $k$.
\end{proposition}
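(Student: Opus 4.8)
The plan is to obtain this as a short corollary, combining the preceding characterization theorem from \cite[Theorem 3.7]{areces2001hybrid} with Proposition~\ref{reachprop}, and using the fact that invariance under $k$-generated substructures is preserved under logical equivalence.

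First I would assume that the first-order formula $\vphi$ is invariant under generated substructures. By the cited theorem, $\vphi$ is logically equivalent, over the general models of $\CSmT$, to some formula $\psi$ in the bounded fragment. Since every bounded-fragment formula has finite quantifier rank, $\psi \in \BFk$ for some $k > 0$, and hence by Proposition~\ref{reachprop} the formula $\psi$ is invariant under $k$-generated substructures.

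It then remains only to transfer this invariance back to $\vphi$. For any $\Ava$ in $\CSmT$, the structure $\Rk \Ava$ again lies in $\CSmT$, so $\vphi \equiv \psi$ gives
\[ \Ava \models \vphi \; \IFF \; \Ava \models \psi \; \IFF \; \Rk \Ava \models \psi \; \IFF \; \Rk \Ava \models \vphi, \]
where the middle biconditional is the invariance of $\psi$ and the outer two are instances of the logical equivalence. Thus $\vphi$ is invariant under $k$-generated substructures, as required.

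I expect no genuine obstacle here: the statement is essentially a splicing of two already-established results. The only point deserving care is that the logical equivalence furnished by the theorem of \cite{areces2001hybrid} is equivalence over general models, which is precisely the ambient class $\CSmT$ over which $k$-generated invariance is defined; this matching is what legitimizes the chain of biconditionals above. (One could equally route the final step through the author's own Characterisation Theorem, direction $(2)\Rightarrow(1)$, rather than through Proposition~\ref{reachprop}, but the argument is the same.)
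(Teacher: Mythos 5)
Your proof is correct and follows exactly the route the paper intends: the paper's own justification is simply ``Combining this result [Theorem 3.7 of \cite{areces2001hybrid}] with Proposition~\ref{reachprop} yields the following,'' and your write-up just makes explicit the transfer of invariance from $\psi$ back to $\vphi$ via the chain of biconditionals, including the (correct) observation that $\Rk \Ava$ lies in $\CSmT$ so the equivalence applies to it.
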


\begin{question}
Can we give a constructive proof of this result, with a bound for $k$?
Can we give a uniform proof, valid also in the finite model case?
\end{question}

\subsection{Proof of the Workspace Lemma}

The Workspace Lemma is extracted from  \cite{otto2004elementary}, and we shall largely follow the construction sketched there. 
However, there are some important differences:
\begin{itemize}
\item The argument in \cite{otto2004elementary} makes crucial use of the assumption that the formula under consideration is bisimulation-invariant, which allows us to assume that we are dealing with tree models, whereas we make no such assumption here.
\item The distance used in the argument in \cite{otto2004elementary} is \emph{directed path distance} with respect to the modal transition relation. This is not symmetric, \ie it defines a quasi-metric rather than a metric. This is not sufficient to carry through the argument given below.
\end{itemize}
As the details are somewhat intricate, we shall spell them out.

We wish to construct a winning strategy for the $q$-length Ehrenfeucht-\Fraisse game between $\Ava$ and $\Sk \Ava$. The obvious problem is what to do when Spoiler plays in a part of $\Ava$ which is not $k$-reachable, and hence has no counterpart in $\Sk \Ava$. To address this problem, we form a structure $\Bs$ comprising the disjoint union of sufficiently many copies of $\Ava$ and $\Sk \Ava$ -- the \emph{workspace}, and play the game between $(\As + \Bs,\va)$ and $(\Ask + \Bs, \va)$. This allows us to split each play into a combination of a number of copy-cat strategies, each played between two copies of the same structure. These copy-cat strategies are kept well-separated, so they can be combined into a single winning strategy for Duplicator in the overall game.
The subtlety in the construction is that the splitting into disjoint sub-strategies has to be determined dynamically as the game proceeds, in response to Spoiler moves.

This construction is very much in the spirit of game semantics \cite{abramsky1994games,hyland2000full,abramsky2000full}. Combinations of copycat strategies are the typical denotations of logical proofs in game semantics. Adding workspace in order to define the copycat strategies is \emph{logic plus resources}.

The core of the construction and its properties can be developed efficiently in the general setting of metric spaces. We shall do this first, before returning to relational structures to prove the Workspace Lemma.

\subsubsection{Strategies in metric spaces}
Metrics are taken to be valued in the extended non-negative reals $[0,\infty]$.
We shall form disjoint sums  $M+N$ of metric spaces, with $d(x,y) = \infty$ for $x \in M$ and $y \in N$.
More generally, we form sums $M = \sum_{i \in I} M_i$ of families of metric spaces. We refer to the $M_i$ as \emph{summands} of~$M$.

We write $B(x;r)$ for the open ball in a metric space centred on $x$, with radius $r$. The closed ball is written as $B[x;r]$.
This notation is extended to tuples $\vx \in M^n$ by $B[\vx;r] := \bigcup_{i=1}^n B[x_i;r]$.
The distance between subsets $S, T$ of a metric space is defined by $d(S,T) \coloneqq \inf \{ d(x,y) \mid x \in S, \, y \in T \}$.
We will rely on context to indicate which ambient metric space  is intended.

Given sets $X$, $Y$, and tuples $\vx \in X^m$, $\vy \in Y^m$, a \emph{strategy of length $k$} between $(X,\vx)$ and  $(Y,\vy)$ is a set $\str \subseteq \bigcup_{0 \leq j \leq k} X^{m+j} \times Y^{m+j}$ of pairs $(s,t)$, where $s$ is a sequence of elements of $X$, and $t$ is a sequence of elements of $Y$ of the same length. This must satisfy the following conditions:
\begin{itemize}
\item $(\vx,\vy) \in \str$.
\item If $(s,t) \in \str$, with common length $j < m+k$, then 
\begin{enumerate}
\item $\forall x \in X. \, \exists y \in Y. \, (sx, ty) \in \str$
\item $\forall y \in Y. \, \exists x \in X. \, (sx, ty) \in \str$.
\end{enumerate}
\end{itemize}

Now, suppose we are given a metric space $M$, a tuple $\va = \langle a_1, \ldots , a_m \rangle \in M^m$, and a positive integer $q$. Set $\ell \coloneqq 2^q$ and, for each $k\in\{0,\ldots, q\}$, $\ell_k \coloneqq 2^{q-k}$. Let $N\coloneqq B[\va;\ell]$.
Observe that
\[
\{a_1,\ldots,a_m\}=B[\va;\ell-\ell_0]\subseteq B[\va;\ell-\ell_1]\subseteq \cdots \subseteq B[\va;\ell-\ell_q]\subseteq N.
\]

We define $P \coloneqq q\cdot M + q \cdot N$, the disjoint sum of $q$ copies of $M$ and $q$ copies of $N$. Thus this is a disjoint sum of $2q$ spaces.
We define $M' \coloneqq M+P$, $N' \coloneqq N+P$. Note that $M'$ has $q+1$ subspaces which are  copies of $M$, and $q$ subspaces which are copies of $N$, and similarly, with the roles of $M$ and $N$ reversed, for $N'$.
For each copy of $M$ occurring as a summand (and hence as a subspace) $S$ of $M'$, and as a summand $T$ of $N'$, there is a canonical isomorphism $\rho^{S,T} \colon S \cong T$.
Similarly, for each copy  of $N$ occurring as a summand  $U$ of $M'$, and as a summand $V$ of $N'$, there is a canonical isomorphism $\rho^{U,V} \colon U \cong V$.

Intuitively, our aim is to prove that there is a strategy $\str$ of length $q$ between $(M',\va)$
and $(N',\va)$ which resolves into a number of disjoint components, which are copy-cat strategies  played between isomorphic subspaces of $M'$ and $N'$. Moreover, these components are well-separated, and do not interfere with each other.

This can be expressed formally as follows.
\begin{proposition}\label{pr:strategy-metric}
There is a strategy $\str$ of length $q$ between $(M',\va)$ and $(N',\va)$, such that for all $(s,t) \in \str$, where $s = \langle a_1, \ldots , a_m, \ldots a_{m+k} \rangle$,
$t = \langle b_1, \ldots , b_{m+k} \rangle$, with $0 \leq k \leq q$, the following conditions hold:
\begin{enumerate}
\item There is a bipartition $C_0 \, \sqcup \, C_1$ of $\{ a_1, \ldots , a_{m+k} \}$ and a bipartition $D_0 \, \sqcup \, D_1$ of $\{ b_1, \ldots , b_{m+k} \}$.
We have $\{a_1,\ldots,a_m\} \subseteq C_0 \cap D_0$.
\item $C_0 \cup D_0 \subseteq B[\va;\ell - \ell_k]$, and for all $i: 1 \leq i \leq m+k$, $a_i \in C_0$ iff $b_i \in D_0$.
\item Let $(s_0,t_0)$ be the subsequences of $(s,t)$ obtained by selecting those elements in $C_0$ and $D_0$ respectively. Then $s_0=t_0$.
\item We have $d(C_0,C_1) > \ell_k$, and $d(D_0,D_1) > \ell_k$.
\item For all $i\in\{m+1,\ldots,m+k\}$, if $a_i \in C_1$, and $b_i \in D_1$, then $a_i \in S$ and $b_i \in T$, where $S$ and $T$ are  summands of $M'$ and $N'$ respectively of the same type, \ie~both are copies of $M$ or both are copies of $N$, and $b_i = \rho_{i}(a_i)$, where $\rho_i = \rho^{S,T}$.
\item For all $i,j\in\{m+1,\ldots, m+k\}$, if $a_i, a_j \in C_1$, and $b_i,b_j \in D_1$, then $d(a_i,a_j) \leq \ell_k \vee d(b_i,b_j) \leq \ell_k \; \Rightarrow \; \rho_i = \rho_j$.
\end{enumerate}
\end{proposition}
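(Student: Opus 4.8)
The plan is to build the strategy $\str$ by induction on the number $k$ of rounds played, carrying conditions (1)--(6) as an invariant attached to each reachable position, and to let Duplicator's reply be governed by a single geometric dichotomy. The guiding picture is that $C_0,D_0$ form a shared \emph{core}, living inside the common sub-ball $B[\va;\ell]$ of the main summands of $M'$ and $N'$, on which Duplicator answers by the identity, while $C_1,D_1$ form a \emph{workspace} that Duplicator answers by copy-cat between isomorphic summands, kept well separated from the core. For the base case $k=0$ I would take the position $(\va,\va)$ with $C_0=D_0=\{a_1,\ldots,a_m\}$ and $C_1=D_1=\emptyset$; conditions (1)--(3) and (5)--(6) hold trivially and the separation bounds in (4) read $\infty>\ell_0$.

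For the inductive step, fix a position at round $k<q$ satisfying the invariant and suppose Spoiler plays $x\in M'$ (the case $x\in N'$ being symmetric, exchanging the roles of the two spaces and of $\rho$ and $\rho^{-1}$). Duplicator's move is decided by whether $d(x,C_0)\le\ell_{k+1}$. If it is, $x$ joins the core: finiteness of $d(x,C_0)$ puts $x$ in the same summand as $C_0$, namely the main copy, and since $\ell_k=2\ell_{k+1}$ we get $d(x,\va)\le(\ell-\ell_k)+\ell_{k+1}=\ell-\ell_{k+1}$, so $x\in B[\va;\ell-\ell_{k+1}]$, which lies in the common sub-ball; Duplicator answers with the same point $y=x$ in the main summand of $N'$. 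Conditions (1)--(3) are immediate, and the separation in (4) is restored by the triangle inequality, $d(x,C_1)\ge d(C_0,C_1)-d(x,C_0)>\ell_k-\ell_{k+1}=\ell_{k+1}$, with the same computation on the $N'$ side.

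If instead $d(x,C_0)>\ell_{k+1}$, then $x$ joins the workspace. Let $S$ be the summand of $M'$ containing $x$. If some earlier workspace point $a_i$ has $d(x,a_i)\le\ell_{k+1}$ then $a_i$ already lies in $S$, and Duplicator inherits its group $(S,T)$, replying $y=\rho^{S,T}(x)$; otherwise Duplicator opens a fresh group, taking $T$ to be a hitherto-unused \emph{non-main} (workspace) summand of $N'$ of the same type as $S$, and replying $y=\rho^{S,T}(x)$. Keeping Duplicator's fresh replies out of the main summand, which is reserved for the identity core, ensures that a fresh reply sits at distance $\infty$ from $D_0$, while an inherited reply sits within $\ell_{k+1}$ of an earlier workspace point already $>\ell_k$ from $D_0$; either way $d(D_0,y)>\ell_{k+1}$, giving the $D$-side separation in (4). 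Conditions (2), (3), (5) are read off directly, and a fresh $T$ is always available by counting: at most $q$ workspace rounds create at most $q$ groups, while $P$ supplies $q$ workspace copies of each type on each side.

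The crux, and the one place where the schedule $\ell_k=2\ell_{k+1}$ is indispensable, is that the full invariant survives this dynamic splitting. For separation (4) I rely repeatedly on the estimate $\ell_k-\ell_{k+1}=\ell_{k+1}$: a point admitted to the core lies within $\ell_{k+1}$ of $C_0$ yet, being $>\ell_k$ from the workspace, stays $>\ell_{k+1}$ from it, and dually for replies. For coherence (6) the same doubling rules out an ambiguous inheritance: if two earlier workspace points are each within $\ell_{k+1}$ of the new point, they are within $2\ell_{k+1}=\ell_k$ of each other, so the round-$k$ instance of (6), whose threshold is exactly $\ell_k$, already assigns them a common isomorphism; finiteness of distance forcing a common summand, the $b$-side constraint then follows by isometry of $\rho^{S,T}$ and the unused-ness of fresh summands. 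The main work will be to check that these estimates close the induction for all of (1)--(6) and that the allocation never runs out; assembling the core and workspace prescriptions over all positions then yields a length-$q$ strategy $\str$ with the required properties.
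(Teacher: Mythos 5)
Your proposal is correct and follows essentially the same argument as the paper's proof: an induction on rounds maintaining invariant (1)--(6), with the same core/inherited-group/fresh-summand trichotomy (the paper's Cases I--III), the same doubling estimates $\ell_k = 2\ell_{k+1}$ applied via the triangle inequality to preserve (4) and (6), the same isometry and unused-summand arguments for the $b$-side, and the same counting of the $q$ workspace copies. The only difference is presentational: you organise the case split hierarchically (core proximity first), whereas the paper states three parallel cases and notes their mutual exclusivity via invariant (4).
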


\begin{proof}
By induction on the length of plays $(s,t)$. We use $k$ as the induction parameter, $0 \leq k < q$. 
We refer to the induction hypothesis at stage $k+1$ as $\Inv(k)$.
Note that $\ell_k = 2\ell_{k+1}$, i.e.~the distance shrinks by $1/2$ at each step.

The base case $k=0$ is immediate.
For the inductive step at $k+1$, given $(s,t) \in \str$, let $p = m+k+1$. Given $a_{p} \in M'$, we distinguish between three possible cases:\footnote{Note that Cases I and II below are mutually exclusive by $\Inv(k)(4)$. This requires the distance function to be symmetric, so undirected distance must be used.}

Case I: for some $i\in\{1,\ldots, m+k\}$,  $d(a_i, a_{p}) \leq \ell_{k+1}$ and $a_i \in C_0$. 
By $\Inv(k)(2)$ and the triangle inequality, for some $j:1\leq j\leq m$, $d(a_{j},a_{p}) \leq \ell - \ell_k + \ell_{k+1} = \ell - \ell_{k}$. That is, $a_{p}\in B[\va,\ell - \ell_{k}]$.
By $\Inv(k)(3)$, $s_0=t_0$. 
We extend $\str$ by adding $(sa_{p},ta_{p})$ and extend the bipartition by adding $a_{p}$ to both $C_0$ and $D_0$.
Thus (1)--(3) are maintained. It remains to verify (4). Suppose that, for some $j$ with $a_j \in C_1$ and $b_j \in D_1$, we have $d(a_{p},a_j) \leq \ell_{k+1}$ or $d(a_{p},b_j) \leq \ell_{k+1}$. Assuming the former, by the triangle inequality, $d(a_i,a_j) \leq 2 \ell_{k+1} = \ell_k$, contradicting $\Inv(k)(4)$. A similar argument applies in the latter case.

Case II: for some $i\in\{1,\ldots, m+k\}$,  $d(a_i, a_{p}) \leq \ell_{k+1}$ and $a_i \in C_1$. 
In this case, by $\Inv(k)(5)$, $b_i \in D_1$ and  $b_i = \rho_{i}(a_i)$. We must have $a_{p}$ in the same summand of $M'$ as $a_i$, since  $d(a_i, a_{p}) < \infty$.
We define $b_{p} \coloneqq \rho_{p}(a_{p})$, where $\rho_{p} := \rho_{i}$. We add $a_{p}$ to $C_{1}$ and $b_{p}$ to $D_1$, and extend $\str$ by adding $(sa_{p},tb_{p})$. 
Note that for any $j$, if $d(a_j, a_{p}) \leq \ell_{k+1}$, then $d(a_i,a_j) \leq \ell_k$, so by $\Inv(k)(6)$, $\rho_j = \rho_i = \rho_{p}$; thus (6) is maintained.
It remains to verify (4). The argument is identical to that given in the previous case, with the roles of $C_0$ and $C_1$ (resp. $D_0$ and $D_1$) interchanged.

Case III: for all $i\in\{1,\ldots, m+k\}$, $d(a_i,a_{p}) > \ell_{k+1}$. In this case, let $S$ be the summand of $M'$ such that $a_p \in S$. We choose a summand $T$ of the same type ($M$ or $N$) in $N'$ which has not been used at any previous stage, \ie~$b_j \not\in T$ for any $0 \leq j \leq k$. This is always possible, as there are at least $q$ summands of the required type  in $N'$. We define $b_{p} \coloneqq \rho_{p}(a_{p})$, where $\rho_{p} := \rho^{S,T}$. We add $a_{p}$ to $C_{1}$ and $b_{p}$ to $D_1$, and extend $\str$ by adding $(sa_{p},tb_{p})$. Since $d(b_j,b_{p}) = \infty$ for any $1 \leq j \leq m+k$, (6) is maintained.

A symmetric argument shows that we can extend $\str$ given any ${b_{p} \in N'}$.
\end{proof}

\subsubsection{Completing the proof of the Workspace Lemma}

We now return to relational structures. Given a structure $\As$, we have a metric on $A$ valued in the extended natural numbers $\Nat \cup \{ \infty\}$, given by the path distance in the Gaifman graph $\Gf(\As)$.
We set $d(x,y) = \infty$ if there is no path between $x$ and $y$.

Applying Proposition~\ref{pr:strategy-metric}, we obtain a strategy of length $q$ between $(\As + \Cs,\va)$ and $(\Ask + \Cs,\va)$, where $\Cs := q \cdot \As + q \cdot \As_k$.
It remains to verify that this is a winning strategy for Duplicator in the Ehrenfeucht-\Fraisse game. This amounts to showing that the strategy satisfies the partial isomorphism winning condition.
\begin{lemma}\label{pisolemm}
Let $\alpha$, $\beta$ be partial isomorphisms from $\Ava$ to $\Bvb$. If $d(\dom(\alpha), \dom(\beta)) > 1$ and $d(\ran(\alpha), \ran(\beta)) > 1$, then $\alpha \cup \beta$ is a partial isomorphism.
\end{lemma}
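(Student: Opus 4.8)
The plan is to prove Lemma~\ref{pisolemm} by direct verification of the partial isomorphism conditions for $\gamma := \alpha \cup \beta$. First I would check that $\gamma$ is a well-defined injective partial function: the only way this could fail is if $\alpha$ and $\beta$ disagree on a common point of their domains, or map distinct points to the same image. But any common point $a \in \dom(\alpha) \cap \dom(\beta)$ would satisfy $d(\dom(\alpha),\dom(\beta)) = 0$, contradicting the hypothesis $d(\dom(\alpha),\dom(\beta)) > 1$; likewise a shared image value would force $d(\ran(\alpha),\ran(\beta)) = 0$. So the domains are disjoint, the ranges are disjoint, and since $\alpha$, $\beta$ are each injective, $\gamma$ is a well-defined injective partial map.

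The substantive step is to verify that $\gamma$ preserves and reflects all relations $R \in \sg$ (including equality and the $I$-relation). Consider a tuple $a_1,\ldots,a_n$ in $\dom(\gamma)$. The key observation is that if two of these elements, say $a_i$ and $a_j$, lie in the tuple of some relation $\RA$ and are distinct, then $a_i \adj a_j$ in the Gaifman graph $\Gf(\As)$, so $d(a_i,a_j) \leq 1$. Since $d(\dom(\alpha),\dom(\beta)) > 1$, this forces $a_i$ and $a_j$ to lie on the same side of the partition, i.e.\ either both in $\dom(\alpha)$ or both in $\dom(\beta)$. Hence any relational tuple is entirely ``local'' to a single one of the two partial isomorphisms, apart from possible coincidences among equal elements, which are handled trivially. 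I would then argue that the whole tuple lies in the domain of a single map: if all the $a_i$ are equal the claim is immediate, and otherwise pairwise adjacency (from membership in a single relation tuple) together with the distance gap pins them all to one side.

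Given this locality, relation preservation and reflection for $\gamma$ reduce to the corresponding properties for $\alpha$ or $\beta$ individually, which hold by assumption. Concretely, $\RA(a_1,\ldots,a_n)$ holds iff all the $a_i$ sit on one side (with the distance argument), and there $\gamma$ agrees with $\alpha$ or $\beta$, so $R^{\Bs}(\gamma a_1,\ldots,\gamma a_n)$ holds iff $\RA(a_1,\ldots,a_n)$ does. The symmetric use of the range hypothesis $d(\ran(\alpha),\ran(\beta)) > 1$ handles reflection of relations back along $\gamma^{-1}$, ensuring $\gamma$ is an isomorphism onto its image and not merely a homomorphism.

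The main obstacle I anticipate is making the locality argument fully rigorous for relations of arity $n > 2$, where a single tuple could in principle straddle the partition through a chain of adjacencies; the clean resolution is that every pair of distinct elements within one relation tuple is Gaifman-adjacent (distance $\leq 1$), so the distance-$>1$ gap between the two domains cannot be bridged within a single tuple, forcing the entire tuple onto one side. A secondary subtlety is the treatment of the $I$-relation and equality: since these are also relations in the extended vocabulary $\sgp$, the same locality and case analysis applies verbatim, so no special handling beyond the uniform argument is required.
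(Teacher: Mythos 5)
Your proof is correct and takes essentially the same approach as the paper's: the distance hypotheses give disjointness of the domains and of the ranges (so the union is a partial bijection), and since any two distinct elements occurring in a common relational tuple are Gaifman-adjacent (distance $\leq 1$), no tuple can straddle the two maps, so preservation and reflection reduce to the corresponding properties of $\alpha$ and $\beta$ separately. The paper states this in two sentences; your write-up merely makes the tuple-locality argument explicit.
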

\begin{proof}
The conditions on distances imply firstly that the domains and ranges of the partial isomorphisms are disjoint, so their union is a partial bijection.
Moreover, no element of $\dom(\alpha)$ is adjacent to any element of $\dom(\beta)$ in the Gaifman graph, and similarly for $\ran(\alpha)$, $\ran(\beta)$, so no additional requirements of preservation of relations arise in the union.
\end{proof}
We now complete the proof of the Workspace Lemma.
\begin{proof}
Given any play $(s,t)$ following the strategy produced by Proposition~\ref{pr:strategy-metric}, we can partition it into subsequences $(s_0,t_0)$, $(s_1,t_1)$ according to parts (1)--(3) of the Proposition. We can further partition $(s_1,t_1)$, where $a_i$, $a_j$ are put in the same element of the partition iff $\rho_i = \rho_j$. Each of these subsequences produces a relation $a_i \mapsto b_i$. In the case of $(s_0,t_0)$, this  is a subset of the graph of the inclusion $\Ask \hookrightarrow \As$, while in the case of the partition with corresponding isomorphism $\rho^{S,T}$, it is a subset of the graph of this isomorphism. Moreover, by parts (4) and (6) of the Proposition, the distances between the graphs of these partial isomorphisms are $> \ell_q = 1$. Thus by Lemma~\ref{pisolemm}, their union is a partial isomorphism. 
\end{proof}

\section{Hybrid Temporal Logic}

We shall now consider an extension of Hybrid Logic, in which backwards modalities $\Bm$, $\Dm$ are added, with the following first-order translations:
\[ \begin{array}{lcl}
\ST_{x}(\Bm \vphi) & = & \forall y. [ E(y,x) \to \ST_{y}(\vphi)] \\
\ST_{x}(\Dm \vphi) & = & \exists y. [ E(y,x) \wedge \ST_{y}(\vphi)] \\
\end{array}
\]
with the stipulation that $x \neq y$. 
The corresponding extension of the bounded fragment adds bounded quantifiers 
\[ \exists x. \, E(x,t) \wedge \vphi, \qquad \forall x. \,  E(x,t) \to \vphi \]
with the stipulation that $x \neq t$.

This extension is natural for Hybrid Logic; the backwards modalities are standard in temporal logic. The extended logic admits a simple and natural semantic characterization, as we shall now see.

Firstly, we note that in terms of the comonads $\Hk$, this extension is accommodated as follows:
\begin{itemize}
\item $\Hkm (\As,a)$ has as universe the subset of $\Ek (A,a)$ of those sequences $\langle a_0, a_1, \ldots , a_{l}\rangle$ such that $a_0 =a$, and for all $j$ with $0 < j \leq l$, for some $i$, $0 \leq i < j$, 
$\EA(a_{i},a_{j})$ or $\EA(a_{j},a_{i})$ . $\Hkm (\As,a)$ is the induced substructure of $\Ek (\As,a)$ given by this restriction of the universe. 
\end{itemize}
The further development of this comonad proceeds entirely analogously to that of $\Hk$; we shall not elaborate the details.

We can state the semantic characterization theorem for Hybrid Temporal Logic as follows.
\begin{theorem}[Characterisation Theorem for Hybrid Temporal Logic]
For any first-order formula $\vphi(x)$ in the unimodal signature, the following are equivalent:
\begin{enumerate}
\item $\vphi$ is invariant under disjoint extensions.
\item $\vphi$ is equivalent to a formula $\psi$ of Hybrid Temporal Logic under the standard translation. 
\end{enumerate}
\end{theorem}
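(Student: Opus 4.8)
The plan is to follow the template of the Characterisation Theorem for the bounded fragment (Proposition~\ref{invdebfprop}), replacing the forward reachability comonad by its two-way analogue and exploiting a simplification that the backward modalities provide. The crucial observation is that for the unimodal signature the Gaifman adjacency relation is exactly the symmetric closure of $E$: two distinct elements are adjacent iff one sees the other. Hence the Gaifman path distance coincides with undirected $E$-distance, and the local comonad $\Sr$ (restriction to the Gaifman ball of radius $r$) coincides with the \emph{two-way} $r$-reachability comonad, which I write $\Rr^{\leftrightarrow}$, the analogue of $\Rr$ associated with $\Hkm$. This coincidence is precisely what fails in the forward-only setting, where $\Sr$ and $\Rr$ differ; it is what will allow us to work with the weaker hypothesis of invariance under disjoint extensions rather than invariance under $k$-generated substructures.

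For the direction $(2) \Rightarrow (1)$, a routine induction on the extended bounded fragment (the image of Hybrid Temporal Logic under the standard translation) shows that every such formula is invariant under the two-way generated submodel $\RR^{\leftrightarrow}(\As,a)$, which is the connected component of $a$: all modalities, forward and backward, as well as the binders $\da$ and the operators $@_{x'}$, stay within this component. Since $\RR^{\leftrightarrow}$ is unchanged by disjoint extensions, invariance under two-way generated submodels coincides with invariance under disjoint extensions, giving $(2) \Rightarrow (1)$; this is the analogue of Proposition~\ref{reachprop} together with Lemma~\ref{invgsdelemm}. For the harder direction $(1) \Rightarrow (2)$ I would first establish a Definability Lemma for Hybrid Temporal Logic: over the (without loss of generality finite) signature, the equivalence $\equiv^{\mathsf{HTL}}_{k}$ induced by formulas of hybrid modal depth $\leq k$ has finite index and each class is defined by a single formula. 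This follows from the game characterisation for $\Hkm$, the direct analogue of Theorem~\ref{Hequivth}. It then suffices to show that, for a suitable $k$ depending only on the quantifier rank $q$ of $\vphi$, the class $\Mod(\vphi)$ is saturated under $\equiv^{\mathsf{HTL}}_{k}$; $\vphi$ is then equivalent to a finite disjunction of characteristic formulas, hence to a Hybrid Temporal formula.

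The core of $(1)\Rightarrow(2)$ is then an implication chain exactly parallel to Proposition~\ref{invdebfprop}, with $r := 2^{q}$ and $k := rq$. The Workspace Lemma applies verbatim: it is already proved for arbitrary structures using \emph{undirected} Gaifman distance, which is exactly the distance governing two-way reachability, so no passage to tree models is needed. Applying it to $\As$ and to $\Bs$ produces $\Cs,\Ds$ with $(\As + \Cs,a) \equiv_q (\Asr + \Cs,a)$ and $(\Bs + \Ds,b) \equiv_q (\Bsr + \Ds,b)$, where $\Asr = \Sr\As$ and $\Bsr = \Sr\Bs$. I would also prove two reachability lemmas in the two-way setting: first, the analogue of Lemma~\ref{Reqlemm}, that a winning strategy for $\equiv^{\mathsf{HTL}}_{rq}$ cuts down to one between the $r$-balls $\Sr\Asa$ and $\Sr\Bsb$; and second, the analogue of Lemma~\ref{Beqlemm}, that a winning $\equiv^{\mathsf{HTL}}_{rq}$-strategy on structures all of whose elements are $r$-reachable yields a winning strategy in the length-$q$ Ehrenfeucht-\Fraisse game, hence $\Sr\Asa \equiv_q \Sr\Bsb$. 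Feeding this through the chain --- using invariance under disjoint extensions at the steps that pass to and from $\Asr + \Cs$ and $\Bsr + \Ds$, and using $\Sr\Asa \equiv_q \Sr\Bsb$ together with the fact that $\vphi$ has quantifier rank $q$ at the middle step --- yields $\Asa \models \vphi \IMP \Bsb \models \vphi$ whenever $\Asa \equiv^{\mathsf{HTL}}_{rq}\Bsb$. Notably, because $\Sr$ is itself the two-way reachability comonad $\Rr^{\leftrightarrow}$, the two intermediate steps of Proposition~\ref{invdebfprop} that invoked invariance under $k$-generated substructures and Proposition~\ref{RkSkprop} (the identity $\Rr\Sr = \Rr$) collapse, so only invariance under disjoint extensions is required.

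The main obstacle is the second reachability lemma, the two-way analogue of Lemma~\ref{Beqlemm}: simulating an arbitrary single move of the Ehrenfeucht-\Fraisse game, which may land anywhere in the $r$-ball, by a legal sequence of at most $r$ Hybrid Temporal moves that walks to the target along a Gaifman path. This walk must use backward as well as forward transition steps, and it is exactly the symmetry of the two-way modalities that makes every element of the $r$-ball reachable from the initial point by such a legal walk. This is the point at which the present argument genuinely diverges from the forward-only case: there the asymmetry of directed reachability blocks this simulation and forces the stronger hypothesis of invariance under $k$-generated substructures, leaving open (as a Question in the preceding section) whether mere invariance under generated substructures suffices; here the symmetry removes that gap, so invariance under disjoint extensions is already enough.
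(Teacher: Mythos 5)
Your proposal is correct and follows essentially the same route as the paper: the paper likewise observes that the Gaifman-ball comonad $\Sk$ plays for Hybrid Temporal Logic the role that $\Rk$ plays for Hybrid Logic (your identification of $\Sk$ with two-way reachability in the unimodal signature), proves the same two game-transfer lemmas (Lemmas~\ref{Hmeqlemm} and~\ref{HtoEeqlemm}, your two reachability lemmas), and runs the same Workspace-Lemma implication chain with only invariance under disjoint extensions, yielding the bound $q2^q$ on modal depth (Proposition~\ref{invdeprop}), with the converse direction by induction on syntax (Proposition~\ref{HTLreachprop}). Even your closing remark---that the steps invoking invariance under $k$-generated substructures and $\Rk\Sk=\Rk$ collapse in the two-way setting---matches exactly how the paper's chain simplifies relative to Proposition~\ref{invdebfprop}.
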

We may regard invariance under disjoint extensions as a minimal form of locality relative to a given basepoint. Thus this characterization shows that Hybrid Temporal Logic defines the maximal fragment of first-order logic which retains a local character in this sense.

The $k$-round back-and-forth hybrid temporal game between  $(\As,a)$ and $(\Bs,b)$ is defined in the same way as the standard  $k$-round Ehrenfeucht-\Fraisse game, with two additional provisos:
\begin{itemize}
\item At round $0$, Spoiler must either play $a_0 = a$, to which Duplicator must respond with $b_0 = b$; or $b_0 = b$, to which Duplicator must respond with $a_0 = a$
\item At round $j>0$, if Spoiler plays a move $a_{j} \in A$ then, for some $i<j$, $\EA(a_i,a_j)$ or $\EA(a_j,a_i)$ ; while if Spoiler plays a move $b_{j} \in B$ then, for some $i<j$, $\EB(b_i,b_j)$ or $\EB(b_j,b_i)$.
\end{itemize}
The partial isomorphism winning condition ensures that Duplicator is subject to the same constraints.

We then obtain the following analogue of Theorem~\ref{Hequivth}.
\begin{theorem}
\label{Hmequivth}
Let $\sg$ be a finite unimodal vocabulary.
For all  $(\As,a)$, $(\Bs,b)$ in $\CSp$, the following are equivalent:
\begin{enumerate}
\item Duplicator has a winning strategy for the $k$-round back-and-forth hybrid temporal game between  $(\As,a)$ and $(\Bs,b)$.
\item $(\As,a) \Hmequivk (\Bs,b)$.
\end{enumerate}
\end{theorem}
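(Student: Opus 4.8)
The plan is to follow the Ehrenfeucht-\Fraisse argument that establishes Theorem~\ref{Hequivth}, inserting the two symmetric backward clauses wherever the forward modalities and forward game moves occur; here $\Hmequivk$ denotes agreement on all hybrid temporal formulas of hybrid modal depth $\leq k$, and we use throughout that $\sg$ is finite. The conceptual point I would stress is that the hybrid temporal game is exactly the Ehrenfeucht-\Fraisse game for this fragment: the partial-isomorphism winning condition absorbs the hybrid apparatus (world variables, $\da$, $@$, and the depth-zero tests $\Diamond x$ and $\Dm x$), while restricting Spoiler's moves at round $j>0$ to worlds seen forward \emph{or} backward from a previously played world corresponds precisely to the four bounded quantifiers introduced for $\Diamond,\Box,\Dm,\Bm$.

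For the implication from (1) to (2), I would argue by induction on hybrid modal depth that a winning $k$-round strategy forces $(\As,a) \models \vphi \IFF (\Bs,b) \models \vphi$ for every $\vphi$ of depth $\leq k$. The propositional-atom, world-variable, and equality cases follow from the partial isomorphism maintained at the current position, and the Boolean cases are immediate. For $\Diamond\psi$ and $\Dm\psi$ a witness in one structure is seen forward (respectively backward) from an already-played world, hence is a legal Spoiler move; Duplicator's reply supplies a matching witness, and the residual strategy wins the remaining $k-1$ rounds, so the induction hypothesis applies to $\psi$. The universal modalities $\Box,\Bm$ are dual, the depth-zero edge tests $\Diamond x,\Dm x$ are settled directly by preservation of $E$ and of its converse across the played worlds, and $\da x.\psi$, $@_x\psi$ merely rename or relocate the evaluation point among worlds already present and so are carried by the same bookkeeping.

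For the implication from (2) to (1), finiteness of $\sg$ guarantees that up to logical equivalence there are only finitely many formulas of each modal depth $\leq k$; I would use this to build characteristic (Hintikka-style) formulas describing a played configuration --- the worlds reached so far together with their atomic types, mutual equalities, and mutual $E$- and $E^{-1}$-adjacencies --- up to a given residual depth. Duplicator's strategy then maintains the invariant that after $r$ rounds the two configurations satisfy the same characteristic formula of residual depth $k-r$, and a routine back-and-forth check shows any forward or backward Spoiler move can be answered so as to preserve this, the backward case being symmetric to the forward one; agreement on the depth-zero atomic and edge formulas finally yields the partial isomorphism.

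The hard part will not be the backward modalities, which simply duplicate the forward analysis under the symmetric move condition, but the nonstandard depth accounting that assigns depth zero to $\Diamond x$ and $\Dm x$ in combination with the hybrid binding machinery: the characteristic formulas must record the \emph{full} configuration of named worlds and their $E$/$E^{-1}$ adjacencies in both directions, so that the correspondence Duplicator maintains is a genuine partial isomorphism rather than a mere bisimulation. Once this is set up exactly as in the proof of Theorem~\ref{Hequivth}, the remaining verifications are mechanical. As an alternative, I note that one can pass to the converse-expanded bimodal vocabulary with transition relations $E$ and $E^{-1}$, under which hybrid temporal logic and its game become an instance of the bounded-fragment game of Proposition~\ref{Bequivgameprop} with $m=1$; since a partial map preserves $E$ in both directions iff it preserves both $E$ and $E^{-1}$, the partial-isomorphism conditions for the two vocabularies coincide and the reduction is faithful, delivering the result from the bounded case modulo the translation between hybrid modal depth and bounded quantifier rank.
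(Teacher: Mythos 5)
Your proposal is correct and takes essentially the same approach as the paper: the paper states this theorem without a separate proof, presenting it as the analogue of Theorem~\ref{Hequivth}, whose substance is the Ehrenfeucht-\Fraisse-style argument developed in the appendix (a winning strategy yields agreement by induction on formulas; agreement yields a strategy via finitely many type/characteristic formulas over the finite vocabulary), and your two directions reproduce exactly that adaptation with the backward clauses handled symmetrically. Your alternative reduction---passing to the converse-expanded bounded vocabulary with $\sgT = \{E, E^{-1}\}$ and invoking Proposition~\ref{Bequivgameprop} with $m = 1$---is likewise in line with the paper's remark that the development for $\Hkm$ ``proceeds entirely analogously'' to that of $\Hk$ and $\Bk$.
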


For proving the semantic characterization, the basic idea is that the comonad $\Sk$ plays the analogous role for Hybrid Temporal Logic that $\Rk$ does for Hybrid Logic.
The following are straightforward analogues of Lemmas~\ref{Reqlemm} and~\ref{Beqlemm}.

\begin{lemma}
\label{Hmeqlemm}
For all $k,m>0$, $\Asa \Hmequivm \Bsb \IMP \Sk \Asa \Hmequivm \Sk \Bsb$.
\end{lemma}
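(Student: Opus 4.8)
The plan is to mimic the game-theoretic restriction argument used for Lemma~\ref{Reqlemm}, with the local comonad $\Sk$ replacing the reachability comonad, and the back-and-forth hybrid temporal game replacing the bounded game. First I would apply Theorem~\ref{Hmequivth} to turn the hypothesis $\Asa \Hmequivm \Bsb$ into its combinatorial form: a winning strategy $S$ for Duplicator in the $m$-round back-and-forth hybrid temporal game between $(\As,a)$ and $(\Bs,b)$. The aim is to produce from $S$ a winning strategy for the same game played between $\Sk\Asa$ and $\Sk\Bsb$; a second application of Theorem~\ref{Hmequivth} then yields the conclusion. Since the implication is an equivalence and the game is symmetric, it suffices to treat the case where Spoiler plays on the $\As$-side, the other being identical.

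Because $\Sk\Asa$ and $\Sk\Bsb$ are the induced substructures carried by the closed $k$-balls $A[a;k]$ and $B[b;k]$ around the base points, all relations among elements of a ball agree with their values in the ambient structure. Hence any move legal in the ball game is also legal in the ambient game -- a move seen from an earlier move via $\EA$ or its converse is, a fortiori, so seen in $\As$ -- and I would let Duplicator simply replay the responses prescribed by $S$. The partial-isomorphism winning condition then transfers directly, since it constrains only the relations holding among the finitely many played elements, and these coincide in the ball and in the ambient structure.

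The substantive point, and the step I expect to be the main obstacle, is to check that Duplicator's replies never leave the ball, so that the copied strategy really is a strategy for the ball game. The mechanism I would exploit is that Duplicator's reply play is \emph{itself} a legal hybrid temporal play: the winning condition forces each reply $b_j$ to be linked by $\EB$, in one direction or the other, to an earlier reply, so the sequence of replies is connected to the base point $b$ through a chain of $E$-edges. As these edges are exactly edges of the Gaifman graph $\Gf(\Bs)$, and the balls defining $\Sk$ are measured in the \emph{symmetric} Gaifman distance, this chain keeps each reply within the neighbourhood of $b$ explored by the play, and hence inside $\Sk\Bsb$. Making the radius bookkeeping precise is the delicate part; it is here that the use of undirected distance is essential, exactly as flagged in the footnote to Proposition~\ref{pr:strategy-metric}, since the directed transition structure alone would not control how far a reply can drift from the base.
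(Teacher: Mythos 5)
You follow the same route as the paper, which proves the analogous Lemma~\ref{Reqlemm} by turning the hypothesis into a Duplicator winning strategy (here via Theorem~\ref{Hmequivth}) and restricting that strategy to the substructures. But the step you yourself flag as ``the delicate part'' is a genuine gap, and the mechanism you propose cannot close it. The connectivity argument gives only this: each reply $b_j$ is adjacent to some \emph{earlier reply}, so after $j$ rounds all you know is $d(b,b_j)\leq j$. Since the game has $m$ rounds, and the lemma is asserted for all $k,m$ --- and is applied in Proposition~\ref{invdeprop} with $m=kq>k$ --- the resulting bound $d(b,b_j)\leq m$ says nothing about membership in the ball $B[b;k]$, i.e.\ in $\Sk\Bsb$.

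The failure is not merely one of bookkeeping: an arbitrary winning strategy really can leave the ball, so ``replay the responses prescribed by $S$'' need not define a strategy in the ball game at all. Take $k=2$, $m=3$; let $\As$ have universe $\{a,p,q,w,z\}$ with edges $E(a,p)$, $E(p,w)$, $E(a,q)$, $E(q,z)$, $E(w,z)$, and let $\Bs$ be a copy $\{b,p',q',w',z'\}$ of $\As$ with one extra element $z''$ and one extra edge $E(w',z'')$. Then $\Asa \Hmequivm \Bsb$ holds for $m=3$: Spoiler can only reach $z''$ by playing $p',w',z''$, and Duplicator's answer $p,w,z$ yields a partial isomorphism. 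However, the strategy that answers Spoiler's play $p,w,z$ --- which lies entirely inside $A[a;2]$ --- with $p',w',z''$ is \emph{also} a winning strategy for the $3$-round game, and its last reply $z''$ is at Gaifman distance $3$ from $b$, hence outside $\Sk\Bsb$. (The paper's own one-line argument for Lemma~\ref{Reqlemm} glosses over exactly the same point.) The standard repair is the covering-position technique used for Lemmas~\ref{Beqlemm} and~\ref{HtoEeqlemm}: when Spoiler plays $a_j\in A[a;k]$, Duplicator internally plays a Gaifman geodesic from $a$ to $a_j$ --- each step is a legal temporal move --- feeds it to $S$, and answers with the final reply; the partial isomorphism condition forces the internal replies to form a chain of length at most $k$ from $b$, hence in the ball. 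This costs up to $k$ internal rounds per external round, so what it actually proves is the weaker implication whose hypothesis is equivalence at depth $km$ rather than $m$; that version still suffices for the Characterisation Theorem, at the price of a worse modal-depth bound in Proposition~\ref{invdeprop}.
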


\begin{lemma}
\label{HtoEeqlemm}
For all $k,q>0$, $\Sk \Asa \Hmequivkq \Sk \Bsb \IMP \Sk \Asa \equiv_q \Sk \Bsb$.
\end{lemma}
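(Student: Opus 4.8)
The final statement to prove is Lemma~\ref{HtoEeqlemm}: for all $k,q>0$, $\Sk \Asa \Hmequivkq \Sk \Bsb \IMP \Sk \Asa \equiv_q \Sk \Bsb$.

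The plan is to mirror exactly the argument given for Lemma~\ref{Beqlemm}, which is the analogous transfer from a bounded-style equivalence to ordinary elementary equivalence up to quantifier rank $q$. The essential mechanism is the same: a winning strategy for Duplicator in the $kq$-round back-and-forth hybrid temporal game between $\Sk \Asa$ and $\Sk \Bsb$ can be converted into a winning strategy for Duplicator in the ordinary $q$-round Ehrenfeucht–\Fraisse game between the same two structures. First I would invoke Theorem~\ref{Hmequivth} to replace the logical hypothesis $\Sk \Asa \Hmequivkq \Sk \Bsb$ by the existence of a winning strategy $S$ for Duplicator in the $kq$-round back-and-forth hybrid temporal game. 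Then I would build a strategy $S'$ for the $q$-round (unconstrained) Ehrenfeucht–\Fraisse game between $\Sk \Asa$ and $\Sk \Bsb$.

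The core idea, following Lemma~\ref{Beqlemm}, is that every element of $\Sk \Asa$ is $k$-reachable from the basepoint, but now ``reachable'' must be read in the symmetric sense appropriate to the temporal setting: a move is admissible in the hybrid temporal game if it is connected to some earlier move by a transition in \emph{either} direction, so reachability is along the \emph{undirected} edge relation of the Gaifman graph rather than the directed transition relation. This matters because $\Sk$ is defined via the \emph{undirected} path distance in the Gaifman graph, so each element of $\Sk \Asa$ lies within distance $k$ of the basepoint along undirected paths, and such a path can be realised as a sequence of legal hybrid temporal moves (each step uses $\EA(a_i,a_j)$ or $\EA(a_j,a_i)$). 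I would maintain the invariant that for each position $(s,t)$ reached under $S'$ there is a ``covering position'' $(s^*,t^*)$ reached under $S$, with $s$ a subsequence of $s^*$ and $t$ the corresponding subsequence of $t^*$, and with the length of $(s^*,t^*)$ at most $k$ times that of $(s,t)$. When Spoiler plays an arbitrary element $a$ of $\Sk \Asa$ in the $q$-round game, Duplicator's response is computed by extending $s^*$ through at most $k$ admissible hybrid temporal moves that trace an undirected path from an already-played element to $a$, letting $S$ dictate the matching responses in $\Sk \Bsb$, and taking the final response as $S'$'s answer. The symmetric case (Spoiler moving in $\Sk \Bsb$) is handled identically. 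The partial isomorphism winning condition for $S$ restricted to the covered coordinates then yields the partial isomorphism condition for $S'$, so $S'$ is winning, giving $\Sk \Asa \equiv_q \Sk \Bsb$.

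The step I expect to be the main obstacle — and the only genuine point of divergence from Lemma~\ref{Beqlemm} — is verifying that the $q$-round budget suffices, \ie that realising an arbitrary Spoiler move as the endpoint of a short admissible path costs at most $k$ rounds. This rests precisely on the fact that $\Sk$ is the $k$-ball in the \emph{undirected} Gaifman metric, matched against the symmetric admissibility condition of the hybrid temporal game; one must check that the undirected path witnessing $k$-reachability can be traversed by legal moves starting from elements already present in the covering play. Once this correspondence between the undirected metric used to define $\Sk$ and the symmetric move condition of the temporal game is made explicit, the counting argument ($q$ Spoiler moves each expanded to at most $k$ covered moves, staying within the $kq$-round strategy $S$) goes through exactly as before, and the remaining bookkeeping is routine.
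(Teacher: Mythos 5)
Your proposal is correct and is essentially the paper's own argument: the paper states this lemma as a ``straightforward analogue'' of Lemma~\ref{Beqlemm}, whose proof (covering positions, each Spoiler move expanded into at most $k$ legal moves, staying within the $kq$-round strategy obtained via the game characterisation) is exactly what you reproduce. Your explicit check that the undirected Gaifman metric defining $\Sk$ matches the symmetric admissibility condition of the hybrid temporal game is precisely the point the paper leaves implicit, and you identify it correctly.
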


We can now prove an analogue of Proposition~\ref{invdebfprop}.
\begin{proposition}
\label{invdeprop}
If $\vphi$ is invariant under disjoint extensions, and has quantifier rank $q$, then it is equivalent to a formula $\psi$ in Hybrid Temporal Logic with modal depth $\leq q2^q$.
\end{proposition}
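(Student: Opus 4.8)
The plan is to run the proof of Proposition~\ref{invdebfprop} essentially verbatim, with two substitutions forced by the presence of backwards modalities. Because Hybrid Temporal Logic can traverse each edge in both directions, ``reachability'' becomes symmetric and hence coincides with adjacency in the Gaifman graph; so the comonad that the logic effectively ``sees'' is the Gaifman-ball comonad $\Sr$ rather than the directed-reachability comonad $\Rr$, and the invariance hypothesis collapses from invariance under $k$-generated substructures to the single, non-resource-indexed notion of invariance under disjoint extensions. This is exactly why the argument is simpler here: there is no need for the bridging identity $\Rr \Sr = \Rr$ of Proposition~\ref{RkSkprop} that the forward-only bounded case required, since $\Sr$ is already the relevant comonad.

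Set $r := 2^q$ and $m := rq = q2^q$. It suffices to show that $\Mod(\vphi)$ is saturated under $\Hmequivm$, i.e. that $\Asa \models \vphi$ and $\Asa \Hmequivm \Bsb$ imply $\Bsb \models \vphi$. Granting this, the Hybrid Temporal analogue of the Definability Lemma --- over the finite vocabulary to which we may restrict, each $\Hmequivm$-class is cut out by a characteristic formula of modal depth $m$, and there are finitely many such classes --- lets us write $\vphi$ as the finite disjunction of the characteristic formulas of the classes it meets, yielding an equivalent $\psi$ of modal depth $\leq m = q2^q$. This analogue follows from the finite index of $\Hmequivm$ together with the usual Hintikka-formula construction, exactly as for $\BFk$.

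For saturation, suppose $\Asa \models \vphi$ and $\Asa \Hmequivm \Bsb$. Applying the Workspace Lemma twice, with game-length $q$ so that the radius is $2^q = r$, we obtain $\Cs$, $\Ds$ with
\[ (\As + \Cs, a) \equiv_q (\Asr + \Cs, a), \qquad (\Bs + \Ds, b) \equiv_q (\Bsr + \Ds, b). \]
From $\Asa \Hmequivm \Bsb$, Lemma~\ref{Hmeqlemm} with ball radius $r$ gives $\Sr \Asa \Hmequivm \Sr \Bsb$, and then Lemma~\ref{HtoEeqlemm} with $k := r$ (so that its parameter $kq$ equals $rq = m$) gives $\Sr \Asa \equiv_q \Sr \Bsb$. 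Writing $\IDE$ for invariance under disjoint extensions, and using that $\vphi$ has quantifier rank $q$, the conclusion follows from the chain
\[ \begin{array}{lcll}
\Asa \models \vphi & \IMP & (\As + \Cs, a) \models \vphi & \IDE \\
& \IMP & (\Asr + \Cs, a) \models \vphi & \equiv_q \\
& \IMP & \Sr \Asa \models \vphi & \IDE \\
& \IMP & \Sr \Bsb \models \vphi & \equiv_q \\
& \IMP & (\Bsr + \Ds, b) \models \vphi & \IDE \\
& \IMP & (\Bs + \Ds, b) \models \vphi & \equiv_q \\
& \IMP & \Bsb \models \vphi & \IDE
\end{array} \]
where the two inner $\IDE$ steps use that $\Sr \Asa = (\Asr, a)$ and $\Sr \Bsb = (\Bsr, b)$, so that $\Asr + \Cs$ and $\Bsr + \Ds$ are disjoint extensions of $\Sr \Asa$ and $\Sr \Bsb$ respectively.

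I do not expect a deep obstacle: the heavy machinery (the Workspace Lemma, and the two comparison-game lemmas) is shared with, or directly analogous to, the bounded case. The point demanding care is the verification that invariance under disjoint extensions \emph{alone} licenses the two $\IDE$ bridges around the ball $\Sr \Asa$ --- equivalently, that the Workspace Lemma's truncation $\Asr$ is literally the induced substructure $\Sr \Asa$ --- which is immediate from the definition of $\Sk$. One should also confirm that Lemmas~\ref{Hmeqlemm} and~\ref{HtoEeqlemm}, the backwards-closed analogues of Lemmas~\ref{Reqlemm} and~\ref{Beqlemm}, genuinely go through for the symmetric transition condition; this is routine, since the symmetrised move constraint is preserved by the partial-isomorphism winning condition in both directions.
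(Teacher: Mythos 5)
Your proposal is correct and follows essentially the same route as the paper's proof: apply the Workspace Lemma twice with radius $2^q$, use Lemmas~\ref{Hmeqlemm} and~\ref{HtoEeqlemm} to convert $\Hmequivkq$-equivalence into $\Sk \Asa \equiv_q \Sk \Bsb$, and then run the same chain of implications in which every bridging step is licensed by invariance under disjoint extensions alone (your observation that no analogue of Proposition~\ref{RkSkprop} is needed, because the symmetric transition condition makes $\Sk$ the operative comonad, matches the paper's own framing). The only addition is that you spell out the saturation-to-formula step via the finite-index Definability Lemma analogue, which the paper leaves implicit from its earlier setup.
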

\begin{proof}
Let $k = 2^q$.
Suppose that (1) $\Asa \models \vphi$, and (2) $\Asa \Hmequivkq \Bsb$. We must show that $\Bsb \models \vphi$.
Applying the Workspace Lemma twice, let $\Cs$, $\Ds$ be such that
\[ (3) \quad (\As + \Cs,a) \equiv_q  (\Askm + \Cs, a) \]
\[ (4) \quad (\Bs + \Ds,b) \equiv_q  (\Bskm + \Ds, b) \]
From (2), applying lemmas~\ref{Hmeqlemm} and~\ref{HtoEeqlemm}, we have
\[ (5) \quad \Sk \Asa \equiv_q \Sk \Bsb \]
Now
\[ \begin{array}{lcll}
\Asa \models \vphi & \IMP & (\As + \Cs,a) \models \vphi & \IDE \\
& \IMP & (\Askm + \Cs, a) \models \vphi & (3) \\
& \IMP & \Sk (\As, a) \models \vphi & \IDE \\
& \IMP & \Sk (\Bs, b) \models \vphi & (5) \\
& \IMP & (\Bskm + \Ds, b) \models \vphi & \IDE \\
& \IMP & (\Bs + \Ds, b) \models \vphi & (4) \\
& \IMP & (\Bs, b) \models \vphi & \IDE \\
\end{array}
\]
\end{proof}

A straightforward induction on syntax proves:
\begin{proposition}
\label{HTLreachprop}
If $\psi$ is a formula of Hybrid Temporal Logic, then it is invariant under disjoint extensions.
\end{proposition}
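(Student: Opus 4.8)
The plan is to prove this by a routine structural induction on the hybrid temporal formula $\psi$, in the spirit of Proposition~\ref{reachprop}, but with an inductive invariant adapted to disjoint extensions. The key structural fact is that in the coproduct $\As + \Bs$ the universe is the disjoint union $A \sqcup B$, each relation is the disjoint union of its interpretations in the two summands, and in particular \emph{no transition edge runs between $A$ and $B$ in either direction}. Hence for any $w \in A$ the set of $E$-successors and the set of $E$-predecessors of $w$ are exactly the same in $\As$ and in $\As + \Bs$, and both are contained in $A$; moreover the unary predicates and equality, restricted to $A$, agree in the two structures.

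To carry out the induction cleanly in the presence of world variables and the binders $\da$ and $@$, I would argue with the Kripke-style satisfaction relation $\As, g, w \models \chi$, where $w$ is the current world and $g$ assigns world variables to elements, rather than directly with the translation $\ST_x$, thereby sidestepping the substitution bookkeeping in $\ST_x(\da x'.\chi)$ and $\ST_x(@_{x'}\chi)$. The claim to establish by induction on $\chi$ is that whenever $w \in A$ and $g$ maps every world variable into $A$, we have $\As, g, w \models \chi$ exactly when $\As + \Bs, g, w \models \chi$. The atomic cases (propositional atoms and world variables) and the Boolean cases follow immediately from agreement of predicates and equality on $A$ together with the induction hypothesis. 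For $\Diamond, \Box$ and for the backward modalities $\Dm, \Bm$, the quantifier ranges over the $E$-successors, respectively $E$-predecessors, of $w$; since $w \in A$ these witness sets coincide in the two structures and remain inside $A$, so the induction hypothesis applies to the subformula evaluated at each witness. For $\da x'.\chi$ we pass to the assignment $g[x' \mapsto w]$, which still maps into $A$ because $w \in A$, and invoke the induction hypothesis at the same world $w$; for $@_{x'}\chi$ we relocate the evaluation world to $g(x') \in A$ and invoke the induction hypothesis there. Instantiating the resulting claim at $w = a \in A$ then yields $(\As,a) \models \psi$ if and only if $(\As + \Bs, a) \models \psi$, which is precisely invariance under disjoint extensions.

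I do not anticipate a genuine obstacle: this is a straightforward induction, and the only point needing any care is the treatment of the binders, which is exactly why I prefer the Kripke semantics, where the invariant ``$w \in A$ and $g$ maps into $A$'' is visibly preserved by both $\da$ and $@$. It is worth emphasising that the backward modalities $\Dm, \Bm$ would defeat invariance under generated substructures, since one can step from a world to its predecessors, but they cause no difficulty for invariance under disjoint extensions precisely because the disjoint union adds no edges in \emph{either} direction between $A$ and $B$; this is why the weaker locality property is the right one for the temporal setting.
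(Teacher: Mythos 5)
Your proposal is correct and matches the paper's approach: the paper's entire proof is the phrase ``a straightforward induction on syntax,'' and your argument — structural induction with the invariant that the evaluation world and all assigned world variables lie in $A$, using the fact that the coproduct $\As + \Bs$ adds no transition edges between the summands in either direction, so successor and predecessor sets of elements of $A$ are unchanged — is exactly that induction, carried out in detail. Working with Kripke-style satisfaction and an assignment $g$ rather than the translation $\ST_x$ is a sensible bookkeeping choice, not a genuine departure.
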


Combining Propositions~\ref{invdeprop} and~\ref{HTLreachprop}, we obtain a proof of the Characterisation Theorem for Hybrid Temporal Logic.

\subsection{Connection with Gaifman Locality}

We shall now relate these notions to standard ideas in first-order logic concerning locality and Gaifman's Theorem \cite{gaifman1982local}.
We recall that Gaifman distance with respect to a finite relational vocabulary is first-order definable. For each first-order formula $\vphi(\vx)$ and $k>0$, there is a formula $\vphivk$ defined by structural induction on $\vphi$. The only non-trivial case is for quantifiers:
\[ (\exists y.\, \psi)^{(\vx,k)} \, := \, \exists y. \, (d(\vx, y) \leq k  \wedge \psivk ) . \] 
These \emph{basic $k$-local formulas about $\vx$} satisfy the following property \cite[p. 31]{ebbinghaus1999finite}:
\[ (*) \;\; \Ava \models \vphivk \; \IFF \; \Sk \Ava \models \vphi . \]
We say that $\vphi$ is \emph{$\Sk$-invariant} if for all $\Ava$,
\[ \Ava \models \vphi \; \IFF \; \Sk \Ava \models \vphi . \]
\begin{theorem}
Let $\vphi(\vx)$ be a first-order formula of quantifier rank $q$, and let $k = 2^q$. The following are equivalent:
\begin{enumerate}
\item $\vphi$ is invariant under disjoint extensions.
\item $\vphi$ is $\Sk$-invariant.
\item $\vphi$ is equivalent to $\vphivk$.
\end{enumerate}
\end{theorem}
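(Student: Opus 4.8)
The strategy is to prove a cycle of implications, leveraging the machinery already assembled. The three conditions are: (1) $\vphi$ is invariant under disjoint extensions; (2) $\vphi$ is $\Sk$-invariant; and (3) $\vphi$ is equivalent to $\vphivk$. Property $(*)$ stated just above already gives us that (2) and (3) are equivalent outright: $(*)$ says $\Ava \models \vphivk \IFF \Sk \Ava \models \vphi$, so $\vphi \equiv \vphivk$ holds (over all $\Ava$) precisely when $\Ava \models \vphi \IFF \Sk \Ava \models \vphi$ for all $\Ava$, which is the definition of $\Sk$-invariance. So the real content is to link these to (1), invariance under disjoint extensions.

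\noindent\textbf{From (2)/(3) to (1).} This direction is the easy one. Recall from the discussion of the $\Sk$ comonad that $\Sk \Ava = \Sk(\As + \Bs, \va)$, since adjoining a disjoint summand $\Bs$ cannot introduce any new elements within Gaifman-distance $k$ of $\va$. Hence if $\vphi$ is $\Sk$-invariant, then
\[ \Ava \models \vphi \IFF \Sk \Ava \models \vphi \IFF \Sk(\As + \Bs, \va) \models \vphi \IFF (\As + \Bs, \va) \models \vphi, \]
which is exactly invariance under disjoint extensions. This is a direct computation requiring no games.

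\noindent\textbf{From (1) to (2).} This is where the main work lies, and it is essentially a rerun of the argument used to prove Proposition~\ref{invdeprop}, but targeting $\Sk$-invariance rather than a Hybrid Temporal Logic equivalent. Fix $k = 2^q$ and suppose $\vphi$ is invariant under disjoint extensions. The plan is to apply the Workspace Lemma to $\Ava$: it produces a structure $\Cs$ with $(\As + \Cs, \va) \equiv_q (\Ask + \Cs, \va)$, where $\As_k$ is the substructure induced by $A[\va;k]$ — that is, $\Sk \Ava = (\Ask, \va)$. The chain then runs
\[ \Ava \models \vphi \IMP (\As + \Cs, \va) \models \vphi \IMP (\Ask + \Cs, \va) \models \vphi \IMP \Sk \Ava \models \vphi, \]
where the first and last steps use invariance under disjoint extensions ($\IDE$) and the middle step uses the $\equiv_q$-equivalence from the Workspace Lemma together with the fact that $\vphi$ has quantifier rank $q$. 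The converse implication $\Sk \Ava \models \vphi \IMP \Ava \models \vphi$ is obtained by running the same chain backwards, so $\vphi$ is $\Sk$-invariant. Since this works uniformly whether or not $\As$ is finite (the Workspace Lemma preserves finiteness), the proof is valid in both the general and finite-model settings.

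\noindent\textbf{Main obstacle.} The only subtle point is ensuring that the constant $k = 2^q$ matches the radius the Workspace Lemma delivers for a rank-$q$ formula, and that the $\equiv_q$-equivalence it provides is genuinely strong enough to transfer satisfaction of $\vphi$. This is handled precisely because $\vphi$ has quantifier rank $q$, so $\equiv_q$-equivalent structures agree on it. Everything else is an assembly of results already in hand — $(*)$ for the (2)$\Leftrightarrow$(3) equivalence, the absorption identity $\Sk \Ava = \Sk(\As+\Bs,\va)$ for the easy direction, and the Workspace Lemma for the hard one — so no genuinely new technical device is required.
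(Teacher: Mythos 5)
Your proposal is correct and follows essentially the same route as the paper: the equivalence of (2) and (3) directly from $(*)$, the implication to (1) from the absorption identity $\Sk(\As+\Bs,\va) = \Sk(\As,\va)$, and the implication from (1) to (2) via a single application of the Workspace Lemma with the chain $\Ava \models \vphi \IFF (\As + \Cs,\va) \models \vphi \IFF (\Ask + \Cs,\va) \models \vphi \IFF \Sk \Ava \models \vphi$. No differences of substance.
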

\begin{proof}
The equivalence of (2) and (3) follows directly from (*). Since $\Sk (\As + \Bs, \va) = \Sk (\As, \va)$, (2) implies (1).
Finally, we can use the Workspace Lemma as in the proof of Proposition~\ref{invdebfprop} to show that (1) implies (2):
\[ \Ava \models \vphi \; \IFF \; (\As + \Cs,\va) \models \vphi  \; \IFF \; (\Ask + \Cs,\va) \models \vphi  \; \IFF \; \Sk \Ava \models \vphi  . \]
\end{proof}
\bibliography{hybrid}

\begin{thebibliography}{10}

\bibitem{abramsky2017pebbling}
Samson Abramsky, Anuj Dawar, and Pengming Wang.
\newblock The pebbling comonad in finite model theory.
\newblock In {\em 2017 32nd Annual ACM/IEEE Symposium on Logic in Computer
  Science (LICS)}, pages 1--12. IEEE, 2017.

\bibitem{abramsky1994games}
Samson Abramsky and Radha Jagadeesan.
\newblock Games and full completeness for multiplicative linear logic.
\newblock {\em The Journal of Symbolic Logic}, 59(2):543--574, 1994.

\bibitem{abramsky2000full}
Samson Abramsky, Radha Jagadeesan, and Pasquale Malacaria.
\newblock {Full abstraction for PCF}.
\newblock {\em Information and Computation}, 163(2):409--470, 2000.

\bibitem{abramsky2021comonadic}
Samson Abramsky and Dan Marsden.
\newblock Comonadic semantics for guarded fragments.
\newblock In {\em 2021 36th Annual ACM/IEEE Symposium on Logic in Computer
  Science (LICS)}, pages 1--13. IEEE, 2021.

\bibitem{abramsky2021arbapps}
Samson Abramsky and Luca Reggio.
\newblock Aboreal categories: applications to preservation and invariance
  theorems, 2021.
\newblock In preparation.

\bibitem{abramsky2021arboreal}
Samson Abramsky and Luca Reggio.
\newblock Arboreal categories and resources.
\newblock In {\em 48th International Colloquium on Automata, Languages, and
  Programming (ICALP 2021)}. Schloss Dagstuhl-Leibniz-Zentrum f{\"u}r
  Informatik, 2021.

\bibitem{abramsky2021relating}
Samson Abramsky and Nihil Shah.
\newblock Relating structure and power: Comonadic semantics for computational
  resources.
\newblock {\em Journal of Logic and Computation}, 31(6):1390--1428, 2021.

\bibitem{altenkirch2010monads}
Thorsten Altenkirch, James Chapman, and Tarmo Uustalu.
\newblock Monads need not be endofunctors.
\newblock In {\em International Conference on Foundations of Software Science
  and Computational Structures}, pages 297--311. Springer, 2010.

\bibitem{areces2001hybrid}
Carlos Areces, Patrick Blackburn, and Maarten Marx.
\newblock Hybrid logics: Characterization, interpolation and complexity.
\newblock {\em Journal of Symbolic Logic}, pages 977--1010, 2001.

\bibitem{barwise2017admissible}
Jon Barwise.
\newblock {\em Admissible sets and structures}, volume~7.
\newblock Cambridge University Press, 2017.

\bibitem{blackburn2000representation}
Patrick Blackburn.
\newblock Representation, reasoning, and relational structures: a hybrid logic
  manifesto.
\newblock {\em Logic Journal of the IGPL}, 8(3):339--365, 2000.

\bibitem{blackburn2002modal}
Patrick Blackburn, Maarten De~Rijke, and Yde Venema.
\newblock {\em Modal Logic}, volume~53 of {\em Cambridge Tracts in Theoretical
  Computer Science}.
\newblock Cambridge University Press, 2002.

\bibitem{blackburn2016reichenbach}
Patrick Blackburn and Klaus~Frovin J{\o}rgensen.
\newblock {Reichenbach, Prior and hybrid tense logic}.
\newblock {\em Synthese}, 193(11):3677--3689, 2016.

\bibitem{brauner2010hybrid}
Torben Bra{\"u}ner.
\newblock {\em Hybrid logic and its proof-theory}, volume~37.
\newblock Springer Science \& Business Media, 2010.

\bibitem{buss1985bounded}
Samuel~R Buss.
\newblock {\em Bounded arithmetic}.
\newblock Princeton University, 1985.

\bibitem{ebbinghaus1999finite}
Heinz-Dieter Ebbinghaus and J{\"o}rg Flum.
\newblock {\em Finite model theory}.
\newblock Springer Science \& Business Media, 1999.

\bibitem{feferman1968persistent}
Solomon Feferman.
\newblock Persistent and invariant formulas for outer extensions.
\newblock {\em Compositio Mathematica}, 20:29--52, 1968.

\bibitem{feferman1966persistent}
Solomon Feferman and Georg Kreisel.
\newblock Persistent and invariant formulas relative to theories of higher
  order.
\newblock {\em Bulletin of the American Mathematical Society}, 72(3):480--485,
  1966.

\bibitem{gaifman1982local}
Haim Gaifman.
\newblock On local and non-local properties.
\newblock In {\em Studies in Logic and the Foundations of Mathematics}, volume
  107, pages 105--135. Elsevier, 1982.

\bibitem{gradel1999modal}
Erich Gr{\"a}del.
\newblock Why are modal logics so robustly decidable?
\newblock In {\em Bulletin EATCS}. Citeseer, 1999.

\bibitem{hella1996logical}
Lauri Hella.
\newblock Logical hierarchies in {PTIME}.
\newblock {\em Information and Computation}, 129(1):1--19, 1996.

\bibitem{hyland2000full}
J~Martin~E Hyland and C-HL Ong.
\newblock {On full abstraction for PCF: I, II, and III}.
\newblock {\em Information and Computation}, 163(2):285--408, 2000.

\bibitem{kolaitis1990expressive}
Phokion~G. Kolaitis and Moshe~Y. Vardi.
\newblock {On the expressive power of Datalog: tools and a case study}.
\newblock In {\em Proceedings of the ninth ACM SIGACT-SIGMOD-SIGART symposium
  on Principles of database systems}, pages 61--71. ACM, 1990.

\bibitem{levy1965hierarchy}
Azriel L{\'e}vy.
\newblock {\em A hierarchy of formulas in set theory}.
\newblock Number~57 in Memoirs of the American Mathematical Society. American
  Mathematical Soc., 1965.

\bibitem{Libkin2004}
Leonid Libkin.
\newblock {\em {Elements of Finite Model Theory (Texts in Theoretical Computer
  Science. An EATCS Series)}}.
\newblock Springer, 2004.

\bibitem{manes2012algebraic}
Ernest~G. Manes.
\newblock {\em Algebraic Theories}, volume~26 of {\em Graduate Texts in
  Mathematics}.
\newblock Springer Science \& Business Media, 2012.

\bibitem{nevsetvril2006tree}
Jaroslav Ne{\v{s}}et{\v{r}}il and Patrice~Ossona De~Mendez.
\newblock Tree-depth, subgraph coloring and homomorphism bounds.
\newblock {\em European Journal of Combinatorics}, 27(6):1022--1041, 2006.

\bibitem{otto2004elementary}
Martin Otto.
\newblock {Elementary proof of the van Benthem-Rosen characterisation theorem}.
\newblock {\em Fachbereich Mathematik, Technische Universitat Darmstadt}, 2004.

\bibitem{rosen1997modal}
Eric Rosen.
\newblock Modal logic over finite structures.
\newblock {\em Journal of Logic, Language and Information}, 6(4):427--439,
  1997.

\bibitem{van1983modal}
Johan van Benthem.
\newblock {\em Modal logic and classical logic}.
\newblock Bibliopolis, 1983.

\bibitem{van2005guards}
Johan van Benthem.
\newblock Guards, bounds, and generalized semantics.
\newblock {\em Journal of Logic, Language and Information}, 14(3):263--279,
  2005.

\bibitem{vardi1997modal}
Moshe~Y Vardi.
\newblock Why is modal logic so robustly decidable?
\newblock Technical Report TR-97-274, Rice University, 1997.

\end{thebibliography}

\appendix

\section{Bounded Fragment Model Comparison Games}
We are not aware of previous explicit descriptions of the model comparison games for the bounded fragment. This section develops the necessary results, adapting proofs for full first order logic appearing in the literature.

\subsection{The Ehrenfeucht-\Fraisse Game}
The proof is an adaptation of the argument used in the proof of~\cite[Theorem 3.18]{Libkin2004}. 
\begin{definition}[Bounded Back and Forth Relations]
We inductively define the following relations:
\begin{itemize}
    \item $\Ava \bfe{0} \Bvb$ if both structures satisfy the same atomic sentences.
    \item $\Ava \bfe{k + 1} \Bvb$ if the following two conditions hold for every constant~$c \in \sg$ and binary relation symbol~$E$:
    \begin{description}
        \item[forth]: For every~$a \in \As$ such that~$E^\As(c^\As,a)$ there exists~$b \in \Bs$ such that~$E^\Bs(c^\Bs,b)$ and~$(\As, \va,a) \bfe{k} (\Bs, \vb, b)$.
        \item[back]: For every~$b \in \Bs$ such that~$E^\Bs(c^\Bs,b)$ there exists~$a \in \As$ such that~$E^\As(c^\As,a)$ and~$(\As, \va, a) \bfe{k} (\Bs, \vb, b)$.
    \end{description}
\end{itemize}
\end{definition}
We shall write~$\Ava \efe{k} \Bvb$ if Duplicator has a winning strategy in the~$m + k$-round bounded Ehrenfeucht-\Fraisse game between~$\As$ and~$\Bs$.
\begin{theorem}[Bounded Ehrenfeuct-\Fraisse]
For two~$\sg$-structure~$\Ava$ and~$\Bvb$ the following are equivalent:
\begin{enumerate}
    \item \label{lab:loge} $\Ava \Bequivk \Bvb$.
    \item \label{lab:efe} $\Ava \efe{k} \Bvb$.
    \item \label{lab:bfe} $\Ava \bfe{k} \Bvb$.
\end{enumerate}
\end{theorem}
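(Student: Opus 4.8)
The plan is to treat the inductive relation $\bfe{k}$ as the pivot, establishing the game equivalence $(\ref{lab:efe})\Leftrightarrow(\ref{lab:bfe})$ and the logical equivalence $(\ref{lab:loge})\Leftrightarrow(\ref{lab:bfe})$ separately, the latter being the substantive content. The game equivalence is essentially a matter of unwinding definitions. In the $m+k$-round bounded game the first $m$ rounds are forced, with both players compelled to enumerate the designated tuples $\va$ and $\vb$; after them the position is $(\As,\va)$ against $(\Bs,\vb)$ with $k$ rounds left. Each remaining round is a single forth-or-back step in which Spoiler's legal moves are exactly the elements $E$-accessible from some previously chosen element for some transition relation $E\in\sgT$, which is precisely the constraint in the recursive clause for $\bfe{k+1}$ once we read the designated tuple of the extended structure $(\As,\va,a)$ as promoting the freshly played $a$ to a constant. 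The terminal winning condition, that the induced correspondence is a partial isomorphism, is exactly agreement on atomic sentences, i.e.\ $\bfe{0}$. A routine induction on the number of remaining rounds then yields $(\ref{lab:efe})\Leftrightarrow(\ref{lab:bfe})$.

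For $(\ref{lab:bfe})\Rightarrow(\ref{lab:loge})$ I would induct on $k$. The base case holds because rank-$0$ bounded sentences are Boolean combinations of atomic sentences. For the step, assuming $\bfe{k+1}$, I run an inner induction on the structure of a bounded sentence of rank $\leq k+1$; Boolean connectives are immediate, and the one substantive case is a bounded quantifier $\exists x.\,[E(c,x)\wedge\psi]$ with $\psi$ of rank $\leq k$ and $c$ a designated element. A witness $a$ in $\As$ is matched by the forth clause to some $b$ with $E^{\Bs}(c^{\Bs},b)$ and $(\As,\va,a)\bfe{k}(\Bs,\vb,b)$; the inner hypothesis transfers $\psi$, and the back clause together with the universal quantifier are symmetric. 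It is crucial here that appending $a$ to the designated tuple makes it a constant, so that subsequent quantifiers bounded relative to elements chosen earlier in the formula are covered.

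The reverse direction $(\ref{lab:loge})\Rightarrow(\ref{lab:bfe})$ I would prove contrapositively, again by induction on $k$. If $\bfe{k+1}$ fails then, say, the forth clause fails for some designated $c$, transition relation $E$, and witness $a$ with $E^{\As}(c^{\As},a)$, so that $(\As,\va,a)\bfe{k}(\Bs,\vb,b)$ holds for no $E$-successor $b$ of $c^{\Bs}$. By the induction hypothesis each such $b$ is separated from $(\As,\va,a)$ by some bounded sentence of rank $\leq k$; since, up to logical equivalence over a finite vocabulary, there are only finitely many such sentences, their conjunction $\theta$ holds at $(\As,\va,a)$ but fails at every $E$-successor of $c^{\Bs}$. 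Replacing the fresh constant by a bound variable, $\exists x.\,[E(c,x)\wedge\theta]$ is then a bounded sentence of rank $\leq k+1$ separating $(\As,\va)$ from $(\Bs,\vb)$, so $(\ref{lab:loge})$ fails.

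The main obstacle is the bookkeeping forced by bounded quantification. Unlike the unrelativised Ehrenfeucht-\Fraisse theorem, a quantifier may be relativised to any previously chosen element and any $E\in\sgT$, not merely to the original constants; the recursive definition of $\bfe{k}$ accommodates this only through the device of promoting each played element to a constant of the extended structure. Keeping this identification consistent across the game analysis, the formula induction, and the construction of the separating sentence --- and checking that the finiteness argument genuinely delivers a bounded sentence of the intended rank --- is where the argument must be handled with care.
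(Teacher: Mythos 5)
Your proposal is correct and follows essentially the same route as the paper's proof: the game equivalence (2)$\Leftrightarrow$(3) by unwinding the $m+k$-round game against the recursive clauses of $\bfe{k}$ (with played elements promoted to constants of the extended structures), and (1)$\Leftrightarrow$(3) by induction on $k$, resting on the finiteness, up to logical equivalence over a finite vocabulary, of the rank-$k$ bounded fragment. The only difference is presentational: where the paper proves (1)$\Rightarrow$(3) directly, by transferring a rank-$k$ type formula $\alpha$ of the witness across the sentence $\exists x.\,[E(c,x)\wedge\alpha(x)]$, you argue contrapositively by conjoining finitely many separating sentences into $\exists x.\,[E(c,x)\wedge\theta(x)]$ --- the same finiteness fact used in dual form.
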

\begin{proof}
We proceeding by induction on~$k$. The base case~$k = 0$ is immediate from the definitions.

For the inductive step, we first show~\ref{lab:bfe} implies~\ref{lab:efe}. Assume~$\Ava \bfe{k + 1} \Bvb$, and that Spoiler plays~$a \in \As$ with~$E^\As(c^\As,a)$. Then by assumption there exists~$b \in \Bs$ with~$E^\Bs(c^\Bs,b)$ and~$(\As, \va, a) \bfe{k} (\Bs, \vb, b)$. Therefore by the induction hypothesis~$(\As,\va, a) \efe{k} (\Bs,\vb, b)$. Hence Duplicator can respond with~$b$ and win the remaining~$k$ rounds of the~$m + k + 1$-round game.

Secondly, we aim to show that~\ref{lab:efe} implies~\ref{lab:bfe}. Assume~$\Ava \efe{k + 1} \Bvb$, and there exists~$a \in \As$ with~$E^\As(c^\As,a)$. Take~$b$ be Duplicators response to a move of~$a$ by Spoiler. By definition~$E^\Bs(c^\Bs,b)$ and~$(\As,\va, a) \efe{k} (\Bs,\vb, b)$ and so by the induction hypothesis~$(\As,\va,a) \bfe{k} (\Bs,\vb,b)$.

So far, we have established the equivalence of~\ref{lab:efe} and~\ref{lab:bfe}. We now aim to show the equivalence of~\ref{lab:loge} and~\ref{lab:bfe}.

Firstly, to show~\ref{lab:loge} implies~\ref{lab:bfe}, assume~$\Ava$ agrees with~$\Bvb$ on bounded rank-$k + 1$ first order sentences. We must show~$\Ava \bfe{k + 1} \Bvb$. We explicitly check only the forth case, as the back case will be the same. For~$a \in \As$ with~$E^\As(c^\As,a)$, let~$\alpha$ be a (bounded) formula for the rank-$k$ type of~$a$. Then~$\Ava \models \exists x. E(c, x) \wedge \alpha(x)$, which has quantifier rank~$k + 1$. By assumption~$\Bvb \models \exists x. E(c,x) \wedge \alpha(x)$. Let~$b \in \Bs$ be the witness for the existential. The bounded rank-$k$ types of~$a$ and~$b$ are the same, therefore~$(\As, \va, a)$ and~$(\Bs, \vb, b)$ agree on quantifier rank~$k$ bounded first order sentences. By the induction hypothesis~$(\As, \va, a) \bfe{k} (\Bs, \vb, b)$.

Secondly, to show~\ref{lab:bfe} implies~\ref{lab:loge}, assume~$\Ava \bfe{k + 1} \Bvb$. Every rank-$k + 1$ bounded first order sentence is a Boolean combination of sentences of the form~$\exists x. E(c,x) \wedge \varphi$, so it suffices to restrict our attention to sentences of this form. Assume~$\Ava \models \exists x. E(c,x) \wedge \varphi(x)$. So there is an~$a$ such that~$\Ava \models \varphi(a)$ and~$E^\As(c^\As,a)$. By the forth condition, we have~$b \in \Bs$ such that~$E^\Bs(c^\Bs,b)$ and~$(\As, \va, a) \bfe{k} (\Bs, \vb, b)$. By the induction hypothesis~$(\As, \va,a)$ and~$(\Bs, \vb, b)$ agree on bounded first order sentences of depth~$k$. Hence~$\Bvb \models \varphi(b)$, and so~$\Bvb \models \exists x. E(c,x) \wedge \varphi(x)$. We can uses a symmetrical argument if we first assume~$\Bvb \models \exists x. E(c,x) \wedge \varphi(x)$, completing the proof.
\end{proof}

\subsection{The Bijection Game}
This section adapts the argument given in~\cite{Libkin2004} to the constraints of the bounded setting. We fix a bounded signature~$(\sg,\sgT)$.

\begin{definition}[The Bounded Quantifier Bijection Game]
For a $\Ava$ in $\CSmT$, and~$U \subseteq \As$, we say that $a \in \As$ is~$U$-\emph{accessible} if there exists~$E \in \sgT$ and~$t \in U$ such that~$E^\As(t,a)$. Let~$\Acc{\As}{U}$ denote the set of~$U$-accessible states in~$\As$.

The bijective Ehrenfeuct-\Fraisse game between~$\Ava$ and~$\Bvb$ in~$\CSmT$ proceeds as follows. A position of the game is a partial isomorphism between~$\Ava$ and~$\Bvb$. For round~$1 \leq i \leq m$ rounds of the game, Spoiler must play~$\va_i$ and Duplicator must respond with~$\vb_i$. In round~$i + 1 > m$ the game proceeds as follows:
\begin{enumerate}
    \item If the sets~$\Acc{\As}{\dom F_{i}}$ and~$\Acc{\Bs}{\cod F_{i}}$ have different cardinalities, Spoiler wins. Otherwise Duplicator picks a bijection:
    \begin{equation*}
        f : \Acc{\As}{\dom F_{i}} \rightarrow \Acc{\Bs}{\cod {F_i}}
    \end{equation*}
    \item Spoiler responds with $a \in \As$.
    \item We define $F_{i + 1} := F_i \cup \{ (a, f(a) \}$.
\end{enumerate}
The winning condition for Duplicator at the end of round~$i$ is that~$F_i$ is a partial isomorphism.
\end{definition}
We shall write~$\Ava \befe{k} \Bvb$ if Duplicator has a winning strategy in the~$m + k$-round bijective bounded Ehrenfeuct-\Fraisse game between~$\As$ and~$\Bs$.
\begin{definition}[Bounded Counting Logic]
First order logic with counting quantifiers is first order logic extended with quantifier:
\begin{equation*}
    \existc{i}{x} \varphi
\end{equation*}
with the informal reading ``there is at least~$i$ different elements~$x$ such that~$\varphi$ holds.

Bounded first order logic with counting quantifiers over a bounded signature~$(\sg, \sgT)$ is the fragment of first order logic with counting quantifiers where we restrict to quantification of the form:
\begin{equation*}
    \existc{i}{x} \beta \;\wedge\; \varphi
\end{equation*}
where~$\beta$ is of the form~$E(t,x)$ with~$E \in \sgT$ and~$t \neq x$.
\end{definition}

\begin{lemma}[General Bounding Formulae]
For~$k > 0$, formulae of the form:
\begin{equation*}
    \existc{i}{y} \left( \bigvee_{j \in \{ 1, \ldots, k \} }  \beta_j \right) \wedge \varphi
\end{equation*}
where~$\beta_j$ is of the form~$E(t,y)$ with~$E \in \sgT$ and~$t \neq y$,
are equivalent to formulae in the bounded fragment of the same quantifier depth.
\end{lemma}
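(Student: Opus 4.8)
The plan is to read the formula semantically as a counting assertion. The formula $\existc{i}{y}\bigl(\bigvee_{j=1}^{k}\beta_j\bigr)\wedge\varphi$ says that the set
\[
Y \;:=\; \Bigl\{\, y \;\Bigm|\; \bigl(\textstyle\bigvee_{j=1}^{k}\beta_j(y)\bigr)\wedge\varphi(y)\,\Bigr\}
\]
has at least $i$ elements. Writing $Y_j := \{\, y \mid \beta_j(y)\wedge\varphi(y)\,\}$, we have $Y = \bigcup_{j=1}^{k} Y_j$. The obstacle is that the disjunctive bound $\bigvee_j\beta_j$ is not a legal bounding atom, so one cannot count over $Y$ directly. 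The idea is to cut $Y$ into pieces each of which does carry a single admissible bounding atom, and then to recover the threshold on $|Y|$ from thresholds on the pieces.

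First I would partition $Y$ by the sign patterns of the $\beta_j$. For each nonempty $S\subseteq\{1,\ldots,k\}$ set
\[
Z_S \;:=\; \Bigl\{\, y \;\Bigm|\; \textstyle\bigwedge_{j\in S}\beta_j(y)\;\wedge\;\bigwedge_{j\notin S}\neg\beta_j(y)\;\wedge\;\varphi(y)\,\Bigr\}.
\]
These cells are pairwise disjoint and $Y=\bigsqcup_{\emptyset\neq S} Z_S$. Fixing any $j_0\in S$, membership in $Z_S$ forces $\beta_{j_0}(y)$, so $Z_S$ is bounded by the single transition atom $\beta_{j_0}=E_{j_0}(t_{j_0},y)$. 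Hence for every threshold $n\geq 0$ the statement ``$|Z_S|\geq n$'' is captured by the bounded counting formula
\[
\gamma_{S,n} \;:=\; \existc{n}{y}\; \beta_{j_0}\wedge\Bigl(\textstyle\bigwedge_{j\in S\setminus\{j_0\}}\beta_j\;\wedge\;\bigwedge_{j\notin S}\neg\beta_j\;\wedge\;\varphi\Bigr),
\]
which lies in the bounded fragment. Since every $\beta_j$ and every $\neg\beta_j$ is atomic (depth $0$), the quantifier depth of $\gamma_{S,n}$ is one more than that of $\varphi$, exactly the depth of the original formula.

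Next I would reassemble the threshold on $|Y|$ from the thresholds on the disjoint cells. Because the $Z_S$ are disjoint, their cardinalities add, so $|Y|\geq i$ holds precisely when the total $i$ can be distributed over the cells:
\[
\existc{i}{y}\,\Bigl(\textstyle\bigvee_{j}\beta_j\Bigr)\wedge\varphi
\;\;\equiv\;\;
\bigvee_{(n_S):\, \sum_{S} n_S = i}\;\bigwedge_{\emptyset\neq S}\gamma_{S,n_S},
\]
a finite disjunction, as there are only finitely many $(2^{k}-1)$-tuples $(n_S)_{\emptyset\neq S}$ of nonnegative integers summing to $i$. The left-to-right direction chooses $n_S\leq|Z_S|$ with $\sum_S n_S=i$, possible since $\sum_S|Z_S|=|Y|\geq i$; the converse is immediate from $\sum_S|Z_S|\geq\sum_S n_S=i$. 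As a Boolean combination of the $\gamma_{S,n_S}$, the right-hand side has quantifier depth $\max_S \mathrm{qd}(\gamma_{S,n_S})$, which equals the depth of the original formula. This gives the required equivalent in the bounded fragment at the same quantifier depth.

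The routine verifications are that each $\gamma_{S,n}$ really defines ``$|Z_S|\geq n$'' and that Boolean combination does not inflate quantifier depth. The one point that needs care -- and the conceptual reason counting quantifiers suffice even though they only provide lower-bound thresholds -- is the passage through the disjoint decomposition: a single threshold on a \emph{union} cannot be split across overlapping sets, but once we refine to the disjoint cells $Z_S$ the cardinalities add, turning the threshold into a finite Boolean combination of thresholds on individually bounded sets.
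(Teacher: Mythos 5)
Your proof is correct, and it rests on the same key insight as the paper's: refine the witness set into \emph{disjoint} cells, each of which carries a single admissible bounding atom, so that cardinalities add and the threshold $i$ can be distributed over the cells by a finite disjunction over compositions of $i$. Where you differ is in the decomposition and its organisation. The paper argues by induction on $k$: it peels off the last disjunct, splitting the witnesses into those satisfying $\beta_{k+1}$ and those satisfying some earlier $\beta_j$ together with $\neg\beta_{k+1}$, takes a disjunction over splits $m+n=i$ of the threshold, and lets the induction hypothesis handle the first piece (whose bounding disjunction now has only $k$ disjuncts, with $\neg\beta_{k+1}\wedge\varphi$ absorbed into the inner formula). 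Unrolled, this amounts to partitioning witnesses by the \emph{largest} index $j$ with $\beta_j$ true, giving only $k$ cells. Your argument is one-shot and non-inductive: you partition by full sign patterns into $2^k-1$ cells $Z_S$ and take a single disjunction over compositions of $i$ into $2^k-1$ nonnegative parts. Each approach has its advantages: the paper's inductive binary split keeps the construction and its verification lightweight (two cells per step, linearly many cells overall), while yours produces an explicit closed-form equivalent and makes completely transparent the conceptual point you flag at the end, namely that lower-bound thresholds on a union can only be recovered after disjoint refinement; the cost is an exponential number of cells and a correspondingly larger disjunction. Both constructions preserve quantifier depth for the same reason: the $\beta_j$ and $\neg\beta_j$ are atomic, so only the outer counting quantifier contributes. (Two cosmetic points: conjuncts $\gamma_{S,n_S}$ with $n_S=0$ are trivially true and can be dropped; and your equivalence also holds over infinite structures, since the left-to-right direction only needs $i$ chosen witnesses distributed among the cells, not full cardinal arithmetic.)
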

\begin{proof}
The point is to avoid double counting. We proceed by induction on~$k$. In the base case, $k = 1$, and the original formula is already bounded.

For the inductive step, we can use the formula:
\begin{equation*}
    \bigvee_{m + n = l} \left[ \existc{m}{y} \left( \bigvee_{j \in \{ 1, \ldots, k\} } \beta_j \right) \wedge \neg \beta_{k + 1} \wedge \varphi \right] \wedge \left[ \existc{n}{y} \beta_{k + 1} \wedge \varphi \right]
\end{equation*}
and by the induction hypothesis, this is equivalent to a Boolean combination of bounded formulae of the same depth.
\end{proof}

\begin{definition}[Exact Counting Quantifiers]
For~$k > 0$, let
\begin{equation*}
    \existx{i}{y} \left( \bigvee_{j \in \{ 1, \ldots, k \} } \beta_j \right) \wedge \varphi
\end{equation*}
be the formula:
\begin{equation*}
    \left[\existc{i}{y} \left( \bigvee_{j in \{ 1, \ldots, k \} } \beta_j \right) \wedge \varphi \right] \wedge \neg \left[\existc{i}{y} \left( \bigvee_{j in \{ 1, \ldots, k \} } \beta_j \right) \wedge \varphi \right]
\end{equation*}
with intuitive reading ``there are exactly~$i$ elements~$x$ such that...''. Note this definition preserves bounded quantifier depth in the obvious sense.
\end{definition}

\begin{definition}[Accessibility Formulae]
For an $n$-tuple of variables~$\vx$, and~$y \not\in \supp{\vx}$, define the formula:
\begin{equation*}
    \acc{\vx}{y} := \bigvee_{E \in \sgT} \left( \bigvee_{1 \leq j \leq n} E(\vx_j,y) \vee \bigvee_{1 \leq j \leq m} E(c_j,y) \right)
\end{equation*}
Intuitively~$\acc{\vx}{y}$ defines the elements~$y$ that are accessible from~$\vx$ and the constant symbols in one step.
\end{definition}

\begin{definition}
\begin{enumerate}
    \item Note that as our signature is finite, there can be only finitely many atomic $m$-types, and each type can be described by a single quantifier free characteristic formula.
    Let~$\Scott{0}{m}{i}(\vx)$ be an enumeration of formulae in~$m$ variables defining the atomic type of~$\vx$. 
    \item Let~$\Scott{k + 1}{m}{i}(\vx)$ be an enumeration of the formulae in~$m$ variables of two forms:
    \begin{enumerate}
    \item For constants $p$, $l_p$ and~$i_p$ range over~$\mathbb{N}$, with~$p$ strictly positive:
    \begin{equation}
    \label{eq:Scott-inductive}
        \left( \bigwedge_{n = 1}^{p} \existx{l_n}{y} \acc{\vx}{y} \wedge \Scott{k}{m + 1}{i_n}(\vx,y) \right) \;\wedge\; \left( \forall y. \acc{\vx}{y} \rightarrow \bigvee_{n = 1}^{p} \Scott{k}{m + 1}{i_n}(\vx,y) \right)
    \end{equation}
    Intuitively, the first conjunct encodes the exact counts of accessible elements of certain~rank-$k$ $m + 1$-types, and the second conjunct ensures only elements of these types are accessible.
    \item For~$i$ ranging over~$\mathbb{N}$:
    \begin{equation}
    \label{eq:Scott-stuck}
        \left (\neg \exists y. \acc{\vx}{y} \right) \;\wedge\; \Scott{0}{m}{i}(\vx)
    \end{equation}
    Intuitively this formula describes the case where~$\vx$ cannot reach any states. The second conjunction pins down the atomic type of~$\vx$.
    \end{enumerate}
    The $\Scott{k}{m}{i}(\vx)$ are equivalent to a bounded formulae in counting logic by the previous lemmas.  
    Clearly, every tuple of elements in a finite structure can satisfy exactly one such condition.
\end{enumerate}

\end{definition}

\begin{definition}[Accessible Tuples and Elements]
For $\As$ in~$\CSmT$, we say that a tuple~$\vs$ is \emph{accessible} if for all~$i$:
\begin{equation*}
    \vs_i \in \Acc{\As}{\{ c_l^\As \mid 1 \leq l \leq m \} \cup \{ \vs_j \mid 1 \leq j < i \}}
\end{equation*}
Intuitively each element of the tuple is accessible from an ``earlier position''.

Given a structure~$\As$ and accessible tuple~$\vs$, we say that an element~$a$ is \emph{accessible} if the tuple~$\vs a$ is.
\end{definition}
The following technical lemma shows how satisfying a formula~$\Scott{k}{m}{i}$ can be unpacked to yield a strategy for playing the bijection game.
\begin{lemma}
\label{lem:key-step}
Let~$\Ava$ and~$\Bvb$ be finite structures in~$\CSmT$, $k,l \geq 0$, $\vs \in \As^l$ and~$\vt \in \Bs^l$. If~$\vs$ and~$\vt$ are accessible
tuples, and there exists~$\Scott{k}{l}{i}$ such that:
\begin{equation*}
    \Ava \models \Scott{k}{l}{i}(\vs) \;\mbox{ and }\; \Bvb \models \Scott{k}{l}{i}(\vt)
\end{equation*}
then the relation
\begin{equation*}
    \{ (c_j^\As, c_j^\Bs) \mid 1 \leq j \leq m \} \cup \{ (\vs_i,\vt_i) \mid 1 \leq i \leq l \}
\end{equation*}
is a winning position for Duplicator, and Duplicator has a winning strategy from that position for~$k$ additional rounds of the game.
\end{lemma}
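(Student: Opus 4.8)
The plan is to argue by induction on $k$, taking the statement to be universally quantified over $l$, the accessible tuples $\vs,\vt$, and the index $i$, so that the inductive hypothesis can be reapplied at arity $l+1$ with freshly extended tuples. Throughout I write $F$ for the position $\{(c_j^\As,c_j^\Bs)\mid 1\le j\le m\}\cup\{(\vs_i,\vt_i)\mid 1\le i\le l\}$ determined by $\vs$ and $\vt$, and I use the key identification $\{a\in\As\mid\Ava\models\acc{\vs}{a}\}=\Acc{\As}{\dom F}$ (and symmetrically in $\Bs$), which holds because $\acc{\vx}{y}$ expresses exactly one-step accessibility from the components of $\vx$ together with the constants. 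For the base case $k=0$, the formula $\Scott{0}{l}{i}$ pins down the atomic type of $\vs$ relative to the constants, so $\Ava\models\Scott{0}{l}{i}(\vs)$ and $\Bvb\models\Scott{0}{l}{i}(\vt)$ force the tuples $(c_1^\As,\dots,c_m^\As,\vs_1,\dots,\vs_l)$ and $(c_1^\Bs,\dots,c_m^\Bs,\vt_1,\dots,\vt_l)$ to realise the same atomic type; the equality and relation atoms of that type are precisely what is needed to see that $F$ is a well-defined partial isomorphism, and no further rounds are required.

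For the inductive step I distinguish the two syntactic forms of $\Scott{k+1}{l}{i}$. If it has the stuck form (\ref{eq:Scott-stuck}), the conjunct $\neg\exists y.\,\acc{\vs}{y}$ gives $\Acc{\As}{\dom F}=\emptyset$ and, by the same reasoning in $\Bs$, $\Acc{\Bs}{\cod F}=\emptyset$; the two accessible sets thus have equal (zero) cardinality, Spoiler has no legal move in any remaining round, and Duplicator wins vacuously, while the conjunct $\Scott{0}{l}{i'}$ ensures via the base-case argument that $F$ is already a partial isomorphism.

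The substantive case is the counting form (\ref{eq:Scott-inductive}). Here I first extract the combinatorial content: the universal conjunct $\forall y.\,\acc{\vs}{y}\to\bigvee_n\Scott{k}{l+1}{i_n}(\vs,y)$ forces every element of $\Acc{\As}{\dom F}$ to realise one of the rank-$k$ types $\Scott{k}{l+1}{i_n}$, and since distinct Scott formulae of the same rank and arity are mutually exclusive, the exact-count conjuncts partition $\Acc{\As}{\dom F}$ into classes of sizes $l_1,\dots,l_p$ indexed by the types $i_n$. As $\vt$ satisfies the very same formula, $\Acc{\Bs}{\cod F}$ is partitioned in the same way, so the cardinalities agree and Duplicator may choose any bijection $f\colon\Acc{\As}{\dom F}\to\Acc{\Bs}{\cod F}$ that sends each type class onto the corresponding one. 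When Spoiler plays $a\in\Acc{\As}{\dom F}$, of type $i_n$ say, then $b=f(a)$ has the same type, so $\Ava\models\Scott{k}{l+1}{i_n}(\vs,a)$ and $\Bvb\models\Scott{k}{l+1}{i_n}(\vt,b)$; moreover $\vs a$ and $\vt b$ remain accessible tuples, since $a$ and $b$ are accessible from the constants and earlier components. Applying the induction hypothesis at rank $k$ and arity $l+1$ shows that $F\cup\{(a,b)\}$ is a winning position and that Duplicator has a winning strategy for the remaining $k$ rounds; prepending the move ``offer $f$, then answer $a$ by $f(a)$'' yields the desired winning strategy for $k+1$ rounds.

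I expect the main obstacle to be the bookkeeping in the counting case: making fully precise that the exact-count conjuncts together with the universal ``only these types'' conjunct induce equinumerous, type-matched partitions of the two accessible sets, so that a single type-respecting bijection exists. This is the only point where the exact semantics of the exact-counting quantifier and the mutual exclusivity of same-rank Scott formulae are genuinely needed, and it is also where the bounded restriction on the bijection game enters, through the identification of the satisfaction set of $\acc{\vs}{\cdot}$ with $\Acc{\As}{\dom F}$.
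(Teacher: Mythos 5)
Your proposal is correct and follows essentially the same route as the paper's own proof: induction on $k$, a case split on the two syntactic forms of $\Scott{k+1}{l}{i}$, the observation that the exact-count and universal conjuncts induce equinumerous type-respecting partitions of the accessible sets, and a type-preserving bijection fed into the induction hypothesis at arity $l+1$. If anything, you spell out two points the paper leaves implicit — that the extended tuples $\vs a$ and $\vt b$ remain accessible, and that the full relation $F$ (not just its constants part) is a partial isomorphism in the base case — so no gap to report.
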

\begin{proof}
We proceed by induction on~$k$. For the base case~$k = 0$, we note that
\begin{equation*}
    \{ (c_j^\As, c_j^\Bs) \mid 1 \leq j \leq m \}    
\end{equation*}
must be a bijection as both sides satisfy the same equalities between constants. It is a partial isomorphism as both sides agree on all atomic formulae.

For the inductive step, assume~$\Ava \models \Scott{k + 1}{l}{i}(\vs)$ and~$\Bvb \models \Scott{k + 1}{l}{i}(\vt)$. 
If $\Scott{k + 1}{l}{i}$ is of the form~\eqref{eq:Scott-stuck}, then the map~$\vs_j \mapsto \vt_j$ is a partial isomorphism, and Duplicator trivially wins the rest of the game as there are no available moves for Spoiler.

Otherwise, $\Scott{k + 1}{l}{i}$ is of the form~\eqref{eq:Scott-inductive}. 
Therefore there is a set of~$n$ indices~$J := \{ i_1,\ldots,i_n \}$ such that every accessible element~$a$ there is a unique~$j \in J$ such that $\Scott{k}{l + 1}{j}(\va,a)$. A similar argument holds for~$\Bs$. Furthermore, for~$j \in J$ the finite sets:
\begin{equation*}
    \{ a \mid a \mbox{ accessible}, \Ava \models \Scott{k}{l + 1}{j}(\vs,a) \} \;\mbox{ and }\; \{ b \mid b \mbox{ accessible}, \Bvb \models \Scott{k}{l + 1}{j}(\vt,b) \}
\end{equation*}
have the same cardinality. We can therefore choose a bijection~$f$ between the sets of accessible elements in the two structures, respecting the partitioning into~$n$ classes. If~$a$ is in the~$j^{th}$ class,
\begin{equation*}
    \Ava \models \Scott{k}{l + 1}{j}(\vs,a) \;\mbox{ and }\; \Bvb \models \Scott{k}{l + 1}{j}(\vt,f(a))
\end{equation*}
and so by the induction hypothesis, Duplicator has a winning strategy for~$k$ additional rounds of the game.
\end{proof}

\begin{theorem}
For finite~$\As$ and~$\Bs$ in~$\CSmT$ the following are equivalent:
\begin{enumerate}
    \item \label{en:counting-equiv} $\Ava$ and~$\Bvb$ agree on bounded first order sentences with counting quantifiers of depth at most~$k$.
    \item \label{en:bij-game-equiv} $\Ava \befe{k} \Bvb$.
\end{enumerate}
\end{theorem}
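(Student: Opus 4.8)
The plan is to prove the two implications separately, using the Scott formulas $\Scott{k}{m}{i}$ together with Lemma~\ref{lem:key-step} for the direction \ref{en:counting-equiv}~$\Rightarrow$~\ref{en:bij-game-equiv}, and a direct induction over the number of remaining rounds for the converse.

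For \ref{en:counting-equiv}~$\Rightarrow$~\ref{en:bij-game-equiv}, I would specialise to the empty tuple, i.e.\ the case $l = 0$, so that each $\Scott{k}{0}{i}$ is a \emph{sentence} over the constants. Since $\As$ is finite, $\Ava$ satisfies exactly one such Scott sentence $\Scott{k}{0}{i}$, and by the preceding lemmas this sentence is equivalent to one in the bounded counting fragment of depth $\leq k$. Hypothesis~\ref{en:counting-equiv} then forces $\Bvb \models \Scott{k}{0}{i}$ as well. Applying Lemma~\ref{lem:key-step} with $l = 0$ (the empty tuples being trivially accessible) shows that $\{(c_j^\As, c_j^\Bs) \mid 1 \leq j \leq m\}$ is a winning position from which Duplicator wins $k$ further rounds. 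As the first $m$ rounds of the bounded bijection game are forced to play the constants, this is precisely a winning strategy for the $m + k$-round game, i.e.\ $\Ava \befe{k} \Bvb$.

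For \ref{en:bij-game-equiv}~$\Rightarrow$~\ref{en:counting-equiv}, I would prove the more general statement by induction on $j \leq k$: whenever $\vs \in \As^l$ and $\vt \in \Bs^l$ are accessible tuples such that Duplicator wins the $j$-round bounded bijection game from the position $F_j$ determined by $\vs \mapsto \vt$ (together with the constants), then $(\As, \va, \vs)$ and $(\Bs, \vb, \vt)$ agree on all bounded counting formulas of depth $\leq j$. The base case $j = 0$ is immediate, since a winning position is a partial isomorphism and so the two tuples satisfy the same atomic formulas. For the inductive step it suffices, by closure under Boolean combinations, to treat a single counting quantifier $\existc{i}{y} \acc{\vs}{y} \wedge \varphi(y)$ with $\varphi$ of depth $\leq j$, the General Bounding Formulae Lemma guaranteeing that restricting to this accessibility-bounded shape costs nothing at the given depth. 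In round $j+1$ Duplicator's strategy supplies a bijection $f \colon \Acc{\As}{\dom F_j} \to \Acc{\Bs}{\cod F_j}$ each of whose extensions $(\vs\,a, \vt\,f(a))$ remains winning for $j$ further rounds. By the induction hypothesis $(\As, \vs, a)$ and $(\Bs, \vt, f(a))$ agree on all depth-$\leq j$ formulas, so $a$ satisfies $\varphi$ exactly when $f(a)$ does; as $f$ is a bijection between the accessible sets, the number of accessible witnesses of $\varphi$ coincides on the two sides, yielding agreement (in both directions) on the counting formula. Taking $l = 0$ and $j = k$ gives~\ref{en:counting-equiv}.

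The main obstacle I anticipate is the bookkeeping in this inductive step, specifically reconciling the syntactic bounding condition $E(t,y)$ in the definition of the fragment with the accessibility sets $\Acc{\As}{\dom F_j}$ driving the bijection game. The General Bounding Formulae Lemma is exactly what bridges this gap, letting me trade the disjunctive accessibility bound $\acc{\vs}{y}$ for ordinary single-relation bounds at the same quantifier depth. Care is then needed to ensure the induction hypothesis is applied only to positions whose extended tuples remain accessible, so that the cardinalities being compared are genuinely those counted by the bounded quantifier, and that the bijection game's ``Spoiler wins on a cardinality mismatch'' clause is matched against the exact-count conjuncts appearing in the Scott formulas.
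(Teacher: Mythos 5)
Your proposal is correct and follows essentially the same route as the paper: the direction from counting-equivalence to the game uses the unique Scott sentence $\Scott{k}{0}{i}$ together with Lemma~\ref{lem:key-step}, and the converse is the same round-by-round induction in which Duplicator's bijection matches witnesses of depth-$\leq j$ formulas so that exact counts agree. Your explicit formulation via accessible tuples with free variables (rather than the paper's expansion by new constants) and your folding of single-relation bounds $E(t,y)$ into the accessibility disjunction $\acc{\vs}{y}$ are only presentational variants of the paper's argument.
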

\begin{proof}
We first prove~\ref{en:bij-game-equiv} implies~\ref{en:counting-equiv} by induction. The base case is immediate from the definitions. For the inductive step, assume~\ref{en:bij-game-equiv} implies~\ref{en:counting-equiv} at~$k$, and aim to prove it holds for~$k + 1$. First consider a sentence of the form:
\begin{equation*}
    \existc{n}{x} \beta(x) \;\wedge\; \varphi(x)
\end{equation*}
where~$\beta$ is of the form~$E(c,x)$ for~$E \in \sgT$ and~$c$ a constant symbol. Let~$\vs$ be an~$n$-element tuple of distinct witnesses such that:
\begin{equation*}
    \Ava \models \beta(\vs_i) \;\mbox{ and }\; \Ava \models \varphi(\vs_i)
\end{equation*}
Therefore each~$\vs_i$ is in~$\Acc{\As}{\{ c \mid c \mbox{ a constant in } \sg \}}$, and so
as Duplicator has a winning strategies, there is a bijection~$f$ such that~$(\As,\va, a) \befe{k} (\Bs, \vb, f(a))$. Therefore by the induction hypothesis, $(\As,\va,a)$ and~$(\Bs,\vb, f(a))$ agree on bounded sentences of rank~$k$. Therefore:
\begin{equation*}
    \Bvb \models \beta(f(\vs_i)) \;\mbox{ and }\; \Bvb \models \varphi(f(\vs_i))
\end{equation*}
As~$f$ is a bijection, we therefore have:
\begin{equation*}
    \Bvb \models \existc{n}{x} \beta \;\wedge\; \varphi
\end{equation*}
The converse is proved similarly, using the inverse of the bijection from the winning strategy. Every bounded depth~$k + 1$ sentence is equivalent to a Boolean combination of such sentences, completing this direction of the proof.

For the other direction, we note that each structure satisfies a unique sentence~$\Scott{k}{0}{i}$. If~$\Ava$ and~$\Bvb$ agree on bounded first order sentences with counting quantifiers of depth at most~$k$ then there exists~$i$ such that:
\begin{equation*}
    \Ava \models \Scott{k}{0}{i} \;\mbox{ and }\; \Bvb \models \Scott{k}{0}{i}
\end{equation*}
Lemma~\ref{lem:key-step} then completes the proof.
\end{proof}

\end{document}